\newcommand{\grad}{\nabla}
\newcommand{\lap}{\triangle}
\newcommand{\dde}[2]{{\dfrac {\partial {#1}}{\partial {#2}}}}
\newcommand{\R}{\mathbb R}
\newcommand{\N}{\mathbb N}
\newcommand{\C}{\mathbb C}
\newcommand{\e}{\mathrm{e}}                
\newcommand{\de}{\mathop{}\!\mathrm{d}}  
\newcommand{\pa}{\mathop{}\!\partial} 
\newcommand{\opde}[1]{{\dfrac {\mathrm{d} \phantom{#1}}{\mathrm{d} {#1}}}}
\newcommand{\eps}{\varepsilon}
\newcommand{\fcr}{\mathcal X}
\newcommand{\Cal}[1]{{\mathcal {#1}}}
\newcommand{\ms}[1]{{\mathscr {#1}}}
\newcommand{\mf}[1]{{ {#1}}}
\newtheorem{theorem}{Theorem}[section]
\newtheorem{proposition}[theorem]{Proposition}
\newtheorem{lemma}[theorem]{Lemma}
\newtheorem{corollary}[theorem]{Corollary}
\theoremstyle{definition}
\newtheorem{remark}[theorem]{Remark}
\numberwithin{equation}{section}
\numberwithin{theorem}{section}
\begin{document}

\pagestyle{plain}

\title[Microcanonical phase transitions]
{Microcanonical phase transitions for the vortex system}

\author[D.\ Benedetto]{Dario Benedetto
      }
\address{Dario Benedetto \hfill\break \indent
        Dipartimento di Matematica, Universit\`a di Roma `La Sapienza'
        \hfill\break \indent
        P.le Aldo Moro 2, 00185 Roma, Italy,
         \hfill\break\indent
        INdAM - Istituto Nazionale di alta Matematica GNFM, Roma, Italy
      }
      \email{benedetto@mat.uniroma1.it}

 \author[E.\ Caglioti]{Emanuele Caglioti}
 \address{Emanuele Caglioti \hfill\break \indent
   Dipartimento di Matematica, Universit\`a di Roma `La Sapienza'
   \hfill\break \indent
   P.le Aldo Moro 2, 00185 Roma, Italy,
   \\
    \hfill\break\indent
   INdAM - Istituto Nazionale di alta Matematica GNFM, Roma, Italy}
 \email{caglioti@mat.uniroma1.it}

 \author[M.\ Nolasco]{Margherita Nolasco}
 \address{Margherita Nolasco \hfill\break \indent
   Dipartimento di Ingegneria e Scienze dell’informazione e Matematica,
    \hfill\break \indent
   Universit\`a dell’Aquila
   Via Vetoio, Loc. Coppito, 67100
   L'Aquila,  Italy,
 \hfill\break \indent
INdAM - Istituto Nazionale di alta Matematica GNAMPA, Roma, Italy}
\email{nolasco@univaq.it}

\begin{abstract}
  
  We consider the Microcanonical Variational Principle for the vortex
  system in a bounded domain. In particular
  we are interested in the thermodynamic
  properties of the system in domains of second kind,
  i.e. for which the equivalence of ensembles does not hold.
    
  For connected domains close to the union of disconnected disks
  (dumbbell domains), 
  we show that the
  system may exhibit an arbitrary number of fist-order phase transitions,
  while the entropy 
  is convex for large energy.

\end{abstract}

\keywords{mean-field equation, microcanonical ensemble,
  vortex system, phase transitions}
\subjclass[2020]{
  35Q35  
  35Q82  
  82B26  
  82M30  
}


\maketitle
\pagestyle{headings}

\section{Introduction}
Starting form the pioneering paper by Onsager \cite{O}, statistical
mechanics of point vortices has been widely studied, in particular in
the mean field limit \cite{CLMP1,CLMP2,K,KL,LP,MJ}.  In the case of a
fluid confined in a bounded simply connected set $\Lambda$ in $\R^2$,
the structure of the mean field variational principle and of the
related mean field equation presents many interesting features.  First
of all, as suggested by Onsager, negative temperatures are allowed
and, according to the Microcanonical Variational Principle (MVP) the
entropy  $\ms {S}$ decreases in the energy $E$, if $E$ is 
sufficiently large, and the vorticity density
converges to a delta function as $E\rightarrow \infty.$

Moreover, depending on the shape of the domain, the equivalence of
ensemble can be broken. If the domain is a disk or a
domain close to a disk then the entropy $\ms S(E)$ is a concave function of
the energy and the canonical variational principle and the
microcanonical variational principle are equivalent.  Conversely, for
sufficiently long and thin domains, the two ensemble are not anymore
equivalent: the entropy is not a concave function of the energy and
the mean field equation associated to the canonical variational principle (CVP)
has not a unique solution for some values
of the inverse temperature $\beta<-8\pi$ \cite{CLMP2}.
These two kinds of domains are called
respectively domains of first and second type.

Recently the behavior of the MVP  in domain of second kind
has been
carefully analyzed \cite{BDM,B},  and 
 the natural question if the
entropy is definitively convex for large
values of the energy has been posed.
In \cite{BJLY} 
it has been proved that if  a second kind domain is convex
then  the  branch of solutions of the MVP is regular and
the entropy is definitely convex.

In the present paper we construct non convex connected second kind
domains for which the entropy is definitely convex.
Moreover the branch is not regular and 
the entropy exhibits first order phase transitions, i.e.  jumps
of the derivative w.r.t. the energy.

To construct these solutions, we first solve the MVP for disconnected domains.
In particular we consider domains union of disks or of slightly
deformed disks. In the case of more than three disks,
the entropy has a first order phase transition at low energy.
Considering $N$  suitably deformed disks we also construct
disconnected domains which exhibit $N$ first order phase
transitions at  high energies.
Then, connecting the (deformed) disks with thin channels,
we obtain connected domains (dumbbell domains), for which 
the branches of solutions and the phase transitions of the entropy
are preserved.

\vskip.3cm

Now we give some
definitions and recall some results.
Let $\Lambda\subset \R^2$ be a bounded open  set with smooth boundary.
For a distribution probability with density $\rho\in L^1(\Lambda)$, 
the entropy and the energy functionals are defined respectively as
\begin{equation}
  \label{eq:S}
  \Cal S(\rho) = -\int_{\Lambda} \rho \ln \rho
\end{equation}
\begin{equation}
  \label{eq:E}
  \Cal E(\rho) = \frac 12 \int_{\Lambda} \rho \Psi =
  \frac 12 \int |\grad \Psi|^2,
\end{equation}
where the stream function  $\Psi$  solves
\begin{equation}
  \label{eq:psi}
  -\lap \Psi = \rho, \ \ \text{ with } \ \left.
    \Psi\right|_{\pa \Lambda} = 0.
\end{equation}
Denoting with $G(x,y)$
the Green function of the Poisson problem in $\Lambda$
with homogeneous Dirichlet boundary conditions,
\begin{equation}
  \label{eq:green1}
  \Psi(x)  = \int_{\Lambda} G(x,y) \rho(y) \de y.
\end{equation}
We write
\begin{equation}
  \label{eq:green}
  G(x, y ) = - \frac 12 \ln |x - y| + \gamma(x, y)
\end{equation}
where $\gamma$ is the regular part of $G$.

We also define the free-energy functional
\begin{equation}
  \label{eq:F}
\Cal F(\beta, \rho) =   \Cal E(\rho)-\frac{1}{\beta}\Cal S(\rho)
\end{equation}
where $\beta\in \R$ is the inverse of the temperature. Note that in 
the case of the statistical mechanics of vortices, $\beta$
can be also negative.
\vskip.3cm


The MVP - Microcanonical Variational Principle
is the problem of finding the distribution $\rho$
which maximizes $\Cal S(\rho)$ among the probability distribution with fixed
energy $E>0$:
\begin{equation}
  \label{eq:mvp}
  \begin{aligned}
    &\ms {S}(E) = \sup_{\rho\in P_E} \Cal S(\rho),\\
    &\text{where} \
  P_E = \left\{ \rho \in L^1(\Lambda)| \, \rho \ge 0, \ \int_\Lambda
    \rho = 1, \
    \Cal E(\rho) = E\right\}.
  \end{aligned}
\end{equation}
When needed, we specify
with $\Cal S_\Lambda$, $\ms {S}_\Lambda$, $P_{\Lambda,E}$
etc.  the dependence on the set  $\Lambda$.

We indicate with $\rho_m = 1/|\Lambda|$
the uniform probability distribution,
with  energy $E_m$.
We recall the following results from  \cite{CLMP2}.

\begin{theorem}[Microcanonical Variational Principle]
  \label{propo:mvp}
  Let $\Lambda$ be a bounded open connected set.
  For any $E>0$, $-\infty < \ms {S}(E) \le \ln |\Lambda|$
  and there exists $\rho \in P_E$ such
  that $\ms {S}(E) = \Cal S(\rho)$. Moreover $\rho>0$ and there exists
  an unique $\beta\in \R$ such that the stream function
  satisfies the mean-field equation (MFE) 
  \begin{equation}
    \label{eq:mfe}
    -\lap \Psi = \rho = \frac 1{Z} \e^{-\beta \Psi},\ \
    Z = \int_\Lambda \e^{-\beta \Psi},
  \end{equation}
  with homogeneous Dirichlet boundary conditions.
    
  The maximum of  $\ms {S}(E)$ is  reached  at  $\ms S_m
  = \Cal S(\rho_m) = \ln |\Lambda|$.
  $\ms {S}(E)$ is  
  continuous, 
  strictly increasing for $E< E_m$,
  strictly decreasing for $E>E_m$
  and $\ms {S}(E)\to -\infty$  for $E\to 0$ and $E\to +\infty$.
\end{theorem}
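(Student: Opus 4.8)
The plan is to run the direct method of the calculus of variations, read off the mean--field equation from the Euler--Lagrange conditions, and then extract the qualitative properties from the interpretation of the Lagrange multiplier as the slope $\de\ms S/\de E$. For the a priori bound, Jensen's inequality applied to the strictly convex function $t\mapsto t\ln t$ gives, for every $\rho\in P_E$, that $\Cal S(\rho)=-\int_\Lambda\rho\ln\rho\le\ln|\Lambda|$, with equality only at $\rho=\rho_m$; hence $\ms S(E)\le\ln|\Lambda|$. That $P_E\neq\emptyset$ for every $E>0$ follows because $\Cal E(\rho)=\frac12\iint G(x,y)\rho(x)\rho(y)\de x\de y$ is a positive quadratic form whose range over probability densities of finite entropy is all of $(0,\infty)$: concentrating mass near an interior point sends $\Cal E\to\infty$, while spreading it in a thin layer near $\pa\Lambda$ (where $G$ vanishes) sends $\Cal E\to0$, both with finite entropy, and one interpolates by continuity.

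For existence, I would take a maximizing sequence $\rho_n\in P_E$ with $\Cal S(\rho_n)\to\ms S(E)$. Since $\int_\Lambda\rho_n\ln\rho_n=-\Cal S(\rho_n)$ is bounded and $t\ln t$ is superlinear, the de la Vall\'ee--Poussin criterion yields equi-integrability, so by Dunford--Pettis a subsequence satisfies $\rho_n\debole\rho$ in $L^1(\Lambda)$ with $\rho\ge0$, $\int_\Lambda\rho=1$. As $-\Cal S$ is convex it is weakly lower semicontinuous, whence $\Cal S(\rho)\ge\limsup_n\Cal S(\rho_n)=\ms S(E)$; once $\rho\in P_E$ is known this forces equality. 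The crux is that the limit carries exactly the energy $E$: convexity gives only $\Cal E(\rho)\le\liminf_n\Cal E(\rho_n)=E$, and to exclude energy loss I would invoke the complete continuity of the Green operator $K:\rho\mapsto\Psi$. The $L\ln L$ bound on $\{\rho_n\}$ produces, via the borderline elliptic estimate, a bound on $\Psi_n=K\rho_n$ with $\Psi_n\to\Psi$ strongly, so that $\Cal E(\rho_n)=\frac12\int_\Lambda\rho_n\Psi_n\to\frac12\int_\Lambda\rho\Psi=\Cal E(\rho)$ and $\Cal E(\rho)=E$. I expect this passage --- reconciling the weak $L^1$ convergence of $\rho_n$ with the convergence of the stream functions so that the quadratic energy passes to the limit --- to be the main technical obstacle.

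To obtain the mean--field equation I would write the Euler--Lagrange conditions for the constrained problem with multipliers $\lambda$ (for $\int_\Lambda\rho=1$) and $\beta$ (for $\Cal E=E$). Since the first variation of $\Cal S$ is $-\ln\rho-1$, which diverges to $+\infty$ as $\rho\to0^{+}$, the maximizer cannot vanish on a set of positive measure (moving mass into such a set would strictly raise $\Cal S$); hence $\rho>0$ a.e., the constraint $\rho\ge0$ is inactive, and the energy constraint is nondegenerate because its differential $\Psi$ is nonconstant ($\Psi|_{\pa\Lambda}=0$, $\Psi>0$ inside). Stationarity then reads $-\ln\rho-1-\lambda-\beta\Psi=0$, i.e. $\rho=\frac1Z\e^{-\beta\Psi}$ with $Z=\int_\Lambda\e^{-\beta\Psi}$, which together with $-\lap\Psi=\rho$ is exactly the MFE. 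Uniqueness of $\beta$ follows by taking logarithms: if two values gave the same $\rho$, then $(\beta_1-\beta_2)\Psi$ would be constant, impossible for the nonconstant $\Psi$.

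Finally, for the qualitative statements: since $\ms S(E)\le\ln|\Lambda|=\Cal S(\rho_m)$ and $\rho_m\in P_{E_m}$, the global maximum $\ms S_m=\ln|\Lambda|$ is attained at $E_m$. The envelope relation identifies the multiplier with the slope, $\de\ms S/\de E=\beta$; as $\beta=0$ forces $\rho=\rho_m$ and hence $E=E_m$, the continuous function $\beta(E)$ keeps a constant sign on each of $(0,E_m)$ and $(E_m,\infty)$ --- positive on the former, negative on the latter, its sign near $E_m$ fixed by $\ms S$ attaining its maximum there --- which gives strict monotonicity. Continuity of $\ms S$ I would obtain by comparing maximizers at nearby energies, rescaling the maximizer at $E$ to adjust the energy slightly while varying the entropy continuously. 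The limit $\ms S(E)\to-\infty$ as $E\to\infty$ comes from a Moser--Trudinger inequality, which bounds $\Cal S$ above by an eventually decreasing function of $E$, whereas $\ms S(E)\to-\infty$ as $E\to0$ follows because a lower bound on $\Cal S$ precludes the concentration near $\pa\Lambda$ that small energy requires, i.e. $\Cal S$ bounded below forces $\Cal E$ bounded below.
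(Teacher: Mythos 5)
Your overall strategy --- direct method with Dunford--Pettis compactness from the $L\ln L$ bound, recovery of the energy constraint in the limit via compactness of the Green operator, Lagrange multipliers yielding the MFE, Moser--Trudinger for $E\to\infty$ and the concentration argument for $E\to 0$ --- is essentially the proof of \cite{CLMP2}, which the paper cites rather than reproves; in particular your ``crux'' step (energy passes to the limit along weakly convergent maximizing sequences with bounded entropy) is exactly the paper's Lemma \ref{lemma:e}, i.e.\ Proposition~2.1 of \cite{CLMP2}. Those parts are sound at sketch level, modulo routine care in the positivity step: moving mass onto a null set of $\rho$ must be done while preserving the energy constraint, e.g.\ by a two-parameter perturbation.

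The genuine gap is the monotonicity argument. You deduce strict monotonicity from the relation $\de\ms S/\de E=\beta$ together with ``the continuous function $\beta(E)$'', but neither differentiability of $\ms S$ nor continuity of $E\mapsto\beta(E)$ is available at this stage --- and continuity is in fact \emph{false} in general: the main point of this very paper is that even for connected $\Lambda$ (the dumbbell domains of Theorem \ref{thm:transizione}) $\pa_E\ms S$ has jumps, so $\beta(E)$ need not be continuous; moreover, a sign of the derivative where it exists does not give strict monotonicity without absolute continuity of $\ms S$, which is also not known here. The argument is thus circular where it is not wrong. The standard repair, which needs no smoothness of $\ms S$ at all, is direct comparison: let $\rho_E$ maximize at energy $E>E_m$ and set $\rho_t=(1-t)\rho_E+t\rho_m$. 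By concavity of $\Cal S$ and the strict Jensen inequality ($\Cal S(\rho_E)<\ln|\Lambda|=\Cal S(\rho_m)$ since $\rho_E\neq\rho_m$), one has $\Cal S(\rho_t)>\Cal S(\rho_E)$ for all $t\in(0,1]$, while $t\mapsto\Cal E(\rho_t)$ is a polynomial, hence continuous, with $\Cal E(\rho_0)=E$ and $\Cal E(\rho_1)=E_m$. By the intermediate value theorem every $E'\in[E_m,E)$ equals $\Cal E(\rho_t)$ for some $t>0$, whence $\ms S(E')\ge\Cal S(\rho_t)>\Cal S(\rho_E)=\ms S(E)$: strict decrease on $(E_m,\infty)$. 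The symmetric argument gives strict increase on $(0,E_m)$, and the same interpolation supplies the lower semicontinuity needed to firm up your (otherwise thin) continuity claim, upper semicontinuity being a consequence of the compactness step you already have.
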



Notice that the MFE \eqref{eq:mfe} is also the Euler equation associated
to the CVP 
- Canonical Variational Principle, introduced in \cite{CLMP1, K}, that is
the extremal problems for  the free-energy functional $ \Cal F(\beta,\rho) $:
\begin{equation}
  \label{eq:cvp}
  \begin{aligned}
    &\ms {F}(\beta) =     \begin{cases}   \inf_{\rho\in P} \Cal F(\beta,\rho) & \text{if} \, \, \beta >0  \\  \sup_{\rho\in P} \Cal F(\beta,\rho)  &  \text{if} \, \, \beta < 0   \end{cases}  \qquad \\
    &\text{where} \ 
  P= \left\{ \rho \in L^1(\Lambda)| \, \rho \ge 0, \
    \int_\Lambda\rho = 1\right\}
  \end{aligned}
\end{equation}
Therefore we have two different variational principles, CVP and MVP,
which have the same Euler equation, but clearly this fact is
not sufficient for the equivalence of the two sets of solutions ({\it
  equivalence of ensembles}).

We now summarize some known results on the MFE, 
the CVP and the equivalence of ensembles,
which will be useful in the sequel.
Fixed $\beta \ge 0$, the MFE has a unique solution.
The case $\beta <0$ is more delicate to deal with.
For $\beta \in (- 8 \pi, 0]$ and simply connected
domain, the uniqueness and the regularity of the branch of solutions
of the MFE \eqref{eq:mfe} was proved in \cite{S}, and then extended to
multiply connected sets in \cite{BaLin}.  Moreover, from \cite{CCLin}
(see also \cite{BaLin}), we also know that for $\beta = - 8 \pi$ 
the MFE has at most one solution for connected domain.

Let $ E(\beta) \coloneqq \Cal E(\rho_\beta) $ be the energy of the
(unique) solution of the MFE at inverse temperature $\beta > - 8 \pi$
and define
 $$
 (0,+\infty] \ni E_{-8\pi} = \lim_{\beta \to (-8\pi)^+} E(\beta).$$ As
 in \cite{CLMP2}, we say that a bounded open 
 connected  $\Lambda$ is a {\it first kind domain}
 if $E_{-8\pi}= + \infty$ and is a {\it second kind domain} if
 $E_{-8\pi} < + \infty$.





  \begin{theorem}[CVP - Equivalence of ensembles
    \cite{CLMP1,CLMP2,BaLin}]
\label{thm:CVP}
 Let $\Lambda$ be a bounded open connected set with smooth boundary. 
  \begin{enumerate}[i)]
  
  \item $\ms F(\beta)$ is finite iff $\beta \ge -8\pi$.  For
    $\beta> -8\pi$ the CVP has a unique solution $\rho_\beta$, and the
    stream function $\Psi_{\beta} $ solves the MFE \eqref{eq:mfe}.
    Moreover the set of solutions
    $\{ \Psi_{\beta} \, : \, \beta > - 8 \pi \}$ is a regular branch.

  \item $\beta \mapsto \beta \ms F (\beta)$
    is  a strictly concave
    increasing and continuously differentiable  function,
    $(- 8 \pi, + \infty) \ni \beta  \mapsto
    E(\beta) \in (0,E_{-8\pi}) $ is decreasing and  continuously
    differentiable,
    $\ms {S}(E) = \inf_{\beta} (\beta E  - \beta \ms F (\beta)  )$
    is a smooth concave function of $E\in (0,E_{-8\pi})$ and 
    $\ms {S}'(E) = \beta$.

  \item If $E \in (0, E_{-8\pi})$ and $\rho_{E}$ is solution of the
    MVP then $\rho_{\beta} = \rho_{E(\beta)}$ uniquely solves the CVP and 
    MFE.
  \end{enumerate}
\end{theorem}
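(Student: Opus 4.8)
The plan is to reduce all three parts to the study of a single auxiliary functional. Multiplying the free energy \eqref{eq:F} by $\beta$ and recalling that the CVP is an infimum for $\beta>0$ and a supremum for $\beta<0$, I observe that in both cases the optimization becomes the \emph{minimization} over $P$ of
\[
  J_\beta(\rho) = \beta\,\Cal E(\rho) - \Cal S(\rho)
  = \frac\beta 2\int_\Lambda\rho\Psi\,\de x + \int_\Lambda\rho\ln\rho\,\de x ,
\]
so that $\beta\,\ms F(\beta)=\inf_{\rho\in P}J_\beta(\rho)$ for every $\beta\ne 0$. The advantage is that $J_\beta$ is \emph{affine} in $\beta$ for fixed $\rho$, which makes the whole theorem a statement about this infimum and its minimizers.

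First I would settle i). The finiteness rests on the Moser--Trudinger inequality in its sharp form, $\ln\int_\Lambda e^{-\beta\Psi}\,\de x\le\frac{\beta^2}{16\pi}\int_\Lambda|\grad\Psi|^2\,\de x+C_\Lambda$ for $\Psi\in H^1_0(\Lambda)$, whose critical constant is exactly what pins the threshold at $\beta=-8\pi$: by Legendre duality this controls $\Cal S$ by $\Cal E$ and shows $J_\beta$ bounded below precisely for $\beta\ge-8\pi$, while for $\beta<-8\pi$ a family of concentrating densities drives $J_\beta\to-\infty$, so $\ms F(\beta)$ is not finite. For $\beta>-8\pi$ existence of a minimizer follows by the direct method (weak lower semicontinuity of the convex functional $-\Cal S$ and continuity of $\Cal E$, with coercivity from the strict Moser--Trudinger bound), and the Euler--Lagrange equation of $J_\beta$ under $\int\rho=1$ is exactly the MFE \eqref{eq:mfe}. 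For $\beta\ge0$ the functional $J_\beta$ is strictly convex, giving uniqueness at once; for $\beta\in(-8\pi,0)$ convexity is lost, and here I would invoke the uniqueness and the regularity of the branch proved in \cite{S} and \cite{BaLin}.

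For ii) I would argue by convex analysis. Since $\beta\mapsto J_\beta(\rho)$ is affine, $g(\beta):=\beta\,\ms F(\beta)=\inf_\rho J_\beta(\rho)$ is an infimum of affine functions, hence concave; uniqueness of the minimizer $\rho_\beta$ lets me apply the envelope theorem to obtain $g'(\beta)=\Cal E(\rho_\beta)=E(\beta)>0$, so $g$ is increasing and, by the regularity of the branch from i), continuously differentiable with $E(\cdot)$ continuous. Strict monotonicity of $E(\beta)$ then upgrades concavity of $g$ to \emph{strict} concavity, so $E(\beta)$ is strictly decreasing; its range $(0,E_{-8\pi})$ comes from the limit $\beta\to+\infty$ (where the density spreads and $E\to0$) and from the very definition of $E_{-8\pi}$ as $\beta\to(-8\pi)^+$. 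Finally $\ms S(E)=\inf_\beta\bigl(\beta E-g(\beta)\bigr)$ is the concave Legendre transform of the smooth strictly concave $g$, hence smooth and concave on $(0,E_{-8\pi})$, with $\ms S'(E)=\beta$ as the inverse relation to $E=g'(\beta)$.

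Part iii) is the equivalence of ensembles on $(0,E_{-8\pi})$, which I would deduce from the fact that both principles share the Euler equation \eqref{eq:mfe}. Given $E\in(0,E_{-8\pi})$, Theorem~\ref{propo:mvp} produces a maximizer $\rho_E$ of the MVP solving the MFE at some $\beta=\ms S'(E)>-8\pi$, while the CVP produces $\rho_\beta$ solving the MFE at that same $\beta$; by uniqueness of MFE solutions for $\beta>-8\pi$ the two coincide, and since $E(\beta)=E$ this reads $\rho_\beta=\rho_{E(\beta)}=\rho_E$. The main obstacle is not the convex-analytic bookkeeping but the two analytic inputs I would import rather than reprove: the sharp Moser--Trudinger constant fixing the threshold at $-8\pi$, and above all the uniqueness and regularity of the MFE branch for $\beta\in(-8\pi,0)$, where $J_\beta$ is genuinely non-convex and the arguments of \cite{S,BaLin} are essential.
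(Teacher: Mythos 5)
You should first be aware that the paper offers no proof of Theorem \ref{thm:CVP} at all: it is quoted as known, with the actual proofs in \cite{CLMP1,CLMP2,BaLin} (and \cite{S} for uniqueness of the branch). So your proposal has to be judged against those standard arguments, and in outline it follows them: the reduction to $g(\beta) = \beta\ms F(\beta)=\inf_{\rho\in P}J_\beta(\rho)$ with $J_\beta(\rho)=\beta\Cal E(\rho)-\Cal S(\rho)$, the sharp Moser--Trudinger inequality fixing the threshold at $-8\pi$, concentration to kill finiteness below $-8\pi$, the direct method and strict convexity for $\beta\ge 0$, and the imported uniqueness and regularity of \cite{S,BaLin} for $\beta\in(-8\pi,0)$ are the right ingredients for i) and for the concavity and differentiability of $g$ in ii).

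There is, however, a genuine gap at the heart of ii)--iii). The identity $\ms S(E)=\inf_\beta\bigl(\beta E-g(\beta)\bigr)$ is not a definition you may invoke: $\ms S(E)$ is defined by the constrained supremum \eqref{eq:mvp}, and its coincidence with the Legendre transform of $g$ on $(0,E_{-8\pi})$ is precisely the equivalence-of-ensembles content to be proved. You treat it as given (``is the concave Legendre transform \dots hence smooth and concave''), and the same gap resurfaces in iii), where you assert that the MVP maximizer solves the MFE ``at some $\beta=\ms S'(E)>-8\pi$'': Theorem \ref{propo:mvp} only produces \emph{some} multiplier $\beta\in\R$, and for second kind domains with $E>E_{-8\pi}$ that multiplier is in fact $\le-8\pi$, so the bound $\beta>-8\pi$ cannot come for free --- it is exactly where the hypothesis $E<E_{-8\pi}$ must enter. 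The missing (short) argument is the two-sided comparison: for every $\rho\in P$ and every $\beta$, $g(\beta)\le J_\beta(\rho)$ gives $\Cal S(\rho)\le\beta\Cal E(\rho)-g(\beta)$, hence $\ms S(E)\le\inf_\beta\bigl(\beta E-g(\beta)\bigr)$; conversely, for $E\in(0,E_{-8\pi})$ choose $\beta_E$ with $E(\beta_E)=E$ (possible because the range of $E(\cdot)$ is $(0,E_{-8\pi})$) and test the MVP with the CVP minimizer $\rho_{\beta_E}\in P_E$, which gives $\ms S(E)\ge\Cal S(\rho_{\beta_E})=\beta_E E-g(\beta_E)$. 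Equality follows, and moreover any MVP maximizer $\rho_E$ then satisfies $J_{\beta_E}(\rho_E)=g(\beta_E)$, i.e.\ it minimizes $J_{\beta_E}$, so by the uniqueness in i) one gets $\rho_E=\rho_{\beta_E}$: this single argument yields the Legendre identity, $\ms S'(E)=\beta_E$, and iii) at once. A secondary flaw: your strict-concavity step in ii) is circular (strict monotonicity of $E(\beta)$ is used to get strict concavity of $g$, which is then used to conclude that $E(\beta)$ is strictly decreasing). The standard repair: if $g$ were affine on a nontrivial interval, uniqueness of minimizers would force a single $\rho$ to minimize $J_\beta$ for two distinct values of $\beta$, hence to solve the MFE with two different multipliers, which forces $\Psi$ to be constant --- impossible since $-\lap\Psi=\rho>0$.
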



In view of \emph{ii)} of the theorem, if the  domain is of first kind
$\ms {S}(E)$
is a strictly concave function of the energy for any $E>0$.
Differently, it is possible to show that, 
if the domain is of  second kind, for some range of
values of the energy larger than
${E_{-8\pi}}$, the solution of the MVP is not unique,  and the 
entropy cannot be a concave function;
in particular there is no equivalence of the
ensembles (we refer the reader to
\cite{CLMP1,CLMP2} for further details). 

A  characterization of the domains of  second kind  can be summarized as follows (see  \cite{BaLin}) 

\begin{theorem}[Second kind domains]
  \label{teo:IItipo}
  Let $\Lambda$ be a  bounded open connected set of class $C^1$,
  and let $\rho_\beta$ be the unique solution of the mean field equation
      for $\beta> -8\pi$,
  then the following facts are equivalent:
  \begin{enumerate}[i)]
  \item $\Lambda$ is a  second kind domain.
  
  \item $\ms F (-8\pi)$ is attained and the unique branch of
    maximizers $\rho_\beta$ with $\beta > - 8 \pi$ converges uniformly
    to the maximizer for $\beta = - 8\pi$.

  \item The mean field equation \eqref{eq:mfe} has a (unique) solution
    for $\beta = -8\pi$.

  \end{enumerate}
  Conversely, $\Lambda$ is a first kind domain iff the unique branch
  of maximizers $\rho_\beta$ with $\beta > - 8 \pi$ blows up as
  $\beta \to (-8\pi)^+$. In particular $\rho_\beta$ converges weakly
  to the $\delta$-measure in the point $x_0$ which is the unique
  maximum point of $\gamma(x, x)$ in $\Lambda$.
\end{theorem}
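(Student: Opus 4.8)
The plan is to base the whole argument on the monotone energy map $E(\beta)$ of Theorem~\ref{thm:CVP} together with the critical Moser--Trudinger inequality at the threshold $\beta=-8\pi$. By Theorem~\ref{thm:CVP} the map $(-8\pi,+\infty)\ni\beta\mapsto E(\beta)\in(0,E_{-8\pi})$ is continuous and strictly decreasing, so $E_{-8\pi}=\lim_{\beta\to(-8\pi)^+}E(\beta)$ is well defined in $(0,+\infty]$ and, by definition, $\Lambda$ is of \emph{second kind} exactly when $E_{-8\pi}<+\infty$. I would establish the three equivalences as the cycle $i)\Rightarrow iii)\Rightarrow ii)\Rightarrow i)$, the soft analytic input being that $\ms F(-8\pi)$ is finite (Theorem~\ref{thm:CVP}~i)), i.e. that $-8\pi$ is precisely the Moser--Trudinger threshold for the functional $\Cal F(\beta,\cdot)$.

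For $i)\Rightarrow iii)$, if $E_{-8\pi}<+\infty$ then $\Cal E(\rho_\beta)=\tfrac12\|\grad\Psi_\beta\|_{L^2}^2$ stays bounded along the branch, so $\{\Psi_\beta\}$ is bounded in $H^1_0(\Lambda)$; I would extract a weak limit $\Psi_*$ as $\beta\to(-8\pi)^+$ and pass to the limit in \eqref{eq:mfe}. The critical Moser--Trudinger inequality is what prevents the nonlinear term $\tfrac1Z\e^{-\beta\Psi_\beta}$ from losing mass, so $\Psi_*$ solves \eqref{eq:mfe} at $\beta=-8\pi$; uniqueness at the threshold is exactly the result of \cite{CCLin,BaLin} quoted above. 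For $iii)\Rightarrow ii)$, given the threshold solution $\rho_*$ I would show that the (regular) branch actually converges to it---using uniqueness at $-8\pi$ and the a~priori energy bound that rules out blow-up once a solution exists---and then that $\rho_*$ attains $\ms F(-8\pi)$ by the continuity of $\beta\mapsto\beta\ms F(\beta)$ from Theorem~\ref{thm:CVP}~ii) together with semicontinuity of $\Cal F(-8\pi,\cdot)$; elliptic regularity upgrades the $H^1_0$ convergence of $\Psi_\beta$ to uniform convergence of $\rho_\beta$. Finally $ii)\Rightarrow i)$ is immediate: a maximizer of $\ms F(-8\pi)$ solves \eqref{eq:mfe} at $\beta=-8\pi$ and has finite Dirichlet energy, whence $E_{-8\pi}<+\infty$.

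For the converse, I would argue by contraposition: if $E_{-8\pi}=+\infty$ then $\|\grad\Psi_\beta\|_{L^2}\to+\infty$ and no subsequential limit in $H^1_0$ can exist, so the branch must concentrate. The heart of the matter is the blow-up analysis for the mean-field (Liouville-type) equation \eqref{eq:mfe}: by concentration--compactness and the sharp mass quantization (Brezis--Merle, Li--Shafrir, Nagasaki--Suzuki), at the critical value $\beta=-8\pi$ a single bubble carrying the full unit mass forms at one interior point $x_0\in\Lambda$, so $\rho_\beta\debole\delta_{x_0}$. To identify $x_0$, I would use a Pohozaev-type local balance identity relating the concentration point to the gradient of the regular part $\gamma$ in \eqref{eq:green}, forcing $x_0$ to be a critical point---and by a further comparison argument the unique maximum point---of the Robin function $x\mapsto\gamma(x,x)$.

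The main obstacle is precisely this last step: establishing the sharp quantization at $\beta=-8\pi$, so that exactly one bubble of mass $1$ forms and no mass splits or escapes to $\pa\Lambda$, and then pinning the blow-up location to the unique maximizer of $\gamma(x,x)$ via the Pohozaev identity. Everything upstream reduces to standard weak-compactness arguments once the criticality of $-8\pi$ in the Moser--Trudinger inequality is invoked; it is the fine blow-up description, borrowed from \cite{CCLin,BaLin} and the Liouville-equation literature, that carries the real analytic weight.
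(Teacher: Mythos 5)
The first thing to note is that the paper does not prove Theorem \ref{teo:IItipo} at all: it is recalled as a known result, quoted from \cite{BaLin} (with the critical-parameter uniqueness from \cite{CCLin} and the blow-up description going back to \cite{CLMP2} and the Liouville-equation literature). So there is no internal proof to compare against; your proposal has to be judged as a reconstruction of the literature argument.

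As such a reconstruction, the architecture (the cycle $i)\Rightarrow iii)\Rightarrow ii)\Rightarrow i)$ plus quantized blow-up for the converse) is the right one, and $i)\Rightarrow iii)$ is essentially complete: a bounded-energy branch is bounded in $H^1_0$, the Moser--Trudinger inequality gives uniform $L^p$ bounds on $\e^{-\beta\Psi_\beta}$, so one passes to the limit in \eqref{eq:mfe} and invokes uniqueness at $-8\pi$. The genuine gap is $iii)\Rightarrow ii)$: you appeal to ``the a priori energy bound that rules out blow-up once a solution exists,'' but no such bound is available --- the existence of a critical solution does not by itself control $E(\beta)$ along the subcritical branch, and ruling out that the branch blows up even though a solution at $-8\pi$ exists is precisely the hard content of the theorem. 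In the literature this is handled by combining the result of \cite{CCLin} that a solution at $\beta=-8\pi$ necessarily attains $\ms F(-8\pi)$ (it is an extremal of a sharpened Moser--Trudinger inequality) with the asymptotic expansion of the free energy along blowing-up sequences (Nagasaki--Suzuki; Section 8 of \cite{CLMP2}), which shows that blow-up of the branch forces $\ms F(-8\pi)$ to equal the concentration value and hence \emph{not} to be attained. Without this comparison your cycle does not close. Note also that your $ii)\Rightarrow i)$ uses only attainment: finiteness of the maximizer's energy does not bound $\lim_{\beta\to(-8\pi)^+}E(\beta)$ unless you invoke the branch-convergence half of $ii)$, which you state but do not use. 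Finally, in the converse part, a Pohozaev identity only forces the concentration point to be a \emph{critical} point of $x\mapsto\gamma(x,x)$; that it is a maximum point again requires the free-energy expansion (the canonical branch, being a family of maximizers, must concentrate where $\gamma(x,x)$ is largest), and the uniqueness of that maximum point is part of the quoted result rather than something a ``further comparison argument'' produces.
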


Recalling that $\pa_E \ms S = \beta$, phase transitions
occur if $\{\rho_E\}_E$ is not connected
in $E$.  More precisely, the solution of the MVP jumps
between different branches of solutions
of the mean-field equation, or between different sections of branches.
We prove the following results.

\vskip.3cm
\noindent
{\bf Theorem}  (Low energy phase 
transitions)

\noindent
{\it   If $\Lambda$ is union of $N\ge 3$ disks
  with sufficiently close radii, 
  then 
  there exists $E_*$
  such that $\ms S(E)$ has a first order phase transition for  $E=E_*$
  (see Theorem \ref{teo:le}).

  \noindent
  Moreover, there exist connected ``dumbbell domains'' obtained
  by
  joining
  the $N$ disks with thin channels, 
  such that $\ms S(E)$ has a first order phase transition near $E_*$
  (see Theorem \ref{thm:transizione}).
    }

\vskip.3cm

 \noindent
 {\bf Theorem} 
 (High energy phase transitions)

  \noindent
{\it 
  There exist domains $\Lambda$, disjoint unions of $N$ suitable deformed
  disks
  such that 
  $\ms S(E)$ has $N$ first order phase transitions,
  for sufficiently large values of $E$ (see Theorem \ref{teo:he}).

    \noindent
  Moreover, there exist connected dumbbell domains obtained by
  joining the components of $\Lambda$ with thin channels,
  such that $\ms S(E)$
  has $N$ first order phase transitions
    for sufficiently large values of $E$
  (see Theorem \ref{thm:transizione}).
  }

\vskip.3cm

In Section 2 we introduce the MVP for disconnected domains,
and in Section 3 we analyze the particular case of domains made of $N$
disks.
In Section 4 we show the the phase transitions,
for disconnected domains.
In Section 5 we use perturbative arguments to
extend the results of Sections 4 to connected dumbbell domains.

\section{MVP for disconnected sets}
In this section we face the case of disconnected 
domains, in which we reduce to consider the MVP 
restricted to each connected component, where the probability mass 
is in general less than 1.

To this aim, we first extend the variational principle to
the case of distributions of mass $M>0$.
The MVP becomes
\begin{equation}
  \label{eq:mvpM}
  \begin{aligned}
  &\ms {S}(M,E) = \sup_{
    \rho\in P_{M,E}} \Cal S(\rho),\\
  &\text{where} \
  P_{M,E} = \left\{ \rho \in L^1(\Lambda)| \rho \ge 0, \ \int_\Lambda
    \rho = M, \
    \Cal E(\rho) = E\right\}
  \end{aligned}
\end{equation}

Now we show that $\ms {S}(M,E)$ is simply
given in terms of $\ms {S}(E)\coloneqq \ms {S}(1,E)$, and
we compute the derivative of $\ms S$.
 \begin{lemma}
  \label{lemma:riscalamento}
  \begin{equation}
    \label{eq:sm} 
    \ms {S}(M,E) = M \ms {S}(E/M^2) - M \ln M,
  \end{equation}
  and this value is attained in $\rho\in P_{M,E}$
  which solves
  \begin{equation}
    \label{eq:mfeM}
    \rho = -\lap \Psi = \frac MZ \e^{-\beta \Psi},\ \ \text{ with }
    Z = \int_\Lambda \e^{-\beta\Psi}.
  \end{equation}
  Moreover
  \begin{equation}
    \label{eq:SMZ}
    \ms {S}(M,E) = M \ln (Z/M) + 2\beta E.
  \end{equation}
\end{lemma}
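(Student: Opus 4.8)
The plan is to reduce the mass-$M$ problem to the mass-$1$ problem by a scaling argument, exploiting the homogeneity of $\Cal E$ and the affine behavior of $\Cal S$ under $\rho \mapsto M\rho$. The bijection $\rho \leftrightarrow M\rho$ between probability distributions and mass-$M$ distributions will transport the existence, the maximizer, and the mean-field equation of Theorem \ref{propo:mvp} directly to the mass-$M$ setting.

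First I would record two scaling identities. If $\int_\Lambda \rho = 1$ and $\tilde\rho = M\rho$, then by linearity of \eqref{eq:psi} the stream function scales as $\tilde\Psi = M\Psi$, whence
\[
\Cal E(M\rho) = M^2 \Cal E(\rho), \qquad \Cal S(M\rho) = M\Cal S(\rho) - M\ln M,
\]
the second identity following from $-\int_\Lambda M\rho\ln(M\rho) = -M\ln M\int_\Lambda\rho - M\int_\Lambda\rho\ln\rho$. The map $\rho\mapsto M\rho$ is a bijection from $P_{1,E/M^2}$ onto $P_{M,E}$, since $\Cal E(M\rho)=E$ exactly when $\Cal E(\rho)=E/M^2$. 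Taking the supremum over this bijection of $\Cal S(M\rho)=M\Cal S(\rho)-M\ln M$ yields \eqref{eq:sm} at once, and identifies the maximizer for $P_{M,E}$ as $M\rho_0$, where $\rho_0\in P_{1,E/M^2}$ is the maximizer provided by Theorem \ref{propo:mvp}.

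Next I would derive \eqref{eq:mfeM}. By Theorem \ref{propo:mvp}, $\rho_0=\frac{1}{Z_0}\e^{-\beta_0\Psi_0}$ with $-\lap\Psi_0=\rho_0$ and $Z_0=\int_\Lambda\e^{-\beta_0\Psi_0}$, for a unique $\beta_0$. Setting $\rho=M\rho_0$, $\Psi=M\Psi_0$ and $\beta=\beta_0/M$, direct substitution gives $\rho=\frac{M}{Z_0}\e^{-\beta_0\Psi_0}=\frac{M}{Z}\e^{-\beta\Psi}$ with $Z=\int_\Lambda\e^{-\beta\Psi}=Z_0$, which is \eqref{eq:mfeM}. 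The one point deserving care is this rescaling of the inverse temperature, $\beta=\beta_0/M$: the mass-$M$ equation holds not with the mass-$1$ temperature but with $\beta_0/M$, and one must verify that the partition function is left unchanged under the simultaneous rescaling of $\Psi$ and $\beta$.

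Finally, \eqref{eq:SMZ} follows by evaluating $\Cal S$ on the maximizer via the explicit form \eqref{eq:mfeM}. Since $\ln\rho=\ln(M/Z)-\beta\Psi$,
\[
\ms{S}(M,E)=-\int_\Lambda\rho\ln\rho=-M\ln(M/Z)+\beta\int_\Lambda\rho\Psi=M\ln(Z/M)+2\beta E,
\]
using $\int_\Lambda\rho=M$ and $\int_\Lambda\rho\Psi=2\Cal E(\rho)=2E$. No serious obstacle arises: the whole argument is a scaling reduction off Theorem \ref{propo:mvp}, and the only thing to track carefully is the consistent rescaling of $\Psi$, $\beta$ and $Z$.
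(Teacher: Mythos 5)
Your proof is correct and follows essentially the same route as the paper: a scaling bijection between $P_{M,E}$ and $P_{1,E/M^2}$ (the paper uses $\rho\mapsto\rho/M$, you use the inverse map $\rho\mapsto M\rho$), followed by substitution of the mean-field form of the maximizer into the entropy to get \eqref{eq:SMZ}. Your explicit tracking of the temperature rescaling $\beta=\beta_0/M$ and the invariance $Z=Z_0$ is exactly the relation the paper leaves implicit (and later uses in Lemma \ref{lemma:derivate} and \eqref{eq:EZ}), so there is no substantive difference.
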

\begin{proof}
Given $\rho\in P_{M,E}$, $\tilde \rho = \rho /M$
is a probability density with 
$\Cal E(\tilde \rho) = \Cal E(\rho)/M^2$, and
$$\Cal S(\rho) = M \Cal S(\tilde \rho) - M \ln M.
$$
We conclude observing that
$\rho \in P_{M,E}$ iff $\tilde \rho \in P_{1,E/M^2}$.
Eq. \eqref{eq:SMZ} follows from the definition
of $S(\rho)$ and the mean-field equation \eqref{eq:mfeM}.
\end{proof}
\begin{lemma}
  \label{lemma:derivate}
  If $\Lambda$ is a connected bounded open set of first kind,
  $\ms S$ depends regularly in $E$ and $M$ and 
  $$
  \begin{aligned}
    &(i)\ \ \ \pa_E \ms {S}(M,E) = \beta\\
    &(ii)\ \ \ \pa_M \ms {S}(M,E) = \ln (Z/M) - 1.
  \end{aligned}
  $$
\end{lemma}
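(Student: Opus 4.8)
The plan is to obtain both identities by differentiating the rescaling formula \eqref{eq:sm} of Lemma \ref{lemma:riscalamento}, treating the single-mass entropy $\ms{S}(E)=\ms{S}(1,E)$ as the primitive object. The key point is that for a first kind domain one has $E_{-8\pi}=+\infty$, so part \emph{ii)} of Theorem \ref{thm:CVP} guarantees that $E\mapsto\ms{S}(E)$ is smooth (and concave) on the entire half-line $(0,+\infty)$, with $\ms{S}'(E)=\beta$. Since $(M,E)\mapsto E/M^2$ is smooth for $M>0$, the composition $M\ms{S}(E/M^2)-M\ln M$ is smooth in both variables; this settles the regularity claim, and it remains only to compute the two partial derivatives.

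Before differentiating I would record how $\beta$ and $Z$ transform under the rescaling $\tilde\rho=\rho/M$ used in Lemma \ref{lemma:riscalamento}. From $-\lap\Psi=\rho$ we get $\tilde\Psi=\Psi/M$, and dividing the mean field equation \eqref{eq:mfeM} by $M$ gives $\tilde\rho=\frac1Z\e^{-\beta M\tilde\Psi}$. Hence the $M=1$ problem at energy $E/M^2$ is solved at inverse temperature $\tilde\beta=M\beta$ with the same normalization $\tilde Z=\int_\Lambda\e^{-\tilde\beta\tilde\Psi}=Z$; in particular $\ms{S}'(E/M^2)=\tilde\beta=M\beta$, and by \eqref{eq:SMZ} at $M=1$ one has $\ms{S}(E/M^2)=\ln Z+2\beta E/M$.

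For \emph{(i)} I would simply apply the chain rule to \eqref{eq:sm}:
\[
\pa_E\ms{S}(M,E)=\frac1M\,\ms{S}'(E/M^2)=\frac1M\cdot M\beta=\beta.
\]
For \emph{(ii)} I would differentiate \eqref{eq:sm} in $M$, obtaining $\ms{S}(E/M^2)-\tfrac{2E}{M^2}\ms{S}'(E/M^2)-\ln M-1$, and then substitute the two scaling relations above:
\[
\pa_M\ms{S}(M,E)=\Big(\ln Z+\tfrac{2\beta E}{M}\Big)-\tfrac{2E}{M^2}\,M\beta-\ln M-1=\ln(Z/M)-1,
\]
the energy terms cancelling exactly.

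I do not expect a genuine obstacle: the whole argument is a chain-rule computation resting on two facts already available. The single point that must be stated with care is the role of the first kind hypothesis, which is precisely what makes $\ms{S}(E)$ differentiable across the full range $(0,+\infty)$ and thus legitimizes the chain rule at every energy. For second kind domains this regularity breaks down beyond $E_{-8\pi}$, which is exactly the mechanism producing the phase transitions analysed in the later sections.
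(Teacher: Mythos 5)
Your proof is correct and follows essentially the same route as the paper's: both derive $(i)$ by differentiating the rescaling identity \eqref{eq:sm} and using that $\rho/M$ solves the mass-one MFE at inverse temperature $M\beta$ (so $\ms S'(E/M^2)=M\beta$ by Theorem \ref{thm:CVP}), and both obtain $(ii)$ by differentiating \eqref{eq:sm} in $M$ and combining $(i)$ with \eqref{eq:SMZ}. Your write-up merely makes explicit the scaling relations ($\tilde\beta=M\beta$, $\tilde Z=Z$) and the cancellation of the energy terms that the paper leaves implicit.
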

\begin{proof}
Since $\Lambda$ is of first kind,
$\ms S(E)$ is regular  for $E\in (0,+\infty)$. Then 
$$\pa_E \ms {S}(M,E)  = \frac 1M \pa_E \ms S(E/M^2)$$
and $(i)$ follows from the fact that if $\rho$ solves
\eqref{eq:mfeM}, $\rho/M$ satisfy the MFE equation \eqref{eq:mfe}  with inverse
temperature $M\beta$.
By using \eqref{eq:sm}, $(i)$, and \eqref{eq:SMZ}
we obtain $(ii)$.
\end{proof}
We remark that  $ (i)$ and $(ii)$  hold also for domains of  second kind  if $E/M^2 < E_{-8\pi}$.

 \vskip.3cm
 Recalling that 
 for a bounded connected set $\Lambda$
 the solution $\rho$ of equation \eqref{eq:mfeM} is unique
 if $M\beta > -8\pi$,
 we can 
 define
 $$E(M,\beta ) = \Cal E (\rho), \ \ \text{and}
 \ Z(M,\beta) = \int_{\Lambda} \e^{-\beta \Psi}.$$
We set $E(\beta) \coloneqq E(1,\beta)$ and 
$Z(\beta) \coloneqq Z(1,\beta)$.
Since $\rho/M$ and $\Psi/M$ solves  the MFE  \eqref{eq:mfeM} with inverse
temperature $M\beta$, we have
\begin{equation}
  \label{eq:EZ}
  E(M,\beta) = M^2 E(M\beta), \ \ Z(M,\beta) = Z(M\beta).
\end{equation}

\vskip.3cm
Now we  can state our result on the MVP for
a disconnected set.

\begin{theorem}[MVP for disconnected domains]
  \label{teo:mvpdisconnessi}
  Let $\Lambda$ be a set given by 
  $\Lambda = \bigcup_{i=1}^N \Lambda_i$
  where $\Lambda_i$ are open connected bounded sets of first kind, which do not
  intersect.
  Given $E>0$, 
  there exists $\rho \in P_E$ which maximize $\Cal S_\Lambda$.
  The restricted densities $\rho_i \coloneqq 
  \left. \rho \right|_{\Lambda_i}$
  solve the MVP in the sets $P_{\Lambda_i, E_i, M_i}$
  for some $E_i>0$ and $M_i>0$,
  and the entropy $\ms {S}_\Lambda(E)$ satisfies
  \begin{equation}
    \label{eq:SEMi}
    \ms S_\Lambda(E) = \sup_{\begin{subarray}{l}{E_i>0,\,\, \sum E_i= E}
        \\{M_i>0,\, \sum M_i= 1}\end{subarray}} \ 
    \sum_{i=1}^N \ms S_{\Lambda_i} (M_i, E_i).
  \end{equation}
  
  Moreover, $\rho >0$ and there exists
  $\beta\in \R$ such that the stream function $\Psi$ 
  satisfies the MFE  \eqref{eq:mfe} in $\Lambda$;
  the restricted density $\rho_i $ ($i=1,\dots N$) satisfies 
  \begin{equation}
    \label{eq:mfei}
    \rho_i = -\lap \Psi_i =
    \frac {M_i}{Z_i} \e^{-\beta \Psi_i} \quad \text{in} \, \, \Lambda_i ,
    \ \text{ with }
    Z_i = \int_{\Lambda_i}\e^{-\beta \Psi_i} , 
  \end{equation}
  and $Z_i/M_i = Z = \int_{\Lambda}\e^{-\beta \Psi} $,   for all $i=1,\dots N$. 
\end{theorem}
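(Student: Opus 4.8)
The plan is to exploit the disconnectedness to \emph{decouple} the functionals over the components and thereby reduce the infinite-dimensional MVP to a finite-dimensional optimisation over the mass/energy splittings $(M_i,E_i)$. First I would establish the decoupling. Since the $\Lambda_i$ are pairwise disjoint, the Green function $G_\Lambda(x,y)$ vanishes whenever $x,y$ lie in distinct components: fixing $y\in\Lambda_j$, the map $x\mapsto G_\Lambda(x,y)$ is harmonic on each $\Lambda_i$ with $i\ne j$ and vanishes on $\partial\Lambda_i$, hence vanishes there by the maximum principle. Consequently $\Psi_i\coloneqq\Psi|_{\Lambda_i}$ solves the Dirichlet problem \eqref{eq:psi} on $\Lambda_i$ alone, and both functionals are additive: $\Cal E_\Lambda(\rho)=\sum_i\Cal E_{\Lambda_i}(\rho_i)$ and $\Cal S_\Lambda(\rho)=\sum_i\Cal S_{\Lambda_i}(\rho_i)$ with $\rho_i=\rho|_{\Lambda_i}$. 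Writing $M_i=\int_{\Lambda_i}\rho_i$, $E_i=\Cal E_{\Lambda_i}(\rho_i)$ and bounding $\Cal S_{\Lambda_i}(\rho_i)\le\ms S_{\Lambda_i}(M_i,E_i)$ yields ``$\le$'' in \eqref{eq:SEMi}; conversely, for any admissible splitting I would glue the component maximizers, which exist by Theorem \ref{propo:mvp} together with the rescaling of Lemma \ref{lemma:riscalamento}, to obtain ``$\ge$''. Thus \eqref{eq:SEMi} holds independently of attainment.

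Next I would attain the supremum in \eqref{eq:SEMi}, taken over the compact set $\{M_i\ge0,\ \sum_iM_i=1\}\times\{E_i\ge0,\ \sum_iE_i=E\}$. By \eqref{eq:sm} each summand equals $M_i\ms S_{\Lambda_i}(E_i/M_i^2)-M_i\ln M_i$, which is continuous for $M_i,E_i>0$ and bounded above by $M_i\ln|\Lambda_i|-M_i\ln M_i$; I would extend it upper-semicontinuously to the boundary. Since $\ms S_{\Lambda_i}(e)\to-\infty$ as $e\to0^+$ and as $e\to+\infty$ (Theorem \ref{propo:mvp}), the objective tends to $-\infty$ whenever some $E_i\to0$ with $M_i$ bounded away from $0$, or some $M_i\to0$ with $E_i$ bounded away from $0$. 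Upper semicontinuity on a compact set then produces a maximizer, a priori with some components possibly switched off, i.e. $M_i=E_i=0$.

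To exclude switched-off components I would use that the mixing term $-M_i\ln M_i$ has infinite right derivative at $M_i=0$: starting from a putative optimum with $M_i=E_i=0$, I inject a mass $\eps$ and an energy $\eps^2 e$ into $\Lambda_i$ (keeping $E_i/M_i^2=e$ at the value where $\ms S_{\Lambda_i}$ is maximal) while subtracting them from the other components. The entropy contribution of $\Lambda_i$ then grows like $-\eps\ln\eps$, whose derivative diverges, whereas by Lemma \ref{lemma:derivate} the marginal cost in the remaining (first-kind) components is finite; hence the variation is strictly positive and the maximizer must be interior, with all $M_i,E_i>0$. At such a point the Lagrange conditions for the two constraints give multipliers $\beta,\mu\in\R$ with $\partial_{E_i}\ms S_{\Lambda_i}=\beta$ and $\partial_{M_i}\ms S_{\Lambda_i}=\mu$ for every $i$. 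By Lemma \ref{lemma:derivate} the first identity forces all components to share a single inverse temperature $\beta$, so each $\rho_i$ solves \eqref{eq:mfei}; the second reads $\ln(Z_i/M_i)-1=\mu$, whence $Z_i/M_i$ is a common constant $Z=\sum_iZ_i=\int_\Lambda\e^{-\beta\Psi}$. Therefore $\rho_i=\frac{M_i}{Z_i}\e^{-\beta\Psi_i}=\frac1Z\e^{-\beta\Psi_i}$, and the glued density $\rho=\frac1Z\e^{-\beta\Psi}>0$ satisfies the global MFE \eqref{eq:mfe}.

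The hard part will be the boundary analysis of the third step: ruling out empty components and controlling $\ms S_{\Lambda_i}(M_i,E_i)$ on the boundary of the splitting simplex. Everything hinges on the competition between the mixing entropy $-M_i\ln M_i$, which favours spreading mass over all components, and the divergence of $\ms S_{\Lambda_i}(E_i/M_i^2)$ at small and large rescaled energy; I would also need the first-kind hypothesis precisely because it makes each $\ms S_{\Lambda_i}$ smooth on all of $(0,+\infty)$, so that Lemma \ref{lemma:derivate} applies and the Lagrange multiplier computation is legitimate throughout.
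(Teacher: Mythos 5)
Your proposal is correct, and it reaches the conclusion by a genuinely different route on the existence side. The paper obtains a maximizer $\rho\in P_E$ on the disconnected set $\Lambda$ directly, by adapting the compactness/direct-method proof of Theorem \ref{propo:mvp} from \cite{CLMP2}, and asserts as part of that adaptation that the support of $\rho$ is all of $\Lambda$; the identity \eqref{eq:SEMi} and the fact that the restrictions $\rho_i$ solve the component MVPs are then read off from this maximizer, and the proof ends with exactly your Lagrange-multiplier step (equality of the $\beta_i$ from Lemma \ref{lemma:derivate}~$(i)$, constancy of $Z_i/M_i$ from $(ii)$, hence $Z=\sum_j Z_j=Z_i/M_i$). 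You proceed in the reverse order: you prove \eqref{eq:SEMi} first as an identity, via the decoupling of the Green function across components, then attain the supremum over the finite-dimensional splitting simplex, and finally glue the component maximizers (which exist by Theorem \ref{propo:mvp} and Lemma \ref{lemma:riscalamento}) to produce $\rho$. The delicate point in your route --- excluding splittings with $M_i=E_i=0$ by playing the infinite slope of the mixing term $-M_i\ln M_i$ against the finite marginal cost given by Lemma \ref{lemma:derivate} --- is precisely the explicit version of the paper's assertion that $\rho$ has full support, and your boundary control on the simplex uses the first-kind asymptotics $\ms S_{\Lambda_i}(e)\le -ce+C$ for large $e$, which is where your argument (correctly) consumes the first-kind hypothesis. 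What the paper's approach buys is brevity, by delegating existence and full support to the known adaptation of \cite{CLMP2}; what yours buys is self-containedness, since the infinite-dimensional work is reused component-wise and all the new analysis is finite-dimensional. One small remark: as written you construct \emph{one} maximizer with the stated properties, while the theorem speaks of the restrictions of a maximizer in general; this is easily repaired, since by decoupling the splitting $(M_i,E_i)$ of an arbitrary maximizer must itself attain the supremum in \eqref{eq:SEMi}, hence is interior by your exclusion argument and satisfies the same Lagrange conditions, so the restrictions of any maximizer satisfy \eqref{eq:mfei} with common $\beta$ and $Z_i/M_i=Z$.
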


\begin{proof}
Easily  adapting the proof of
Theorem \eqref{propo:mvp} to $\Lambda$, it can be proved
that there exists $\rho \in P_E$ which maximize $\Cal S_\Lambda$, and that  
the support of $\rho$ is the whole set $\Lambda$. This allow us to
prove that the MVP in $P_{\Lambda_i,E_i,M_i}$ is solved by the restricted
density $\rho_i$ where $E_i$ and $M_i$ are the energy and the mass of $\rho_i$.
As a consequence, $\ms S_\Lambda(E)$ is given by \eqref{eq:SEMi}.
Again from 
Theorem
\ref{propo:mvp} and lemma \ref{lemma:riscalamento}, each $\rho_i$ solves
 \eqref{eq:mfeM} for some $\beta_i$. 
We conclude the proof by showing that $\beta_i$ 
and $Z_i/M_i$ do not depend on  $i$, 
which implies that the partition function in $\Lambda$ is 
$$Z = \sum_{j=1}^n  Z_j = \sum_{j=1}^n  \frac {Z_j}{M_j} M_j = \frac {Z_i}{M_i}, \qquad \forall i=1, \dots N$$
 and  $\rho$ solves the MFE \eqref{eq:mfe} with $\beta = \beta_i$.
 
Indeed, the maximum problem \eqref{eq:SEMi}
can be solved by looking for the critical points of
\begin{equation}
\label{eq:Lagrange_mult}
\sum_{i=1}^N \ms S_{\Lambda_i} (M_i, E_i ) -(\alpha - 1) 
\left( \sum_{i=1}^N M_i - 1\right)
- \beta \left( \sum_{i=1}^N E_i - E\right)
\end{equation}
where $\alpha$ and $\beta$ are Lagrange multipliers.
By deriving w.r.t.\  $E_i$ and $M_i$  the function \eqref{eq:Lagrange_mult}, using  Lemma \ref{lemma:derivate}, for any $i=1, \dots N$,
we get 
$$
\beta_i  = \pa_{E_i} \ms S_{\Lambda_i}(M_i,E_i) = \beta,\ \ \ 
\ln (Z_i/M_i)   = \pa_{M_i} \ms S_{\Lambda_i}(M_i,E_i) +1 = \alpha.
$$
\end{proof}

We remark that if $\rho$ is a probability  density with
energy $E$ and solves the
MFE \eqref{eq:mfe} with inverse temperature 
$\beta$, 
then the restricted density  $\rho_i$ with energy $E_i$
solves the equation \eqref{eq:mfei},
with $M_i = \int_{\Lambda_i} \rho$ and $Z_i = M_i Z$.
Conversely, if $\rho_i$ satisfies  \eqref{eq:mfei}
for some $\beta$, $\rho$ solves the MFE \eqref{eq:mfe},
if the ratios  ratios  $Z_i/M_i$  do not  depend on  $i$.

As for the case of a bounded  connected set, we indicate with 
$E_m$ the value of the energy for which it is achieved the maximum
of the entropy $\ms S_m = \log |\Lambda|$, corresponding to
$\beta = 0$ in \eqref{eq:mfe}.
In the sequel we focus on energies larger than $E_m$,
which is the case of  negative temperature $\beta <0$.
As we show in the next section, even if the domains
$\Lambda_i$ are of first kind,
the MVP for disconnected 
set $\Lambda$
may have solutions with inverse temperature $\beta \leq - 8\pi$.

\vskip.3cm
In order to solves the MVP for disconnected
domains, we need to find the solutions of MFE
and then compare the value of the entropy of the solutions, 
as in \eqref{eq:SEMi}.
We now show how to construct branches of solutions.
To simplify the notation, we express the thermodynamic quantities
for a first kind domain $\Lambda_i$ at inverse negative
temperature $\beta \in ( -8\pi,0]$, as a function of the
parameter
$\mu = - \beta/(8\pi)\in [0,1)$: if $\rho$ is the solution
of the MFE with mass one in $\Lambda_i$
\begin{equation}
  \label{def:ezmu}
  \mf e_i(\mu) = \ms E_{\Lambda_i} (\rho), \ \ \mf z_i(\mu) =
  \int_{\Lambda_i} \e^{-\beta \Psi}.
\end{equation}
so that, for the scaling properties \eqref{eq:EZ},
setting $\mu = -M\beta/(8\pi)$:
\begin{equation}
  \begin{aligned}
    &E_i(M,\beta) = M^2 \mf e_i(\mu) = \frac {(8\pi)^2}{\beta^2} \mu^2 \mf e_i(\mu)\\
    &Z_i(M,\beta) = \mf z_i(\mu)
  \end{aligned}
\end{equation}

\begin{proposition}[Construction of branches of solutions of the MFE]
  \label{propo:costruzione}
  
  In the hypothesis of Thm. \ref{teo:mvpdisconnessi},
  for any fixed
  $\gamma \in (0, \max_i \sup_{\mu\in(0,1)} \frac {\mu}{\mf z_i(\mu)})$,
  let $\mu_i\in (0,1)$ be the solutions of the equations
  $\mu_i = \gamma \mf z_i(\mu_i)$.
  Let $\tilde \rho_i$ be the unique solutions of the MFE
  \eqref{eq:mfe} in $\Lambda_i$
  with inverse temperature $-8\pi \mu_i$.
  Set
  $$\beta = -8\pi \sum_i \mu_i,\ \ M_i = -8\pi \mu_i/\beta, \ \
  \rho_i = M_i \tilde \rho_i.$$
  Then
  $\rho = \sum_i \rho_i(x) \fcr_{\Lambda_i}(x)$
  solves the MFE \eqref{eq:mfe} 
  in $\Lambda$, with energy and entropy given 
  respectively by
  $$
  \Cal E (\rho)  = \frac {(8\pi)^2}{\beta^2} \sum_i \mu_i^2 \mf e_i(\mu_i),
  \ \ \Cal S(\rho)  = -\log \gamma + \log \frac { \beta }{-8\pi} + 2\beta
  \Cal E(\rho).$$
\end{proposition}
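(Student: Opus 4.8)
The plan is to lean on the gluing criterion recorded in the remark following Theorem~\ref{teo:mvpdisconnessi}: a family of component densities $\rho_i$, each solving the restricted equation \eqref{eq:mfei} with one common inverse temperature $\beta$, assembles into a solution of the global MFE \eqref{eq:mfe} on $\Lambda$ precisely when the ratios $Z_i/M_i$ are independent of $i$. So the first thing I would do is check that the construction is well posed. Since $\mu_i\in(0,1)$ we have $\sum_i\mu_i>0$, hence $\beta=-8\pi\sum_i\mu_i<0$ and $M_i=-8\pi\mu_i/\beta=\mu_i/\sum_j\mu_j$; consequently $M_i>0$, $\sum_iM_i=1$, and $M_i\beta=-8\pi\mu_i$. (Each $\mu_i$ exists by the intermediate value theorem, since $\mu/\mf z_i(\mu)\to 0$ as $\mu\to0^+$ because $\mf z_i(0)=|\Lambda_i|$, together with the constraint imposed on $\gamma$.)

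Next I would verify the component equations and the ratio condition. By hypothesis $\tilde\rho_i$ solves the mass-one MFE in $\Lambda_i$ at inverse temperature $-8\pi\mu_i=M_i\beta$, say $\tilde\rho_i=-\lap\tilde\Psi_i=\tilde Z_i^{-1}\e^{-M_i\beta\tilde\Psi_i}$. Setting $\Psi_i=M_i\tilde\Psi_i$ and $\rho_i=M_i\tilde\rho_i$ gives $-\lap\Psi_i=\rho_i$ and $\e^{-\beta\Psi_i}=\e^{-M_i\beta\tilde\Psi_i}$, so that $\rho_i=\tfrac{M_i}{Z_i}\e^{-\beta\Psi_i}$ with $Z_i=\int_{\Lambda_i}\e^{-\beta\Psi_i}=\tilde Z_i=\mf z_i(\mu_i)$ by \eqref{def:ezmu}; this is exactly \eqref{eq:mfei} with the common $\beta$. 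The load-bearing step is then the ratio identity: using $M_i=\mu_i/\sum_j\mu_j$ and the defining relation $\mu_i=\gamma\mf z_i(\mu_i)$,
$$\frac{Z_i}{M_i}=\frac{\mf z_i(\mu_i)}{\mu_i}\sum_j\mu_j=\frac1\gamma\sum_j\mu_j,$$
which is independent of $i$. By the gluing remark, $\rho=\sum_i\rho_i\fcr_{\Lambda_i}$ therefore solves the MFE \eqref{eq:mfe} on $\Lambda$ with this $\beta$, and the global partition function is $Z=Z_i/M_i=\gamma^{-1}\sum_j\mu_j$.

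Finally I would compute the two thermodynamic quantities, using that both functionals are additive over the disjoint components: the Poisson problem \eqref{eq:psi} decouples, so $\Psi|_{\Lambda_i}=\Psi_i$ and both $\Cal E$ and $\Cal S$ split as sums over $i$. Energy is quadratic, hence $E_i=\Cal E(\rho_i)=M_i^2\,\mf e_i(\mu_i)$; inserting $M_i^2=(8\pi)^2\mu_i^2/\beta^2$ and summing gives the claimed $\Cal E(\rho)=\frac{(8\pi)^2}{\beta^2}\sum_i\mu_i^2\,\mf e_i(\mu_i)$. For the entropy I would apply \eqref{eq:SMZ} on each component, $\Cal S(\rho_i)=M_i\ln(Z_i/M_i)+2\beta E_i$, and sum; since $\ln(Z_i/M_i)=\ln Z$ is constant and $\sum_iM_i=1$, this collapses to $\Cal S(\rho)=\ln Z+2\beta\,\Cal E(\rho)$. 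Substituting $Z=\gamma^{-1}\sum_j\mu_j$ together with $\sum_j\mu_j=\beta/(-8\pi)$ gives $\ln Z=-\log\gamma+\log\frac{\beta}{-8\pi}$, which is the stated entropy formula.

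I do not expect a serious analytic obstacle here: once the gluing criterion is in hand the argument is essentially a bookkeeping of the scaling relations \eqref{eq:EZ} and \eqref{eq:SMZ}. The one genuinely delicate point is the ratio identity $Z_i/M_i=\gamma^{-1}\sum_j\mu_j$, where the precise shape of the equation $\mu_i=\gamma\mf z_i(\mu_i)$ is exactly what forces $i$-independence and thereby welds the disconnected pieces into a single MFE solution; choosing different roots $\mu_i$ of these equations for a given $\gamma$ is what generates the distinct branches of solutions.
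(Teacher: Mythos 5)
Your proof is correct and takes essentially the same route as the paper's: the paper's (one-line) argument likewise rests on the observation that $Z_i/M_i$ is independent of $i$ by construction, the gluing criterion from the remark after Theorem~\ref{teo:mvpdisconnessi}, the scaling relations \eqref{eq:EZ}, and formula \eqref{eq:SMZ}. You have simply made explicit the bookkeeping (the ratio identity $Z_i/M_i=\gamma^{-1}\sum_j\mu_j$ and the component-wise energy and entropy sums) that the paper leaves implicit.
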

\begin{proof} By construction $Z_i/M_i$ does not depends on $i$, so 
the proof is an easy consequence of the previous theorem, the scaling
properties \eqref{eq:EZ}, and \eqref{eq:SMZ} for $M=1$.\end{proof}

In the following section 
we specialize our analysis to the case of 
unions of
disjoint disks.

\section{The landscape of the branch of
  solutions for $N$ disks}
\label{sez:piudischi}

We recall that  for a disk  centered
in $x=0$ and radius $R$, with  area $a=\pi R^2$, the (unique)
solution of the equation \eqref{eq:mfe}
\begin{equation}
  \label{eq:psirho}
  \begin{aligned}
    &\Psi(x) = \frac 2{\beta} \ln \left(
      1+ \frac {\beta}{8\pi}\left(1 - \frac {|x|^2}{R^2}\right)\right),
    \ \ \rho(x) = \frac 1Z \frac 1{ \left(
        1+ \frac {\beta}{8\pi}(1 - \frac {|x|^2}{R^2})\right)^2},\\
    &\text{where} \ 
    Z = \frac {\pi R^2}{1+ \frac {\beta}{8\pi}},
  \end{aligned}
\end{equation}
under the necessary condition $\beta > -8\pi$.
The energy is
$$E= \frac {8\pi}{\beta^2}\left( \frac {\beta}{8\pi}
    - \ln \left( 1 + \frac {\beta}{8\pi}\right) \right).
$$
Then, the quantities defined in  \eqref{def:ezmu} are given by
$$\mf e(\mu) = \frac 1{8\pi} \frac 1{\mu^2} \big( \mu - \log (1-\mu) \big),
\ \ \mf z(\mu) = a/ (1-\mu).$$
Note that the energy $\mf e(\mu)$ does not depends on the area.

\vskip.3cm

Let us specialize  Proposition
\ref{propo:costruzione} to the case
of a domain $\Lambda$, union of $N$ disjoint disks, $D_{i}$,
$i=1,\dots N$, 
of area $a_i$, such that $1=a_1 \ge a_2 \ge \dots \ge a_N > 0$.
Fixed $\gamma$, we have to solve 
\begin{equation}
  \label{eq:mui}
  \mu_i (1-\mu_i) = a_i \gamma \qquad  i=1, \dots N.
\end{equation}
The solutions are
$\mu_i^\pm = \frac 12 \left( 1 \pm \sqrt{1 - 4a_i \gamma}\right)$,
which exist if and only if $\gamma \in [0,1/4]$.

Note that as $\gamma$ goes from $0$ to $1/4$, we have that $\mu_1^-$
increases from $0$ to $1/2$ and $\mu_1^+$ decreases from $1$ to $1/2$,
so that we can parametrize  with $\mu=\mu_1\in [0,1]$
all the other solution $\mu_i$, $i\neq 1$, as follows:
$$\mu_i^\pm = \frac 12\left( 1 \pm \sqrt{1- 4a_i \mu(1-\mu)}\right), \quad 
i = 2,\dots N.$$
Note that  $\mu_i^- \le \mu \le  \mu_i^+$.
Any choice of $\{\mu_i\}_{i\ge 2}$ among the $2^{N-1}$
possibilities  gives
different values $\{M_i,E_i\}_i$ and
different solution $\rho$ of the MFE \eqref{eq:mfe}. In this way, as $\mu$ varies
in $[0,1]$, 
we obtain  different branches
of solution, as we describe below.

\vskip.3cm
\subsection*{$k-$branches}
We fix a subset 
$I^+\subset \{2,\dots N-1\}$,
and its complementary
$I^-=\{2,\dots N-1\}\backslash I^+$:
if $i\in I^+$ we
choose the solution  $\mu_i^+$, if $i \in I^-$ we choose
the solution $\mu_i^-$. 
In this configuration, we put
in $D_{i}$, with $i\in I^+$, a mass greater than $\mu$, the mass in
$D_{1}$, and in $D_{i}$, with $i\in I^-$ a mass less than $\mu$.    

It is convenient to write all the thermodynamic quantities 
as functions of the parameter $\mu \in [0,1]$ as follows
\begin{equation}
  \label{eq:termo}
  \begin{aligned}
  \beta(\mu) & = -8\pi
  \left( \mu + \sum_{i\in I} \mu_i^+ + \sum_{i\in I^-} \mu_i^-
  \right), \\
  Z(\mu) &=
  \frac {-\beta(\mu)a_1}
  {8\pi(1-\mu)\mu} , \\
  E(\mu) &= \frac {8\pi}{\beta^2} \left(\mu^2 \mf e(\mu) +
    \sum_{i\in I} (\mu_i^+)^2\mf e (\mu_i^+) + \sum_{i\in I^-}
    (\mu_i^-)^2 \mf e (\mu_i^-)\right), \\
  S(\mu) &= \ln Z(\mu) + 2 \beta E(\mu).
\end{aligned}
\end{equation}
Note that
if $a_i < 1$ for all $i\ge 2$
all quantities are regular functions of $\mu\in (0,1)$.

For $|I^+| = k$ there are $\binom {N-1}k$
different branches of solutions of  the MFE \eqref{eq:mfe},  
that we call  $k-${\it branches}.
In particular there
exists only one $0-$branch, corresponding to $I^+=\emptyset$.
Note also that if $a_i$
are equal for some $i$, the functions $S(\mu)$,
$E(\mu)$, $\beta(\mu)$ 
can be the same on  different $k-$branches.

\begin{remark}
\label{rem:0_branch}
It is easy to show that  on  the $k-$branches with $k\ge 1$ we 
have  always $\beta < -8\pi$.  
\end{remark}


%


It is useful to extend the definition of domain of
first and second kind to disconnected set.
In the case of union of disconnected disks,
we say that
$\Lambda$ is a \emph{first kind set} if
the $0-$branch entirely lies in the region $\beta > -8\pi$;
if otherwise, 
$\Lambda$ is a \emph{second kind set}. 

\begin{proposition}[Two disks]
  If $n=2$ and $a_2\in (0,1)$, $\Lambda$ 
  is of the first
  kind, $\ms S$ 
  is a  concave function, decreasing for $E \geq E_m$,
  and there is equivalence of ensembles.
\end{proposition}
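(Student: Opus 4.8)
The plan is to reduce both variational principles to a one–dimensional problem for the mass split between the two disks, to prove that this auxiliary problem has a \emph{unique} maximizer for every $\beta\in(-8\pi,0)$, and then to read off concavity, monotonicity and equivalence of ensembles from Legendre duality exactly as in Theorem \ref{thm:CVP}. First I would settle the first–kind property. On the $0$-branch \eqref{eq:termo} (here $N=2$, $I^+=\emptyset$) one has $\beta(\mu)=-8\pi\,s(\mu)$ with $s(\mu)=\mu+\mu_2^-(\mu)$ and $\mu_2^-(\mu)=\tfrac12\bigl(1-\sqrt{1-4a_2\mu(1-\mu)}\bigr)$. By definition $\Lambda$ is of first kind iff $s(\mu)<1$ on $(0,1)$, i.e. $\sqrt{1-4a_2\mu(1-\mu)}>2\mu-1$; this is trivial for $\mu\le\tfrac12$, and for $\mu>\tfrac12$ squaring collapses it to $a_2\mu(1-\mu)<\mu(1-\mu)$, true precisely because $a_2<1$. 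Hence the $0$-branch lies in $\beta\in(-8\pi,0)$, so $\Lambda$ is of first kind; moreover $\mf e(\mu)\to+\infty$ as $\mu\to1$ (where $s\to1$, $\beta\to(-8\pi)^+$), so its energy is unbounded and $E_{-8\pi}=+\infty$.

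Next, fix $\beta\in(-8\pi,0)$ and set $s=-\beta/8\pi\in(0,1)$. Since each disk is of first kind and $M_i\beta\in(-8\pi,0)$, the sub-problem in $\Lambda_i$ has, for every mass $M_i$, a unique solution with $\mu_i=M_is$, so the CVP reduces to maximizing $\Phi(M_1)=f_1(\beta,M_1)+f_2(\beta,1-M_1)$ over $M_1\in(0,1)$, where $f_i$ is the mass-$M_i$ free energy in $\Lambda_i$. By Lemma \ref{lemma:derivate} one gets $\partial_M f_i=-\tfrac1\beta\bigl(\ln(Z_i/M_i)-1\bigr)$, so the critical points are exactly the matching configurations $Z_1/M_1=Z_2/M_2$ of Theorem \ref{teo:mvpdisconnessi}. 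With $Z_i=a_i/(1-\mu_i)$ and $\mu_1+\mu_2=s$ this becomes the quadratic
\[
(1-a_2)\mu_1^2+(1+a_2-2s)\mu_1-s(1-s)=0,
\]
whose leading coefficient $1-a_2$ is positive and whose product of roots $-s(1-s)/(1-a_2)$ is negative; hence there is a single positive root, and since the left–hand side is negative at $\mu_1=0$ and equals $s\,a_2(1-s)>0$ at $\mu_1=s$, that root lies in $(0,s)$. Thus $\Phi$ has exactly one interior critical point, necessarily on the $0$-branch (the $1$-branch has $s>1$).

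The main obstacle is the second–order check. A direct computation gives $\partial_M^2 f_i=-\tfrac1\beta\,\frac{2\mu_i-1}{M_i(1-\mu_i)}$; using $M_i(1-\mu_i)=\mu_i(1-\mu_i)/s$ together with the matching relation $\mu_2(1-\mu_2)=a_2\mu_1(1-\mu_1)$, the sign of $\Phi''$ at the critical point reduces to that of
\[
(1-s)\,(2\mu_1\mu_2-\mu_1-\mu_2).
\]
The second factor is always negative, and the first is positive \emph{exactly} because $\Lambda$ is of first kind ($s<1$); so $\Phi''<0$. This is the heart of the matter: first-kindness is precisely what makes the unique matching point a strict local maximum. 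Since $f_i\to0$ as $M_i\to0$, $\Phi$ is finite at the endpoints, so a single interior local maximum must be the global maximum; hence for every $\beta\in(-8\pi,0)$ the CVP has a unique solution, lying on the $0$-branch.

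Finally I would conclude by duality. The map $\beta\mapsto\beta\ms F(\beta)=\inf_{\rho}\bigl(\beta\Cal E(\rho)-\Cal S(\rho)\bigr)$ is concave as an infimum of functions affine in $\beta$, and uniqueness of the minimizer makes it $C^1$ with $(\beta\ms F)'(\beta)=E(\beta)$ continuous and nonincreasing; combined with the first paragraph, $E(\cdot)$ is a decreasing bijection of $(-8\pi,0)$ onto $(E_m,+\infty)$. Then $\ms S(E)=\inf_\beta\bigl(\beta E-\beta\ms F(\beta)\bigr)$ is concave and smooth with $\ms S'(E)=\beta(E)$, decreasing for $E>E_m$, and for each such $E$ the infimum is attained at $\beta(E)\in(-8\pi,0)$, so the MVP maximizer coincides with the CVP solution: equivalence of ensembles holds and the $1$-branch (where $\beta<-8\pi$) is never selected, so there is no phase transition. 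The regime $E<E_m$ ($\beta>0$) is entirely analogous and simpler, the sub-problems being unconditionally uniquely solvable.
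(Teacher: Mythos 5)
Your proposal is correct, but it follows a genuinely different route from the paper's own proof, which is a two-line sketch: the paper observes that on the $0$-branch one has $\beta>-8\pi$ and that $E(\beta)$ is decreasing and diverges as $\beta\to(-8\pi)^+$, and then asserts that Theorem \ref{thm:CVP} ``can be easily extended'' to this case. You verify the same first-kind property (essentially the same computation), but where the paper relies on the branch landscape of Section 3 (Remark \ref{rem:0_branch} excluding the $1$-branch for $\beta>-8\pi$, monotonicity of $E(\beta)$ along the $0$-branch), you instead fix $\beta\in(-8\pi,0)$ and analyze the canonical problem directly: reduction to the one-dimensional mass-split functional $\Phi$, the matching condition $Z_1/M_1=Z_2/M_2$ rewritten as a quadratic in $\mu_1$ with exactly one admissible root, and the second-derivative computation showing $\Phi''<0$ there. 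This is the valuable difference: attainment and uniqueness of the CVP maximizer on a \emph{disconnected} domain is not covered by the uniqueness results cited for connected domains (\cite{S}, \cite{BaLin}), so it is precisely the point that the paper's ``easy extension'' glosses over, and your quadratic-plus-concavity argument supplies it. The algebra checks out: with the matching relation and $\mu_1+\mu_2=s$, the sign of $\Phi''$ indeed reduces to that of $(1-s)(2\mu_1\mu_2-\mu_1-\mu_2)<0$, which isolates cleanly why $a_2<1$ together with $\beta>-8\pi$ precludes any phase transition for two unequal disks. The final duality step (concavity of $\beta\mapsto\beta\ms F(\beta)$ as an infimum of affine functions, $C^1$ regularity from uniqueness of the maximizer, $\ms S(E)=\inf_\beta\bigl(\beta E-\beta\ms F(\beta)\bigr)$) is the same machinery both proofs invoke; like the paper, you leave the standard Danskin-type differentiability argument and the strict (rather than weak) monotonicity of $E(\beta)$ as routine verifications, which is acceptable here since all quantities are explicit for disks.
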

\begin{proof}
  On the $0-$branch $\beta  > -8\pi$ and 
  $E(\beta)$ is a decreasing function,
  diverging when $\beta \to -8\pi^+$.
  This fact can be used to easily extend 
  the Theorem \ref{thm:CVP}  to this case.
\end{proof}
Note that in the case of two identical disks
the set is of the second kind.
For $N\ge 3$ we can have both first or second kind sets, depending
on the values of $a_i$,  as we show below.

\begin{figure}[ht]
  \hskip-1cm\parbox{15cm}{
    \includegraphics[scale=0.5]{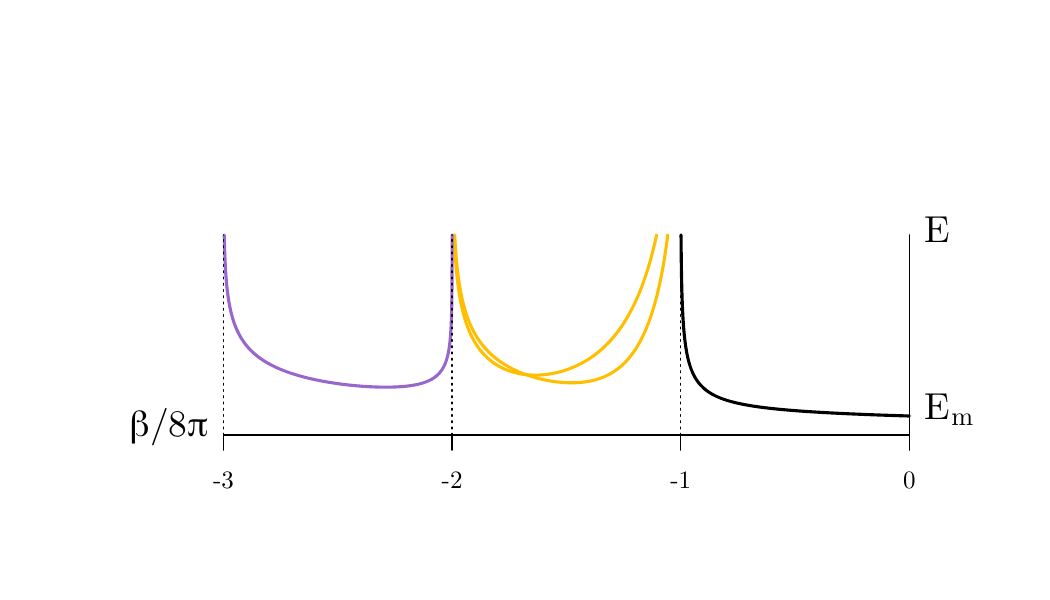}\hskip0.7cm
    \includegraphics[scale=0.5]{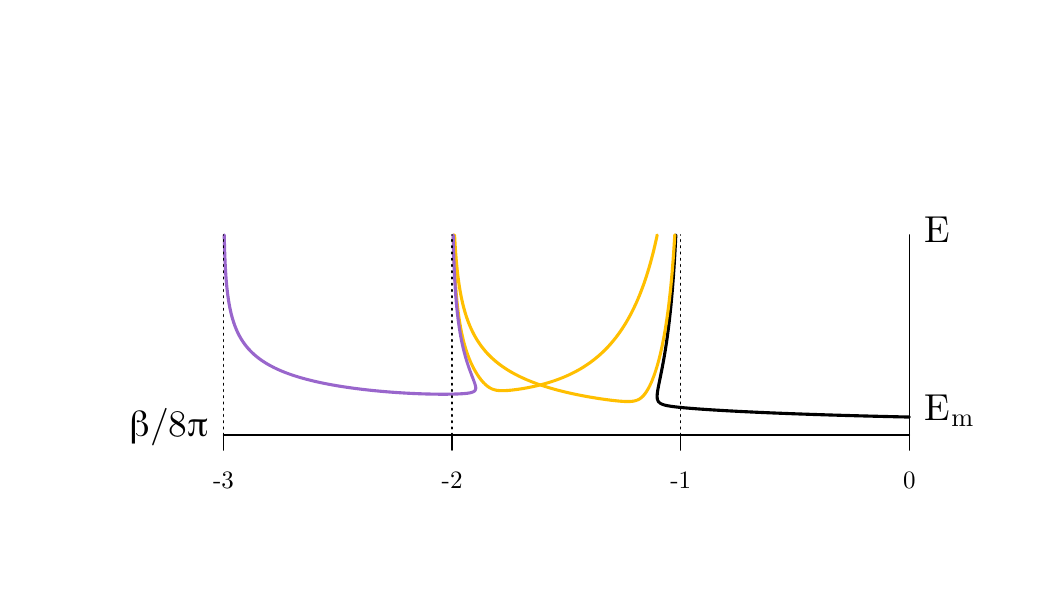}
  }\\ \vskip.5cm
  \hskip-1cm\parbox{15cm}{ \hskip.5cm
    \includegraphics[scale=0.45]{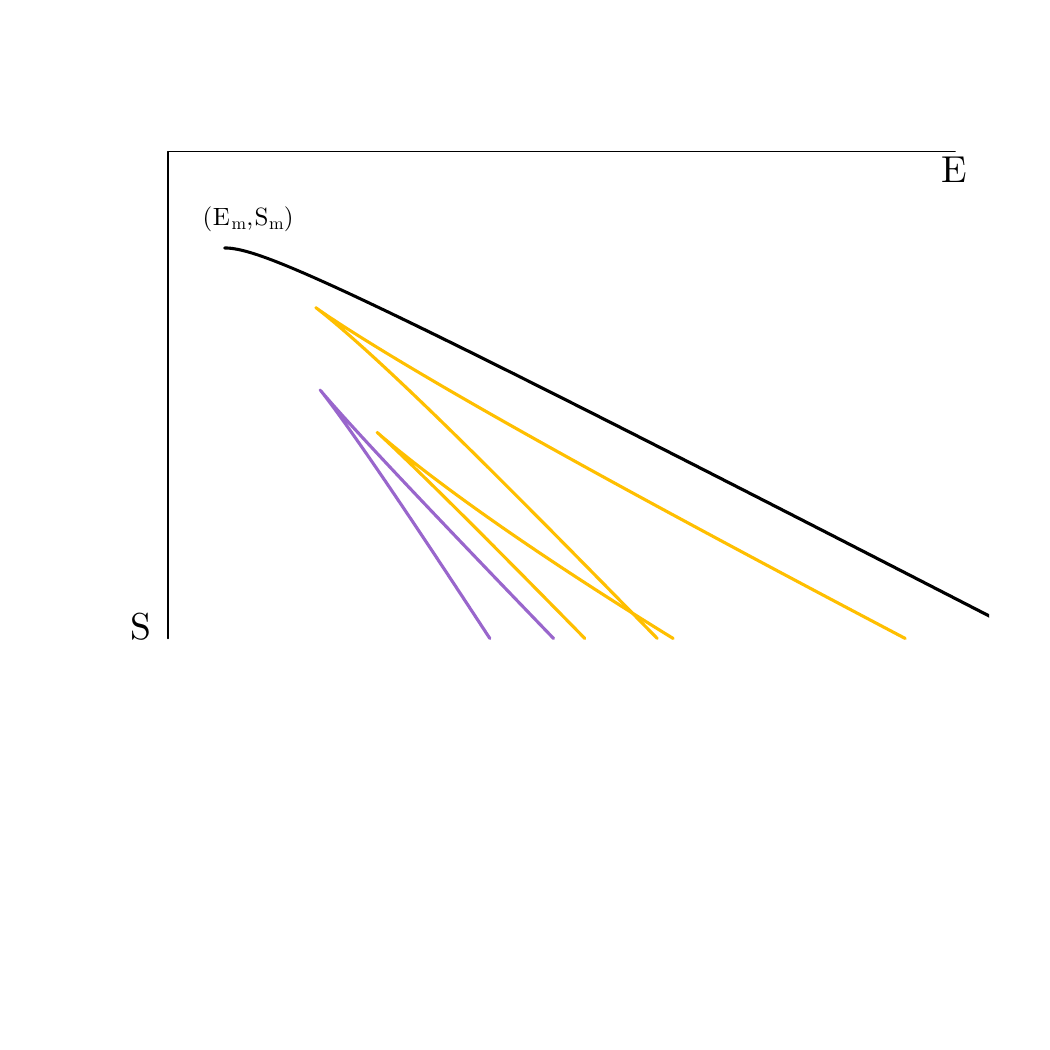}\hskip0.8cm
    \hskip.5cm
    \includegraphics[scale=0.45]{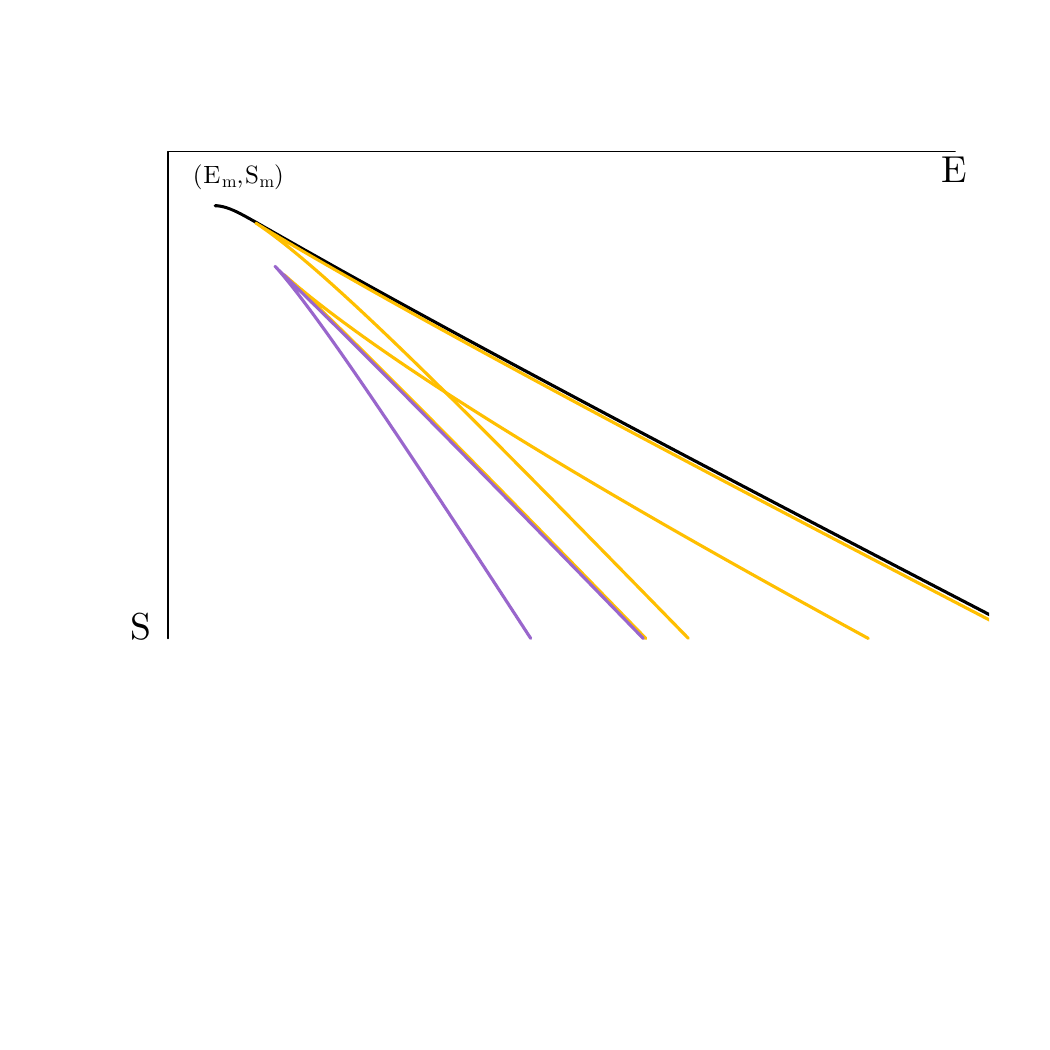}
  }
  \vskip-1cm
  \caption{\small The branches for $n=3$, in the plane $(\beta, E)$
    (graphs above), and in the plane $(E,S)$ (graphs below).
    On the left, $a_2 = 0.6$, $a_3=0.2$; $\Lambda$ is of the
    first kind. On the right, $a_2 = 0.9$, $a_3 = 0.5$; $\Lambda$
    is of the second kind.
  }
  \label{fig:tre12}
\end{figure}

Even if  $S,\,E,\,\beta$ are quite simple  functions  of the variable $\mu$, 
the precise
behavior of the branches 
is not really easy to study  except, 
as we will see later, in the particular  case  $a_i = 1$, 
for all $i=1, \dots N$. So that let us 
give a numerical picture.
If $\mu \to E(\mu)$ is invertible, we write $S(E)$ instead of $S(\mu(E))$,
as well as  if $\mu \to \beta(\mu)$ is invertible we write $E(\beta)$ instead
of $E(\mu(\beta))$.  
In fig. \ref{fig:tre12} we show an example of the different branches
both in the $(\beta,E)$ and  $(E,S)$ planes. 
In particular, on the left, we see that   $E(\beta)$ is
convex on all the $k-$branches, with $k\ge 0$. Moreover, 
since $\pa_E S = \beta$, $\pa_E^2 S = 1/ \pa_\beta E$, on
the increasing part of $E(\beta)$, the entropy $S(E)$ is convex and on
the decreasing part is concave.  On the $0-$branch, $S(E) = \ms S(E)$
is concave then $\Lambda$ is of the first kind.
On the right, 
$\Lambda$ is of the second kind, 
$\ms S(E)$ is concave up to the value $E$
corresponding to the turning point of the $0$-branch
in the plane $(\beta,E)$, in which 
$\beta<-8\pi$ reaches its minimum; then $\ms S(E)$ becomes convex.

Finally, by using \eqref{eq:termo},
it is not difficult to prove that
\begin{theorem}[First kind]
  \label{teo:fistkind}
If $a_i$ are sufficiently small for $i=2, \dots N$ then   $\Lambda$ 
 is of the first kind.
\end{theorem}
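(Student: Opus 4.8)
The plan is to work entirely on the $0$-branch, where $I^+=\emptyset$ and hence $\mu_i=\mu_i^-(\mu)$ for every $i\ge2$, and to show that the smallness of the $a_i$ forces $\beta(\mu)>-8\pi$ for all $\mu\in(0,1)$. By the definition of first kind set for unions of disks, this is exactly what has to be proved. From \eqref{eq:termo} with $I^+=\emptyset$ we have
\[
-\frac{\beta(\mu)}{8\pi}=\mu+\sum_{i=2}^N\mu_i^-(\mu),\qquad
\mu_i^-(\mu)=\frac12\Bigl(1-\sqrt{1-4a_i\,\mu(1-\mu)}\Bigr),
\]
so the condition $\beta(\mu)>-8\pi$ is equivalent to the inequality $\mu+\sum_{i=2}^N\mu_i^-(\mu)<1$ for $\mu\in(0,1)$.

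The key estimate is an elementary bound on each $\mu_i^-$. Since $\sqrt{1-x}\ge 1-x$ for $x\in[0,1]$, writing $x=4a_i\mu(1-\mu)\in[0,1]$ (recall $a_i\le a_1=1$ and $\mu(1-\mu)\le1/4$, so indeed $x\le 1$) gives $1-\sqrt{1-x}\le x$, hence
\[
\mu_i^-(\mu)\le 2\,a_i\,\mu(1-\mu),
\]
with strict inequality for $\mu\in(0,1)$ and $a_i>0$. Summing over $i\ge2$ and setting $A=\sum_{i=2}^N a_i$, I then obtain
\[
\mu+\sum_{i=2}^N\mu_i^-(\mu)< \mu+2A\,\mu(1-\mu)=1-(1-\mu)\bigl(1-2A\mu\bigr).
\]

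To finish I would impose the quantitative smallness condition $A=\sum_{i=2}^N a_i\le 1/2$, which is what ``$a_i$ sufficiently small for $i=2,\dots,N$'' means here (with $N$ fixed, $\sum_{i\ge2}a_i\to0$ as the $a_i\to0$). Then $1-2A\mu\ge1-2A\ge0$ and $1-\mu>0$ on $(0,1)$, so the right-hand side is $\le1$ and, because the square-root bound above is strict, the left-hand side is strictly less than $1$; thus $\beta(\mu)>-8\pi$ throughout $\mu\in(0,1)$ and $\Lambda$ is of the first kind. As a consistency check, letting $\mu\to1^-$ gives $\mu_i^-(\mu)\to0$, so $\beta(\mu)\to-8\pi$ while $E(\mu)\to+\infty$ (because $\mf e(\mu)\to+\infty$), exactly the first-kind behavior of a single disk. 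There is no genuine analytic obstacle here: the only points requiring care are turning the qualitative hypothesis into a usable threshold on $A$ and checking that the crude concavity estimate for the square root is strong enough. The sharp threshold is not captured by this bound, but for a sufficiency statement the estimate above is all that is needed.
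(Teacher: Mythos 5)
Your proof is correct and takes exactly the route the paper intends: the paper omits the argument entirely (stating only that the result follows ``by using \eqref{eq:termo}''), and your verification is the natural way to fill it in, working on the $0$-branch and using the paper's definition of a first kind disconnected set. The elementary bound $\mu_i^-\le 2a_i\mu(1-\mu)$, giving $\beta(\mu)>-8\pi$ for all $\mu\in(0,1)$ under the explicit threshold $\sum_{i\ge 2}a_i\le 1/2$, is a valid quantitative instantiation of ``$a_i$ sufficiently small''.
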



\vskip.3cm

The solution $\ms S(E)$ of the  MVP
has a complex behavior when  the $a_i$ are near 1.
In order to show this fact,
we start by  analyzing  the degenerate case,
namely when all the disks have the same area.


\vskip.3cm

\subsection*{$k-$merged-branches}

We consider  
$a_i=1$ for all $i=1,\dots N$.
As $\gamma$ goes form $0$ to  $1/4$,
all  $\mu_i^- $  are equal and  increase  from 
$0$ to $1/2$
and all  $\mu_i^+ $ are equal and  decrease from $1$ to $1/2$,
moreover all the $k$-branches coincide.
It is convenient to re-parametrize the branches as follows.  We set
$\mu_i^- = \mu $ and $\mu_i^+ = 1 - \mu$; extending the values of
the parameter from $\mu \in [0,1/2]$ to $\mu \in [0, 1]$, we get that
the $k$-branch (described by $\mu \in [0,1/2]$ ) ``merge'' with the
$(n-k)$-branch (described by $\mu \in [1/2,1]$) in a unique branch of
solutions, that we call $k-${\it merged-branch},
Namely we get a branch of solutions of the MFE \eqref{eq:mfe}
for which  we choose $(n-k)$ disks with  
$\mu_i = \mu$,  $k$ disks with $\mu_i =  1- \mu$  and  $\mu \in [0,1]$. Clearly 
the $k$ and the $(n-k)$-merged-branch  coincide,
so that  we consider only  $ k\le n/2$.

The thermodynamic quantities in terms of the variable $\mu \in (0,1) $ are
now given by 
\begin{equation}
  \label{eq:termo2}
\begin{aligned}
  \beta(\mu) & = -8\pi \left( (n-k)\mu + k(1-\mu) \right),\\
   Z(\mu) &= \frac {-\beta(\mu)}{8\pi(1-\mu)\mu},\\
   E(\mu) & = \frac {8\pi}{\beta^2} \left( (n-k)\mu^2\mf e(\mu) + k(1-\mu)^2
     \mf e(1-\mu)\right),\\
  S(\mu) &= \ln Z(\mu) + 2 \beta(\mu) E(\mu).
\end{aligned}
\end{equation}
Note that all the merged-branches have a common solution,
reached for $\mu =1/2$, for which
$\beta = \beta_{c} \coloneqq  -4\pi n$, of energy $E_c$ and
entropy $S_c$.

In the $(\beta,E)$ plane, the merged branches are regular,
while all the $k-$branches loose their regularity in
$\beta=\beta_c$. For example, 
the $0-$branch is given by the union of  the restriction
to  the region $\beta \geq \beta_c$ of 
the $0-$merged branch and the 
$1-$merged branch (see  figure \ref{fig:merged}),

\begin{figure}[h]
  \includegraphics[scale=0.45]{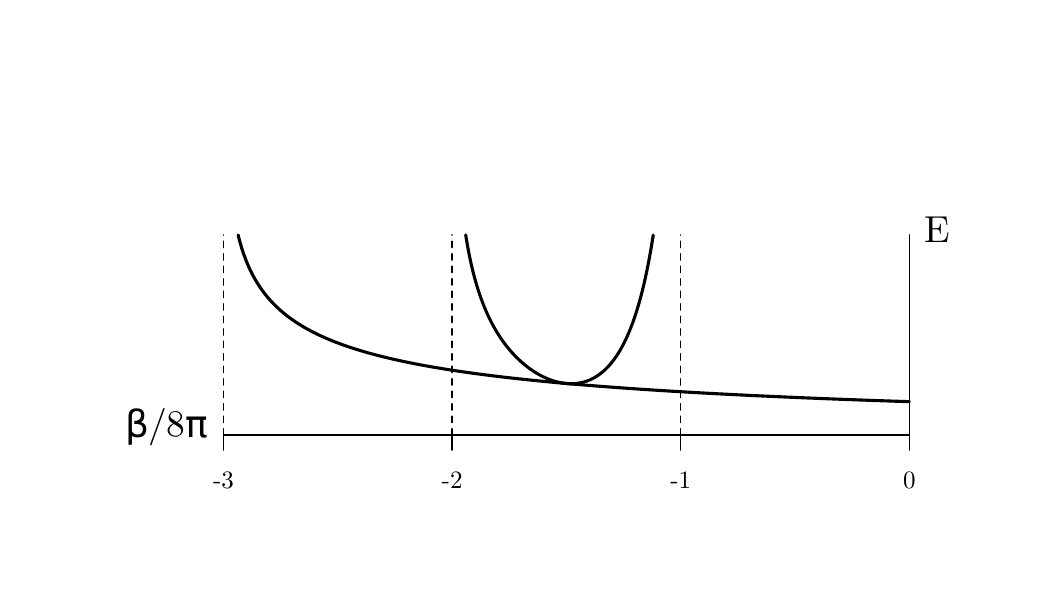}\hskip1cm
  \includegraphics[scale=0.45]{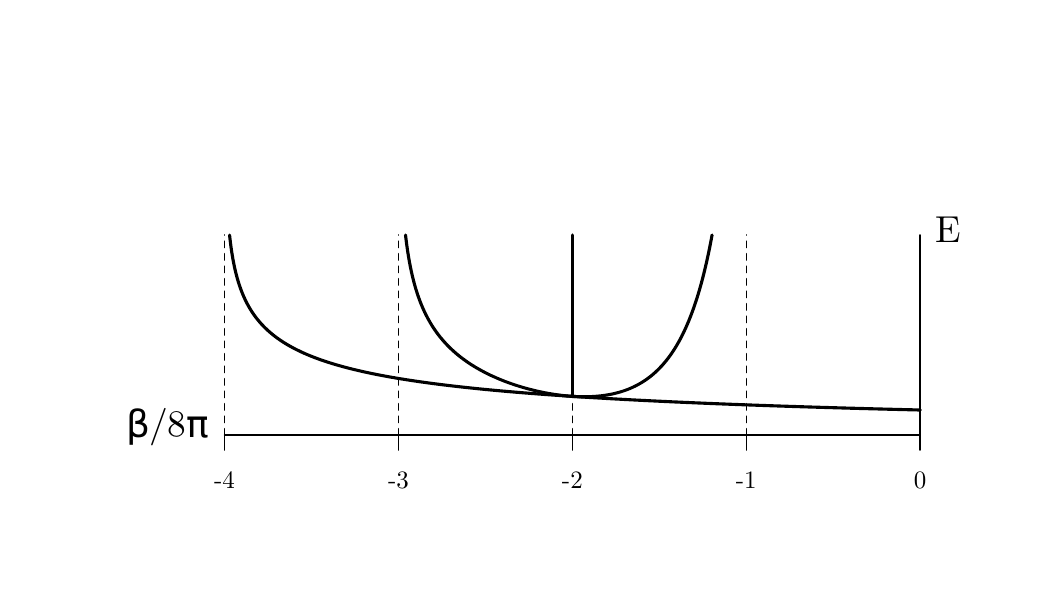}
  \caption{\small The merged-branches for $n=3$ and $n=4$. Any merged-branch
    is union of part of different branches.
    If $n$ is even, all the solutions of the $n/2-$merged-branch
    have the same $\beta = \beta_c$.}
  \label{fig:merged}
\end{figure}

\section{Phase transitions}

In this section we exhibit disconnected sets for which
the solution of the MVP jumps from one point to another
of the branches. As a consequence, $\pa_E \ms S(E)$ is discontinuous.
In this sense, $\ms S(E)$ has a first order phase transition.

We analyze two cases: in the first, $\Lambda$ is an union of
disks, and a phase transition occurs at low energy;
in the second, $\Lambda$ is an union of $N$ deformed disks,
and $\ms S(E)$ has $N$ phase transition
for large value of the energy.

\subsection{Low energy phase transitions}

In this subsection, $\Lambda$ is an union of disks, for which we
constructed the 
branches of solutions  in the previous section.

\begin{theorem}[The branch of $\ms S(E)$]
  \label{teo:0branch}
  If $a_i \in (0,1)$ for $i=2,\dots N$,
  the solution of the MVP lies on the $0-$branch.
\end{theorem}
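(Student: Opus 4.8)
The plan is to reduce the statement to a comparison among the branches built in Section \ref{sez:piudischi} and to single out the $0$-branch by a rearrangement (exchange) argument. First I would invoke Theorem \ref{teo:mvpdisconnessi}: for every $E>0$ the MVP on $\Lambda$ admits a maximizer $\rho$ with full support, whose restrictions solve the rescaled MFE \eqref{eq:mfeM} with a common $\beta$ and a common ratio $Z_i/M_i$. Hence the maximizer is necessarily one of the $k$-branches parametrized in \eqref{eq:termo}, and it suffices to prove that no $k$-branch with $k\ge1$ can attain the supremum in \eqref{eq:SEMi}; by existence of the maximizer, the maximizing branch is then forced to be the $0$-branch.

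The key observation is a clean splitting of the per-disk entropy. Since $\mf e(\mu)$ does not depend on the area while $\mf z(\mu)=a_i/(1-\mu)$, the single-disk entropy decomposes as $\ms S_{D_i}(E)=\ln a_i+s(E)$, where $s$ is the \emph{universal} (area-independent) profile $s(E)=-\ln(1-\mu(E))-16\pi\,\mu(E)\,\mf e(\mu(E))$ and $E\mapsto\mu(E)$ is the area-independent inversion of $\mf e$. Feeding this into Lemma \ref{lemma:riscalamento}, for any admissible split $\{(M_i,E_i)\}$ with $\sum_i M_i=1$ and $\sum_i E_i=E$ one gets
\begin{equation*}
  \sum_{i=1}^N \ms S_{D_i}(M_i,E_i)=\sum_{i=1}^N M_i\ln a_i+\Psi\big(\{(M_i,E_i)\}\big),\qquad
  \Psi=\sum_{i=1}^N\big(M_i\, s(E_i/M_i^2)-M_i\ln M_i\big).
\end{equation*}
The decisive feature is that $\Psi$ is symmetric under relabeling the disks, since it depends only on the pairs $(M_i,E_i)$ and not on the areas; thus the entire area-dependence of the functional to be maximized in \eqref{eq:SEMi} sits in the single linear term $\sum_i M_i\ln a_i$.

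Next I would run the exchange argument. On any $k$-branch with $k\ge1$ there is a disk $j\ge2$ carrying the root $\mu_j^+$; since $\mu_j^+\ge\mu=\mu_1$ and in fact strictly (the one elementary inequality to verify, separately for $\mu<1/2$ and $\mu\ge1/2$), its mass $M_j=\mu_j^+/\sum_\ell\mu_\ell$ strictly exceeds the mass $M_1$ of the unit disk, while $a_j<1=a_1$. Swapping the whole pairs $(M_1,E_1)\leftrightarrow(M_j,E_j)$ between disks $1$ and $j$ leaves $\Psi$ and both constraints unchanged, but changes the area term by $-(M_j-M_1)\ln a_j>0$. The swapped split is an admissible competitor in \eqref{eq:SEMi}, so its strictly larger value shows the $k$-branch value to be below $\ms S_\Lambda(E)$; hence no branch with $k\ge1$ is maximizing. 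On the $0$-branch, instead, one has $\mu_i^-\le\mu_1$ for all $i\ge2$, so the masses are already sorted in the same order as the areas and the rearrangement inequality cannot improve the area term, consistently with optimality. Combined with the first paragraph, this forces the maximizer onto the $0$-branch.

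The main obstacle, and where I would concentrate, is the splitting step rather than the exchange step: one must check that the universal profile $s$ is genuinely area-independent and defined on the whole relevant range — each $D_i$ being of first kind, its local parameter stays in $(0,1)$, i.e. $M_i\beta\in(-8\pi,0)$, so the explicit representation \eqref{eq:psirho} always applies — and that the decomposition is preserved by the mass rescaling of Lemma \ref{lemma:riscalamento}. The only remaining technical point is the strict inequality $\mu_j^+>\mu_1$, which together with Remark \ref{rem:0_branch} (every $k\ge1$ branch lies in $\beta<-8\pi$, so only the $0$-branch is available for $\beta\in(-8\pi,0]$) closes all cases, including those with several equal areas $a_i<1$, since only the exchange with the unit disk $D_1$ is ever needed.
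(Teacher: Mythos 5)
Your proof is correct and follows essentially the same route as the paper: both reduce the maximization in \eqref{eq:SEMi} to a relabeling-symmetric functional plus the area term $\sum_i M_i\ln a_i$, and then use the mass--area exchange argument (larger mass must sit on the larger disk, which fails on every $k$-branch with $k\ge 1$ since some $\mu_j^+>\mu_1$) to force the maximizer onto the $0$-branch. The only cosmetic difference is that you derive the splitting from the explicit disk formulas \eqref{eq:psirho}, whereas the paper obtains it from the scaling identity \eqref{eq:area}.
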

\begin{proof}
  Note that, in general,  if $\tilde \Lambda = L^{-1}  \Lambda$, with $L>0$,
  then
  \begin{equation}
    \label{eq:area} 
    \ms S_{\tilde \Lambda} (M,E)  -  M \log |\tilde \Lambda| =
    \ms S_{\Lambda} (M,E) - M \log |\Lambda|.
  \end{equation}
  Indeed, 
  if $\rho$ is a probability density
  with support in $\Lambda$,  mass $M$ and energy $E$,
  then $\tilde \rho (x) = L^2 \rho (Lx)$ 
  has the same  mass $M$ and energy $E$ of $\rho$.
  As a consequence, if  $\rho$ maximize  
  $\Cal S_\Lambda$, then
  $\tilde \rho$  maximize 
  $\Cal S_{\tilde \Lambda}$, hence
  $$\ms S_{\tilde \Lambda} (M,E) =
  - \int_{\tilde \Lambda} L^2 \rho(Lx) \log \rho(Lx)
   -\int_{\tilde \Lambda}
   \tilde \rho \log L^2
   = \ms S_{\Lambda} (M,E) - M \log L^2,$$
   where  $L^2 =| \Lambda |  /  |\tilde \Lambda|$.

   We now rewrite the entropy \eqref{eq:SEMi}
   using  the 
   scaling properties \eqref{eq:area}:
   $$\ms S_\Lambda(E) = \sum_{i=1}^N \ms S_{D_{a_i}}(M_i,E_i) = 
   \sum_{i=1}^N \ms S_{D_{a_1}}(M_i,E_i) + \sum_{i=1}^N M_i \log a_i,$$
   hence we get that if $a_i < a_j$ then $M_i \leq M_j$. Indeed, if
   $M_i > M_j$, by exchanging the mass and the energy between the disks
   $D_{a_i}$ and $D_{a_j}$ the entropy increases and we get a
   contradiction since $\ms S_\Lambda(E) $ is the maximum over all the
   possible configurations.  In particular, we conclude that
   $M_i \le M_1$ for all $i\ge 2$, condition fulfilled only on the
   $0-$branch.
 \end{proof}

 \vskip.3cm

 \begin{theorem}[First order phase
   transition of the entropy]
   \label{teo:le}
   If $N\ge 3$ and for all $i =  2, \dots N$, $a_i$ is
   sufficiently close to $1$, then 
   there exists $E_* > E_m$
   such that
   $$\beta^-=\dde {\ms S{\phantom{^\pm}}}{E^-}(E_*) <  \beta^+ = 
   \dde {\ms S{\phantom{^\pm}}}{E^+}(E_*).$$
\end{theorem}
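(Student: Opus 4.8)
The plan is to use Theorem~\ref{teo:0branch}: since $a_i\in(0,1)$, the MVP maximizer lies on the $0$-branch, so $\ms S_\Lambda(E)=\max\{S(\mu):E(\mu)=E\}$ with $S,E,\beta$ the $0$-branch quantities of \eqref{eq:termo}. A first order transition appears exactly when $\mu\mapsto E(\mu)$ fails to be monotone: if the $0$-branch folds, on some interval $(E_{\min},E_{\max})$ several solutions share the same energy, $\ms S$ is the upper envelope of the corresponding arcs, and where the maximizing arc switches the derivative $\beta=\pa_E\ms S$ jumps. I would thus reduce the statement to two facts: (a) for $N\ge3$ and $a_i$ near $1$ the energy $E(\mu)$ is non-monotone along the $0$-branch; and (b) the two outer arcs cross transversally, the high-energy winner being the symmetry-broken configuration, which forces $\beta^-<\beta^+$.

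The organizing center is the degenerate case $a_i=1$, where the $0$-branch is the concatenation of the symmetric ($\mu\le 1/2$) and disk-$1$-dominant ($\mu\ge 1/2$) merged branches, meeting at $\mu=1/2$ with $\beta=\beta_c=-4\pi N$ and energy $E_c>E_m$. There $\beta(E)$ is $V$-shaped with minimum $\beta_c$ and $\ms S$ is only $C^1$ (concave meeting convex): the system is marginal, poised for a transition but not yet showing one. I then switch on $a_i=1-\delta_i$, taking for simplicity all $\delta_i=\delta>0$ small (the general case is analogous), and expand \eqref{eq:termo} near $\mu=\tfrac12+s$ using $\mu_i^-=\tfrac12\bigl(1-\sqrt{\delta+4(1-\delta)s^2}\bigr)$. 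With the scaling $s=\sqrt\delta\,\sigma$ one gets $\beta=\beta_c+4\pi\sqrt\delta\,g(\sigma)+O(\delta)$ with $g(\sigma)=-2\sigma+(N-1)\sqrt{1+4\sigma^2}$ strictly convex, possessing a unique interior minimizer $\sigma^\ast$ precisely for $N\ge3$ (the critical equation degenerates when $N=2$, in agreement with the two-disk proposition).

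The crux is that along the $0$-branch the energy is, to leading order, slaved to $\beta$ through the same $g$: a short computation gives $E=E_c\bigl(1-c_0\sqrt\delta\,g(\sigma)+O(\delta)\bigr)$ with $c_0=\tfrac{\mf e'(1/2)}{2N\,\mf e(1/2)}>0$, so that $\tfrac{dE}{d\sigma}=-c\sqrt\delta\,g'(\sigma)$ with $c>0$. Since $\mf e$ is increasing, $E(\mu)$ has a local maximum exactly at $\sigma^\ast$, i.e. the branch folds; as $E\to\infty$ when $\mu\to1$ a second fold (a local minimum) follows, so on a small interval $(E_{\min},E_{\max})$ near $E_c$ there are three solutions. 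Below $E_{\min}$ only the low-$\mu$, nearly symmetric arc survives, above $E_{\max}$ only the high-$\mu$, disk-$1$-dominant arc; by continuity of $\ms S$ their entropies coincide at some $E_\ast\in(E_{\min},E_{\max})$, necessarily $>E_m$, and the maximizer switches there. The elementary sign argument then gives $\beta^-<\beta^+$: if $S_{\mathrm{low}}-S_{\mathrm{high}}$ changes from $+$ to $-$ across $E_\ast$, its $E$-derivative $\beta^--\beta^+$ is $\le0$ there.

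I expect the genuine obstacle to be the strictness, i.e. that the crossing at $E_\ast$ is transversal rather than tangential. To leading order in $\sqrt\delta$ the two outer arcs coincide (both governed by the same $g$), so the difference $S_{\mathrm{low}}-S_{\mathrm{high}}$ on the overlap is of higher order, and its simple sign change must be read off the next term in the expansion---exactly the term distinguishing disk $1$ from the others, which vanishes in the symmetric degenerate problem. Controlling this sub-leading entropy difference, and confirming that the symmetry-broken arc is the winner for $E>E_\ast$ (as in the degenerate limit), is the delicate point; the remaining ingredients---monotonicity of $\mf e$, existence and uniqueness of $\sigma^\ast$, and continuity of $\ms S$---are routine.
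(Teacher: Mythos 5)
Your overall strategy is the paper's own: reduce to the $0$-branch via Theorem \ref{teo:0branch}, show that the branch folds so that several solutions coexist at the same energy, and obtain the transition as the switch of the upper envelope of the entropy. Your scaling computation near the degenerate corner is correct, and is in fact sharper than what the paper writes down (the paper justifies the shape of the branch by the explicit formulas \eqref{eq:termo} and its figures): the expansion $\beta=\beta_c+4\pi\sqrt{\delta}\,g(\sigma)+O(\delta)$, $E=E_c\bigl(1-c_0\sqrt{\delta}\,g(\sigma)\bigr)+O(\delta)$ with $g(\sigma)=-2\sigma+(N-1)\sqrt{1+4\sigma^2}$ and $c_0=\mf e'(1/2)/(2N\,\mf e(1/2))>0$ checks out, and it correctly isolates $N\ge 3$ as the condition for $g$ to have an interior minimum, i.e.\ for the fold to be born.

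The gap is in the last step, and your diagnosis of the remaining difficulty rests on a false picture of the branch. You claim that the three solutions coexist on a small interval near $E_c$ and that the two outer arcs coincide at leading order, so that the crossing must be read off sub-leading terms. Neither is true. The arcs that coincide at leading order are the two sides of the fold inside your window ($\sigma<\sigma^*$ and $\sigma>\sigma^*$); the second turning point, the local minimum $E_0$ of the energy along the branch, is not a perturbative effect: it is already present in the degenerate case $a_i=1$ on the disk-$1$-dominant part of the branch, where at the corner the derivative of $E$ with respect to the mass parameter of disk $1$ is proportional to $2\log 2-\frac{3}{2}<0$, so the energy drops by an order-one amount before turning up. Consequently the overlap interval $(E_0,\bar E_c)$ has order-one length, the high-energy arc is order-one separated in $\beta$ from the low-energy arc, and the entropy crossing $E_*$ converges, as $a_i\to 1$, to the crossing of the two degenerate merged-branch entropy curves, which sits at order-one distance below $E_c$ --- outside your scaling window. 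Inside the window the far, disk-$1$-dominant arc already wins by an order-one margin, so expanding to the next order there can never locate the transition. For the same reason, strictness is not the delicate point you take it to be: in the paper it comes for free from integrating $\pa_E S=\beta$ along the three arcs (in the paper's notation $\beta_d<\beta_l<\beta_r$ at fixed $E$ on the open overlap), which makes $S_r-S_d$ strictly increasing with a sign change, so the crossing is unique and the jump $\beta^+-\beta^-=\beta_r(E_*)-\beta_d(E_*)>0$ is an order-one quantity. What your local analysis genuinely leaves open --- and what the paper itself supports only by explicit formulas and figures --- is this global, order-one structure: the existence of $E_0$ and the ordering of the three arcs over the whole overlap, in particular $\beta_d<\beta_l$, which is what forces the low-$\mu$ arc to win near $E_0$ and hence a sign change, i.e.\ the transition, to occur at all.
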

\begin{proof}

  In the case of $a_i=1$ for all $i$,
  in a neighborhood of $(\beta_c,E_c)$, the branches of solutions
  look as in fig. \ref{fig:locale}, on the left,
  (in which are represented 
  the $0$ and the $1-$merged-branches),
  while when $a_i<1$, for $i=2\dots N$ very close to $1$,
  the $0-$branch 
  looks as in the figure on the right.
  The only qualitative difference is that in the case of
  identical disks the $0-$branch becomes singular in
  $\beta = \beta_c$.
  
  We start the proof considering the case of $a_i<1$,
  represented in fig. \ref{fig:locale}, on the right,
  for which  the solution of the MVP lies on the $0$-branch
  (see Theorem \ref{teo:0branch}).

  \begin{figure}[h]
    \includegraphics[scale=0.45]{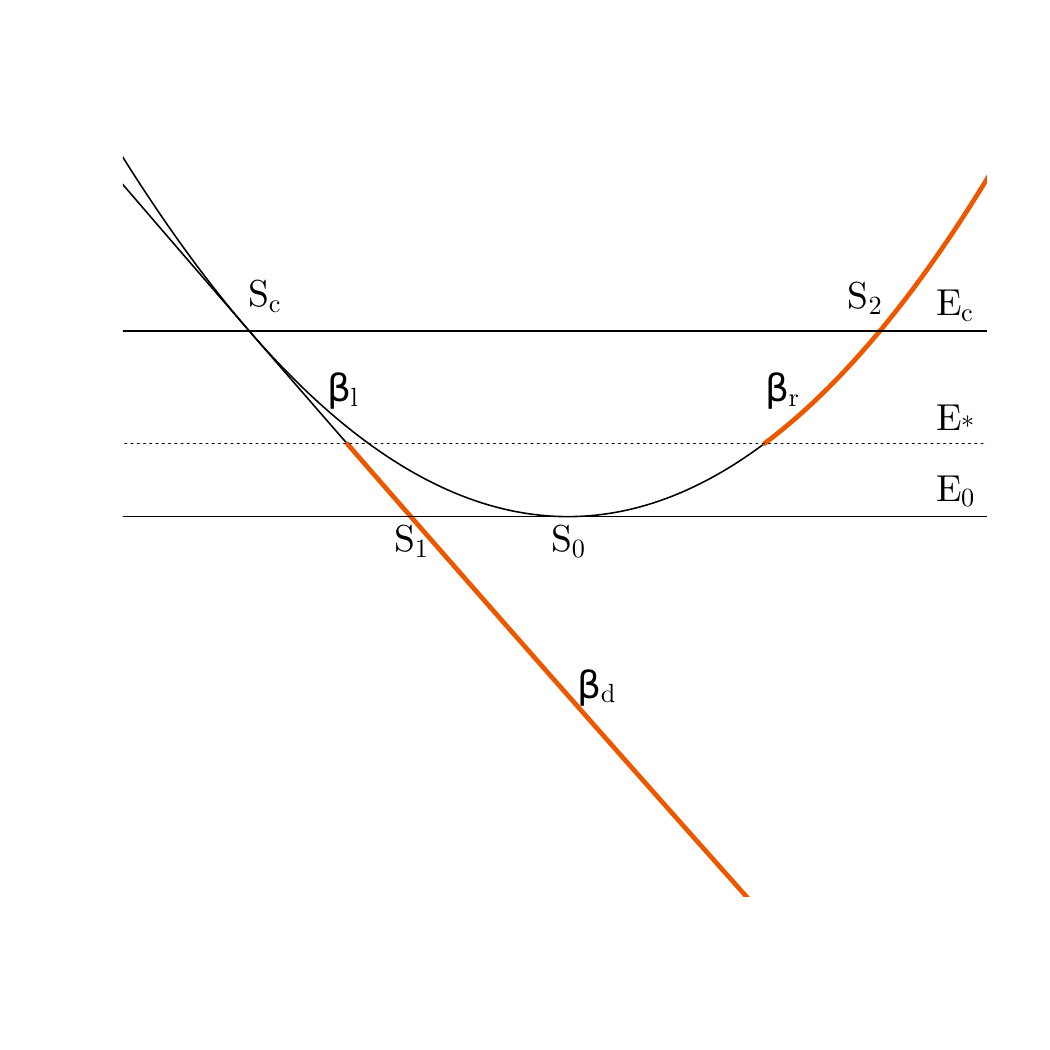}
    \includegraphics[scale=0.45]{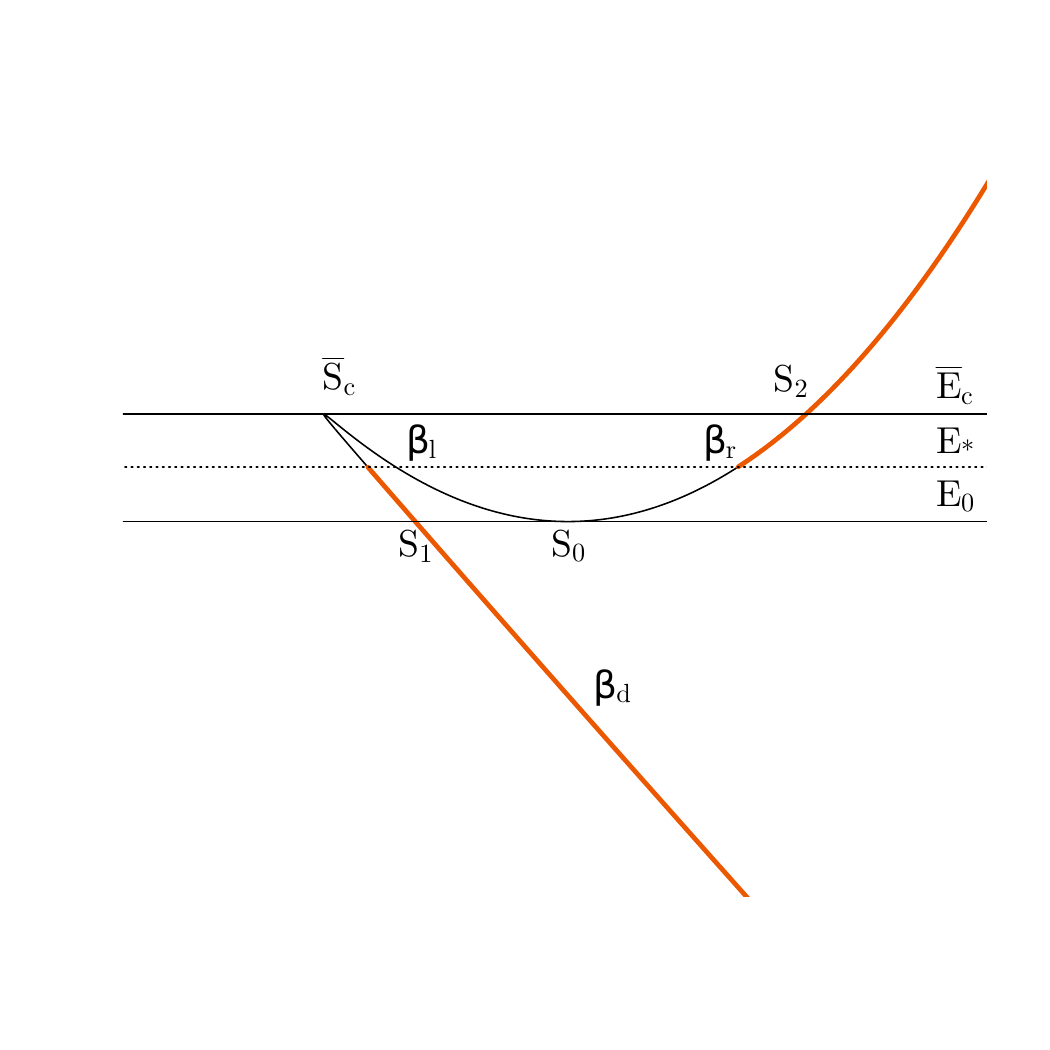}
    \caption{\small On the left, we show
      the $0$ and $1$-merged-branches
      in a neighborhood of $(\beta_c,E_c)$.
      The parameters of the  solution of the MVP lies on the
      red part, and $\beta$ jumps  at $E=E_*$.
      In the right it is represented the $0-$branch in
      the case $a_i<1$, but very
      close to 1
      ($N=3$, $a_2=a_3 = 1 - 3\times 10^{-5}$).}
    \label{fig:locale}
  \end{figure}
  We denote by $\beta_d(E)$ the function $\beta(E)$
  on the $0-$branch, until the local maximum 
  $E={\bar E}_c$, for which the entropy is ${\bar S}_c$.
  On the remaining part of the branch, we call $E_0$ the local minimum,
  reached in $\beta_0$, with entropy $S_0$, and
  we denote by $\beta_l(E)\le \beta_0$ and $\beta_r(E)\ge \beta_0$
  the two values of $\beta(E)$ in function of $E\ge E_0$.
  We denote by $S_d$, $S_l$, $S_r$
  the corresponding values of the entropy, which are regular functions
  of $E$.

  First we note that the argument in Lemma \ref{lemma:derivate}-$(i)$
  applies in general for $\Cal S(\rho)$ when $\rho$ solves
  MFE, if it depends regularly on $E$. Since this is true 
  if $E\neq {\bar E}_c$,
  we have
  $\pa_E S(E) = \beta$, on all  parts of the branch.
  Then,
  for $E\in (E_0,{\bar E}_c)$ we have 
  $$\begin{aligned}
    {\bar S}_c-S_d(E) = \int_{E}^{{\bar E}_c} \beta_d < \int_{E}^{{\bar E}_c}
    \beta_l
    = {\bar S}_c - S_l(E),
    \\
    S_l(E) - S_0 = \int_{E_0}^E \beta_l < \int_{E_0}^E \beta_r = S_r(E) - S_0,
  \end{aligned}
  $$
  hence $S_l(E) < S_d(E)$ and $S_l(E) < S_r(E)$. In particular
  ${\bar S}_c < S_2 \coloneqq S_r(\bar{E}_c)$ and $S_0 < S_1
  \coloneqq S_d(E_0)$.
  Since 
  $$\dde {\phantom{S}}E (S_r(E) - S_d(E)) = \beta_r(E) - \beta_d(E) > 0,$$
  and
  $$S_r(E_0) - S_d(E_0) = S_0 - S_1 < 0,\ \
  S_r({\bar E}_c) - S_d({\bar E}_c) = S_2 - {\bar S}_c > 0,$$
  there exist only one value $E_*\in (E_0,{\bar E}_c)$ such that
  $$
  \begin{array}{ll}
    S_d(E) > S_r(E) & \text{ for } E\in (E_0,E_*)\\
    S_d(E) < S_r(E) & \text{ for } E\in (E_*,{\bar E}_c).
  \end{array}
  $$
  In the case of identical disks,
  the same proof applies, 
  but we have also to prove that the
  the entropy of any other branch is always less than $S_r(E)$
  (Thm. \ref{teo:0branch} does not apply in this case).
  This can be done using the same idea, using the fact
  that all the branches have a common solution for $\beta = \beta_c$.
  


\end{proof}

\vskip.3cm
Let us describe in more details the case of
$a_i=a<1$ for all $i= 2, \dots, N$, with the help of some figures.
\begin{figure}[h]
  \includegraphics[scale=0.3]{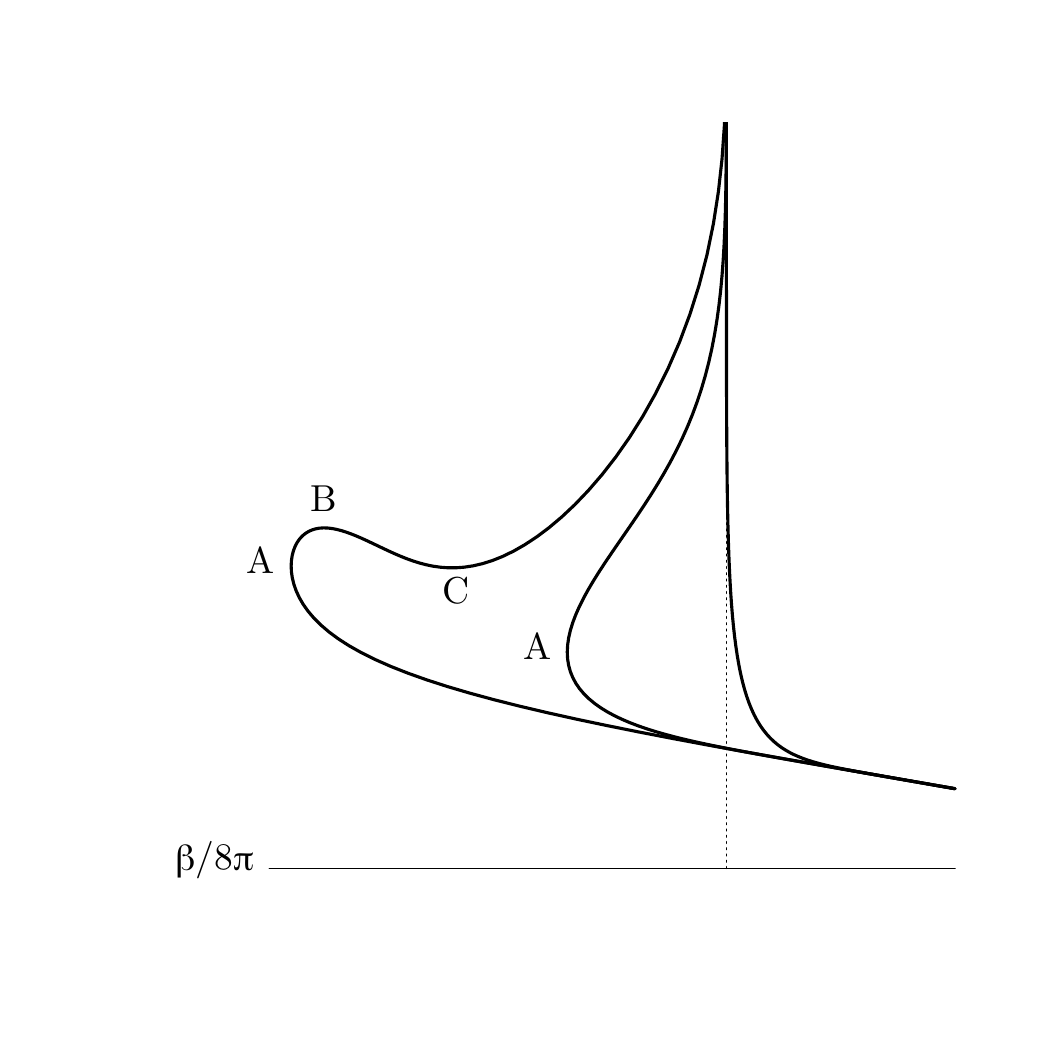}
  \caption{\small The qualitative behavior of the
    $0-$branch (MVP solution) in the plane $(\beta,E)$, for
    three increasing values of $a=a_i$, $i=2\dots N$ ($N \geq 3$ ). }
  \label{fig:b0}
\end{figure}
\begin{figure}[h]
  \includegraphics[scale=0.3]{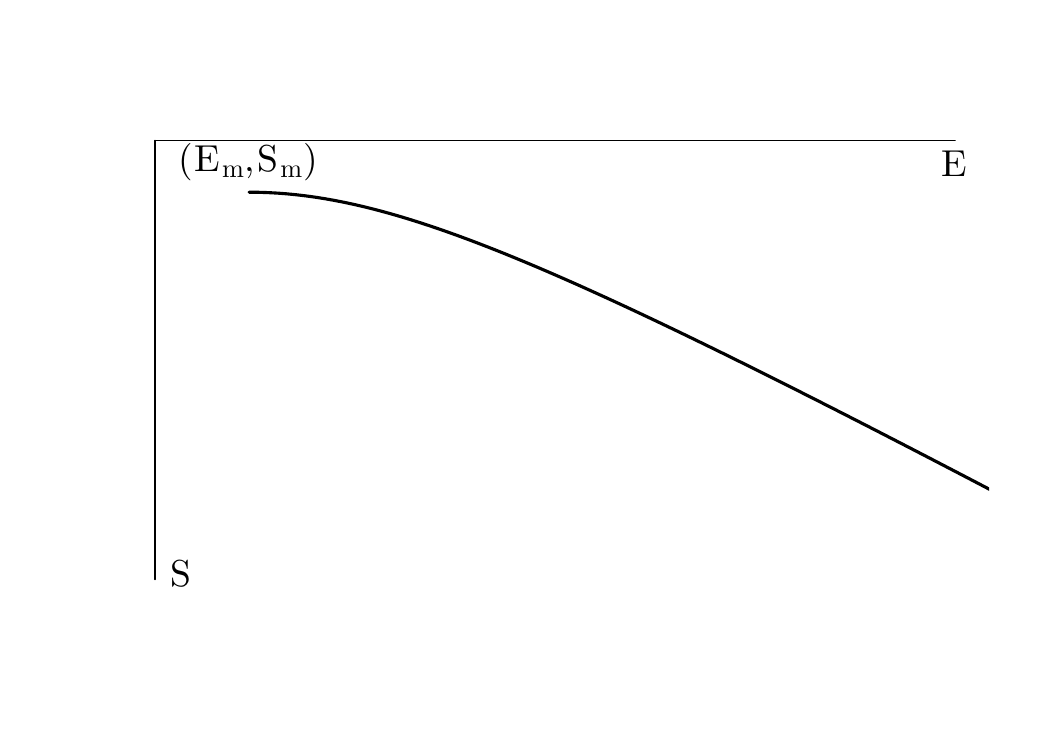}
  \includegraphics[scale=0.3]{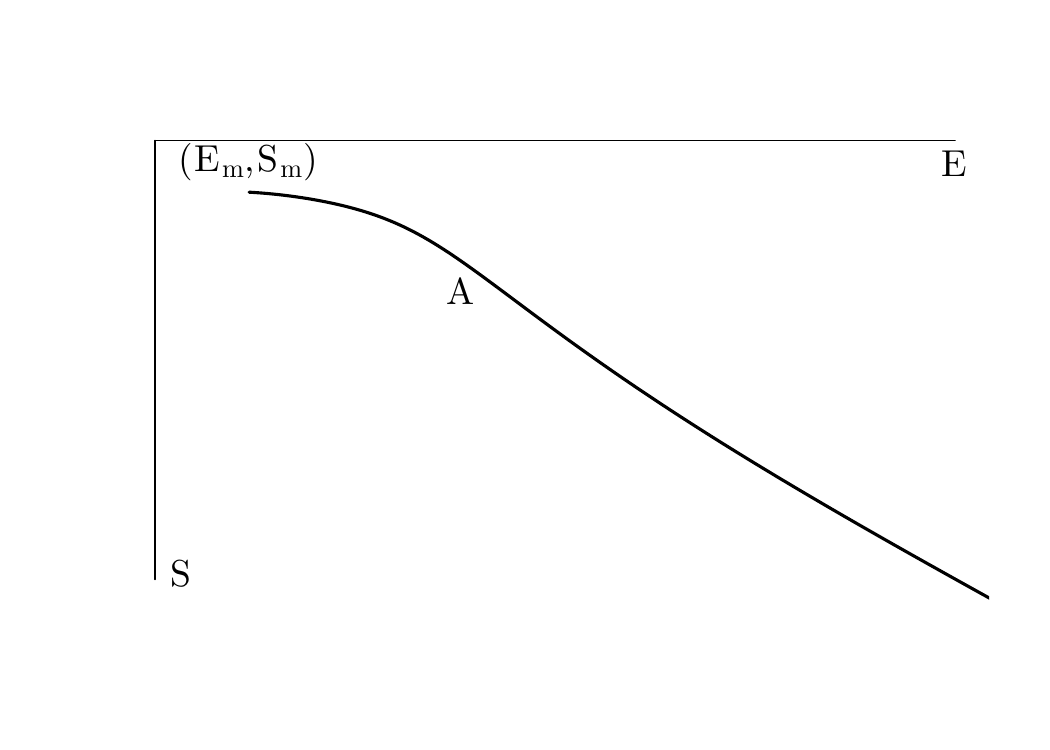}
  \includegraphics[scale=0.3]{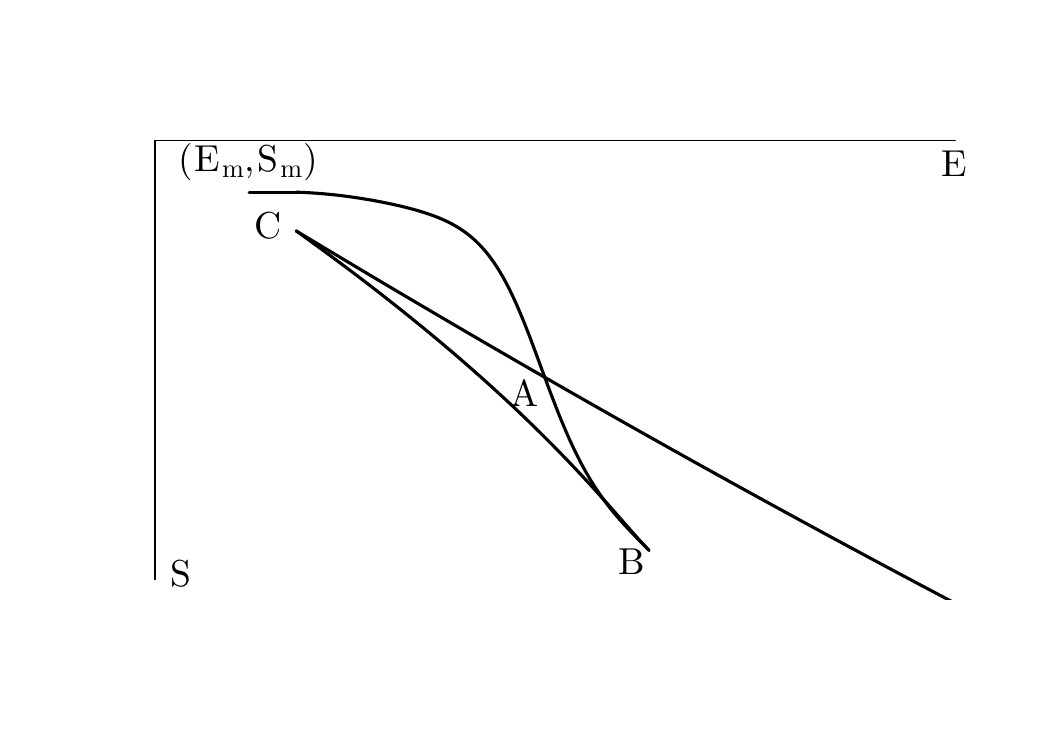}
  \caption{\small The qualitative behavior of the
    $0-$branch (MVP solutions) in the plane $(E,S)$, for
    the three increasing  values of  $a$ as in fig. \eqref{fig:b0}
  }
  \label{fig:s0}
\end{figure}
When $a$ increases from $0$ to $1$, the 
$0-$branch modifies  as in   fig. \ref{fig:b0},
indeed as $a \to 1^-$  it converges partially  to the $0-$merged-branch
and  partially  to the $1-$merged-branch, 
restricted to the region  $\beta \ge \beta_c$.
The entropy on the $0$-branch modifies as in fig.
\ref{fig:s0}.
In the point $A$
it becomes convex, 
between $B$ and $C$ is concave, and after $C$ is convex, and 
$(E(\mu),S(\mu))$ intersects itself.
Numerical evidence indicate that this happens in the first
concave piece, i.e. before the point $A$.
For given $E$, the solution $\ms S(E)$ of the MVP is the maximum value
of the entropy on the branches,
hence a phase transition appear.

We warn the reader that the numerical inspection is extremely delicate:
the distances between the three entropy branches in the last picture of
fig. \ref{fig:s0}
are very  small compared to the distances between the points $C$ and $B$.
For example, when $N=3$, $a=1-10^{-5}$, the ratio is about $2\times 10^{-3}$.




\subsection{High energy phase transitions}

In this subsection we show that for any $N\ge 2$
there exist disconnected sets for which
there are $N$ first order phase transitions of the entropy,
in the large energy region.
We consider small deformation of disks as follows.

Let $\eta>0$ be a small parameter, set $\eps = \eta^{1/2}$,
and consider the domain $\Lambda_{a,\eta}$
obtained by transforming the unit
circle with the conformal map 
$\C \ni z \to z + \eps z^3\in \C $, and then scaling so that
the area becomes $a$.

In the following proposition we give the thermodynamic quantities defined
in \eqref{def:ezmu} for the deformed domain $\Lambda_{a,\eta}$
(see the proof in Appendix \ref{appendice:A}).
\begin{proposition}
  For small $\eta$, $\Lambda_{a,\eta}$ is a fist kind domain,
  with
  \begin{equation}
    \label{eq:ezeta}
    \begin{aligned}
      &\mf e_{a,\eta}(\mu) =  e_{\eta}(\mu) = \frac 1{8\pi}
      \left( \frac 1{\mu^2} (-\mu - \log (1-\mu)) - \eta \tau(\mu)
      +o(\eta) \right),\\
      &\text{where } \tau(\mu) = \frac 2{(1-2\mu/3)^2},\\      
      &\mf z_{a,\eta}(\mu) = \frac {a}{1-\mu}(1+\eta \zeta (\mu) + o(\eta)),\\
      &\text{where } \zeta(\mu) =6\frac {1-2\mu+2\mu^2/3}{(1-2\mu/3)^2}.\\ 
    \end{aligned}
  \end{equation}
\end{proposition}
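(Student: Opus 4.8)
The approach I would take rests on the conformal covariance of the mean field equation \eqref{eq:mfe}. Let $f(z)=z+\eps z^{3}$, which is univalent on the unit disk $D$ for small $\eps$, and write $\Lambda_{a,\eta}=\lambda\,f(D)$ with $\lambda>0$ chosen so that the area equals $a$. If $\Psi$ solves \eqref{eq:mfe} on $\Lambda_{a,\eta}$, then $\phi\coloneqq\Psi\circ(\lambda f)$ solves on $D$
\[
-\lap\phi=\frac{1}{\tilde Z}\,V(z)\,\e^{-\beta\phi},\qquad \phi|_{\partial D}=0,\qquad V(z)=|f'(z)|^{2}=1+6\eps\,\mathrm{Re}(z^{2})+9\eta\,|z|^{4},
\]
the homogeneous Dirichlet datum being preserved because $\lambda f$ maps $\partial D$ onto $\partial\Lambda_{a,\eta}$. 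Two invariances then give the observables directly: since the Dirichlet integral is conformally invariant in two dimensions, $\mf e_{a,\eta}(\mu)=\tfrac12\int_{D}|\grad\phi|^{2}$ and is moreover insensitive to the scaling $\lambda$, whereas $\mf z_{a,\eta}(\mu)=\lambda^{2}\int_{D}V\e^{-\beta\phi}$. Using the area scaling of $\mf z$, it suffices to compute on $f(D)$ and then multiply by the factor $a/|f(D)|=a/(\pi(1+3\eta))$; the energy needs no such factor.

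Next I would expand in $\eps=\eta^{1/2}$, writing $\phi=\phi_{0}+\eps\phi_{1}+\eps^{2}\phi_{2}+o(\eps^{2})$, with $\phi_{0}$ the explicit radial disk solution \eqref{eq:psirho} and $\rho_{0}=-\lap\phi_{0}$. The structural key is that the $O(\eps)$ part of the weight, $V_{1}=6\,\mathrm{Re}(z^{2})=6r^{2}\cos2\theta$, is purely angular while $\phi_{0}$ and $\rho_{0}$ are radial; hence $\phi_{1}=h(r)\cos2\theta$, and \emph{every} first-order correction to $\mf e$ and $\mf z$ vanishes by orthogonality. This is exactly why the scaling $\eps=\eta^{1/2}$ makes the leading correction of order $\eta$: it arises at second order in $\eps$. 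Introducing the self-adjoint linearized operator $L=-\lap+\beta\rho_{0}$ with homogeneous Dirichlet data, $h$ solves a radial ODE with forcing proportional to $\rho_{0}r^{2}$, and $\phi_{2}$ solves $L\phi_{2}=\rho_{0}\bigl(V_{2}-\beta V_{1}\phi_{1}+\tfrac{\beta^{2}}{2}\phi_{1}^{2}-n_{2}\bigr)$, where $V_{2}=9r^{4}$ and $n_{2}$ is fixed by the mass-one constraint.

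The decisive simplification is that both observables reduce to a single scalar, the radial moment $\int_{D}\rho_{0}\phi_{2}$: integrating by parts with $-\lap\phi_{0}=\rho_{0}$ gives $\mf e=\mf e_{0}+\eta\bigl(\tfrac12\int_{D}|\grad\phi_{1}|^{2}+\int_{D}\rho_{0}\phi_{2}\bigr)+o(\eta)$, while $\mf z_{f(D)}=Z_{0}\bigl(1+\eta\,n_{2}+o(\eta)\bigr)$ with $Z_{0}=\pi/(1-\mu)$ and $n_{2}$ likewise expressible through $\int_{D}\rho_{0}\phi_{2}$. Since $\rho_{0}$ is radial, only the radial part of $\phi_{2}$ contributes, so I would solve the radial restriction of $L$ explicitly. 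For the disk this is feasible because the linearization of the Liouville solution \eqref{eq:psirho} is integrable, its homogeneous solutions coming from the scaling and Möbius symmetries of the equation; thus $h$ and $\phi_{2}^{\mathrm{rad}}$ are elementary and, in the variable $s=1+\tfrac{\beta}{8\pi}(1-r^{2})$, the remaining integrals are rational and logarithmic. Resolving the circular dependence between $n_{2}$ and $\int_{D}\rho_{0}\phi_{2}$ through the solvability identity for $L$, and incorporating the area factor, should produce the stated closed forms $\tau(\mu)=2/(1-2\mu/3)^{2}$ and $\zeta(\mu)=6(1-2\mu+2\mu^{2}/3)/(1-2\mu/3)^{2}$; the recurrent denominator $(1-2\mu/3)^{2}$ is the signature of the $\cos2\theta$ mode acting against the weight $\rho_{0}\propto s^{-2}$.

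Finally, the first-kind assertion follows once the energy is seen to diverge as $\mu\to1^{-}$, i.e. $\beta\to(-8\pi)^{+}$: the leading term behaves like $-\log(1-\mu)/(8\pi\mu^{2})\to+\infty$ while the correction $\eta\,\tau(\mu)$ stays bounded (indeed $\tau(\mu)\to18$), whence $E_{-8\pi}=+\infty$; to legitimize the expansion up to $\mu=1$ I would add a short stability argument exploiting that $f(D)$ is $C^{1}$-close to the disk. The main obstacle is the second-order step itself: explicitly solving the radial equation for $\phi_{2}$ and evaluating the ensuing integrals in closed form, since this is where the nontrivial profiles $\tau$ and $\zeta$ are generated. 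By contrast, the conformal reduction and the angular decomposition that kills the first-order terms are the enabling but essentially routine preliminaries.
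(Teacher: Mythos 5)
Your strategy is sound and would reach \eqref{eq:ezeta}, but it is a genuinely different route from the paper's, and it is heaviest precisely where you flag ``the main obstacle''. You expand the \emph{solution} to second order in $\eps$ and extract $\mf e$ and $\mf z$ from $\phi_1$ together with the radial moment $\int_D \rho_0\phi_2$, which forces you to solve the radial second-order problem and to untangle the nonlocal constant $n_2$. The paper never computes any second-order correction of the solution: it perturbs the \emph{free energy} instead. Writing $\ms F(\beta)=\sup_\rho \Cal G(\beta,\rho)$ with $\Cal G=-\Cal E-\beta^{-1}\log \Cal Z$, conformal invariance of the energy reduces everything to maximizing, on the fixed disk,
$\Cal I_\eps(\beta,\Phi)=-\tfrac 12\int_D|\grad \Phi|^2-\beta^{-1}\log\int_D J_\eps\,\e^{-\beta\Phi}$,
and stationarity (the Euler--Lagrange equation) makes all terms containing $\Phi''$ drop out of the second $\eps$-derivative of $\Cal I_\eps$ at $\eps=0$. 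Hence only the first-order mode $\Phi'=a/\beta+\xi(r)\cos 2\vartheta$ is needed (the same explicit $\cos 2\vartheta$ solve you describe), the output is the single scalar $g(\beta)=6(1-\mu)/(1-\tfrac 23\mu)$ with $\ms F_{\Lambda_\eta}(\beta)=\ms F(\beta)-\eta\,g(\beta)/\beta+o(\eta)$, and both stated profiles follow from thermodynamic identities, $\tau=-\pa_\mu g$ and $\zeta=g+\mu\pa_\mu g$, i.e.\ by differentiating one function rather than by any further PDE solve. What your route buys is directness (no Legendre-transform formalism); what the paper's buys is that the hard step of your plan disappears entirely.

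Two caveats on your plan as written. First, the crux --- the closed forms of $\tau$ and $\zeta$ --- is asserted (``should produce''), not derived; and in carrying it out, note that the radial sector of $L=-\lap+\beta\rho_0$ with Dirichlet data is exactly the fixed-$\lambda$ linearization of the Liouville problem, which is \emph{degenerate at $\mu=1/2$} (the fold of $\lambda(\mu)=8\pi\mu(1-\mu)/a$, with radial kernel $\pa_\mu U_\mu$; this is why Proposition \ref{prop:gluing} works on $W\subset(0,1)\setminus\{1/2\}$). Invertibility at $\mu=1/2$ is restored only by the rank-one normalization term you folded into $n_2$, so your ``solvability identity for $L$'' must be run with the full nonlocal operator, not with $L$ alone; the paper's fixed-$\beta$ free-energy route is insensitive to this because its only solve lives in the $\cos 2\vartheta$ sector, nondegenerate for all $\mu\in(0,1)$. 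Second, for the first-kind claim your divergence argument needs the $o(\eta)$ error in \eqref{eq:ezeta} to be uniform as $\mu\to 1^-$, which a fixed-$\mu$ expansion does not provide; you acknowledge this, and a separate stability argument (first-kindness of domains $C^1$-close to a disk) is indeed the clean way to close it.
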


We consider $\Lambda = \bigcup_{i=1}^N \Lambda_{a_i,\eta_i}$,
with area $a_i$ near 1 and decreasing
with respect to $i$, and parameters $\eta_i$ small.
We proceed with the construction of the branch of the MFE solutions
in $\Lambda$ as in Proposition   \ref{propo:costruzione}, 
by choosing $\gamma$ small.
For any $i$ we have to solve
$$\mu_i (1-\mu_i) = \gamma a_i (1+\eta_i \zeta(\mu_i)+o(\eta_i))$$
which has two solutions
$$\mu_i^- = \gamma a_i (1+\eta_i) + o(\gamma)+o(\eta_i),
\ \ \mu_i^+ = 1- \gamma a_i (1-3\eta_i) +
o(\gamma)+o(\eta_i).$$
We consider the branch $\Cal B_i$  of solutions
by choosing $\mu_i = \mu_i^+$ and 
$\mu_j = \mu_j^-$ for all $j\neq i$. 
Since $\beta = -8\pi \sum_h \mu_h$ we have
$$\Cal S(\rho)
= -\log \gamma +\log \big(\sum_h \mu_h\big) - 16\pi \big(\sum_h \mu_h\big)
\Cal E(\rho).$$
In order to compare the entropy on the different branches we need
to express $\Cal S(\rho)$ in terms of  $E=\Cal E(\rho)$. 
We observe that
$\mu_j = \mu_j^- = o(\gamma)$ for $j\neq i$, and that
the dependence
of  $\mu_i (1-\mu_i)$
in $\gamma$ is exactly linear, hence
$$\log(1-\mu_i^+) = -\log \mu_i +
\log \gamma + \log a_i + \log (1+\eta_i z_{a_i,\eta_i}
(\mu_i^+)) +o(\eta_i).$$
We have
$$
\begin{aligned}
  &8\pi \big( \sum_h \mu_h \big)^2 E = 8\pi 
  \mu_i^2 e_{\eta_i}(\mu_i) + o(\gamma)  \\
  =& -1+(1-\mu_i^+) +
  \log (1-(1-\mu_i^+))
  -\log \gamma \\
  &- \log a_i - \log (1+\eta_i \zeta(\mu_i^+))
  -\eta_i \tau(\mu_i^+) 
  + o(\gamma) + o(\eta_i)\\
  =& -1 -\log\gamma
  - \log a_i - \eta_i (\zeta(\mu_i^+) + \tau
  (\mu_i^+)) + o(\gamma)+o(\eta_i)\\
  =& -1 -\log\gamma
  - \log a_i - 36\gamma a_i \eta_i + o(\gamma)+o(\eta_i)\\
\end{aligned}
$$
Then the entropy in terms of the energy $E$ for the  branch $\Cal B_i$ is
given by
$$
S_i  = - 8\pi (1 - (1-\sum_h \mu_h)^2) E +1+\log a_i
+\gamma \sum_j a_j (1+6\eta_j) - 2\gamma a_i (1- 24\eta_i) +o(\gamma)+
\sum_j o(\eta_{j}).
$$
  Since $-\log \gamma$ is of the order of $E$, we can drop
  the term  $(1-\sum_h \mu_h)^2 E$ which is $o(\gamma)$.
  To compare $S_i$, $i=1\dots N$, we set $a_i = 1 + \alpha_i \eta$, 
  $\eta_i = q_i \eta$, then
  $$
  S_i + 8\pi E  - 1 -\gamma \sum_j a_j (1+\eta_j) +2\gamma 
  = \eta ( \alpha_i + 2\gamma (24q_i-\alpha_i)) + o(\gamma) + o(\eta).
  $$
  By choosing suitable sequences of $\alpha_i$ and $q_i$
  we obtain that,  
  as $E$ increases (i.e. $\gamma$ decreases),
  the maximum of the entropies $S_i$ is reached, in the order,
  for $i=N, N-1,N-2,  \dots 1$.
  When $\ms S(E) = S_i$ we have 
  $$\pa_E \ms S(E) = \beta = -8\pi \sum \mu_h = 1
  -\gamma +\gamma \sum_j a_j (1+\eta_j) + \gamma \eta (24q_i - \alpha_i)
  +o(\gamma) + o(\eta),$$  
  then $\pa_E \ms S(E)$ 
  is discontinuous when $\ms S(E)$ goes 
  from $S_{i}$ to $S_{i+1}$.

  \begin{theorem}[High energy phase transitions]
    \label{teo:he}
    For any $N$, there exists a disconnected domain $\Lambda$
    for which $\ms S(E)$ has $N$ first order phase transitions,
    for large values of $E$.
    
  \end{theorem}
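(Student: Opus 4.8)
The plan is to upgrade the branch construction set up just above the statement into a genuine assertion about the upper envelope of the entropies. For $\Lambda = \bigcup_{i=1}^N \Lambda_{a_i,\eta_i}$, Proposition \ref{propo:costruzione} produces, for each small $\gamma$, a solution of the MFE on each single-concentration branch $\Cal B_i$ (the disk $i$ at $\mu_i^+\approx 1$, all others at $\mu_j^-\approx 0$), and the expansion preceding the statement reduces the comparison of the corresponding entropies, \emph{at fixed energy} $E$, to the affine-in-$\gamma$ quantities
\[
  f_i(\gamma) \coloneqq \alpha_i(1-2\gamma) + 48\,\gamma\, q_i,
\]
obtained by subtracting the $i$-independent part $-8\pi E + 1 + \gamma\sum_j a_j(1+\eta_j) - 2\gamma$ from $S_i$ and dividing by $\eta$ (here $a_i=1+\alpha_i\eta$, $\eta_i=q_i\eta$). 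Since $\pa_E \ms S = \beta = -8\pi\sum_h \mu_h$ and $\beta$ depends on the branch only through the correction $\gamma\eta(24q_i-\alpha_i)$, every energy where the maximizing index changes is automatically a jump of $\pa_E\ms S$, i.e.\ a first-order phase transition. Thus the statement reduces to choosing the $2N$ free parameters $\{\alpha_i,q_i\}$ so that the upper envelope of the $f_i$ realizes the prescribed switches.

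First I would fix the geometry of the $N$ lines. Writing the intercept of $f_i$ as $\alpha_i$ and its slope as $s_i = 48 q_i - 2\alpha_i$, I would take $\alpha_1 > \alpha_2 > \dots > \alpha_N$ and then choose the $q_i$ so that the slopes are strictly increasing, $s_N > s_{N-1} > \dots > s_1$, with the points $(s_i,\alpha_i)$ in strictly convex position. For such a family the upper envelope over $\gamma \in (0,\gamma_0)$ is piecewise linear and visits each line exactly once, in the order $i = N, N-1, \dots, 1$ as $\gamma$ decreases (that is, as $E$ increases); consecutive lines cross at $N-1$ interior values $\gamma_1 > \dots > \gamma_{N-1}$, which a rescaling of the $q_i$ places inside the window. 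Together with the further jump at the lower edge, where the maximizer first leaves the low-energy (unique, $0$-branch, concave) solution and enters $\Cal B_N$, this yields $N$ distinct energies $E_*$ at which $\ms S$ switches branch; because the $s_i$ are pairwise distinct, $\beta$ genuinely jumps at each, giving $N$ first-order transitions.

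The delicate point is to convert this leading-order picture into an honest maximization. Two things must be controlled. (i) The remainders: in the differences $S_i - S_j$ the common large terms cancel and the signal is $O(\eta)$, so I would fix the convex configuration of lines first, then choose $\gamma_0$ and finally $\eta$ so that the surviving $i$-dependent remainders are \emph{uniformly} $o(\eta)$ on the whole window (here one also uses that, at fixed $E$, the values of $\gamma$ on two branches differ only by $O(\eta)$, an effect of order $\eta^2$ on each $S_i$); then the sign of $S_i-S_j$ is governed by $\eta\bigl(f_i-f_j\bigr)$ and the $N-1$ interior crossings survive as transversal crossings with $\beta^- \neq \beta^+$. (ii) Domination over the other configurations: Theorem \ref{teo:mvpdisconnessi} and Proposition \ref{propo:costruzione} give $2^N$ branches, and I must show that in the high-energy window $\ms S(E)$ equals the maximum over the single-concentration family only. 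Configurations with two or more disks at $\mu^+\approx 1$ keep $\sum_h\mu_h$ bounded away from $1$, hence a strictly more negative $\beta$ and, at fixed large $E$, a strictly smaller entropy via \eqref{eq:SMZ}, while the purely subcritical ($0$-branch) configuration cannot reach these energies as $\gamma\to 0$. I expect step (ii)---a uniform, quantitative exclusion of all $2^N-N$ competing configurations simultaneously as $\eta,\gamma\to 0$---to be the main obstacle, whereas the envelope construction of the previous paragraph is elementary linear geometry.
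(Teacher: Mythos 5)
Your proposal follows essentially the same route as the paper: the paper's proof of Theorem \ref{teo:he} is exactly the expansion preceding the statement, the reduction to the affine-in-$\gamma$ quantities $\eta\bigl(\alpha_i + 2\gamma(24q_i-\alpha_i)\bigr)$, and an (unspecified) choice of $\{\alpha_i,q_i\}$ making the upper envelope of the $S_i$ switch through $i=N,\dots,1$ as $\gamma$ decreases, with the first-order jumps read off from the branch-dependent correction to $\beta=\pa_E\ms S$. Your points (i) and (ii) --- uniform control of the remainders and exclusion of the other configurations (which indeed works as you sketch: $k$-fold concentration gives $S\approx -8\pi k E$ at fixed large $E$, and purely subcritical branches have bounded energy) --- are precisely the steps the paper leaves implicit, so your more cautious treatment is a refinement of, not a departure from, the paper's argument.
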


\section{$N-$dumbbell  domains}

The main result of this section is the construction
of connected bounded sets $\Omega$ for which
the entropy $\ms S_{\Omega}(E)$ has first order phase transitions,
similar to that in
theorems \ref{teo:le} and \ref{teo:he}.

We obtain the result in a perturbative setting, for which we have
to fix the notation.
We recall that if $\rho$
is a solution of MVP in a open bounded connected set $\Omega$, then,
by Theorem \ref{propo:mvp},
the stream function $\Psi$,
the inverse temperature $\beta$ and the normalization $Z$
are uniquely defined.
Moreover, the function $U=-\beta \Psi$ 
solves on $\Omega$
\begin{equation}
\label{eq:lambda}
-\lap U = \lambda \e^U
\end{equation}
with homogeneous Dirichlet condition on $\pa \Omega$, 
where $\lambda = -\beta / Z$ 
is a positive parameter for $\beta < 0$.
Conversely, 
if $U$ solves \eqref{eq:lambda},
we can define $Z=\int_\Omega \e^U$, $\beta = - Z\lambda$,
and then the stream function $\Psi = -U/\beta$,
solves the MFE with inverse temperature $\beta$.
In the sequel we will take for granted the relation between
$\rho$, $\Psi$, $U$, $\beta$, $Z$ and $\lambda$.

Let $\Lambda = \bigcup_{i=1}^{N} \Lambda_{i}$, where $\Lambda_{i}$ are
smooth open connected bounded sets of first kind which do
not intersect each other and let $C_{\eps}$ be the union of
channels of width $\eps>0$ connecting all the disjoint sets
$\Lambda_{i}$.
We consider $\eps_n \searrow 0$ and
a sequence of
{\it  $N$-dumbbell domain}
$\Omega_n$, \emph{i.e.} smooth connected open sets such that 
$ \Lambda \subset \Omega_n \subset (\Lambda \cup
C_{\eps_n})$.
If  $\Omega_{n+1} \subset \Omega_{n}$ for any $n$,
we say
that $\Omega_{n}$ is a \emph{decreasing sequence} of $N$-dumbbell domain
converging to $\Lambda$ and we write $\Omega_{n}\searrow \Lambda$.
We also suppose that
the measure
on the curve $\pa \Lambda$ of 
$\Omega_n \cap \pa\Lambda$ vanishes as $n\to +\infty$.


\begin{theorem}
\label{thm:convergence}
Consider $\Omega_{n}\searrow \Lambda$.
For any $E \in (0, + \infty)$ and any sequence $E_n \to E$, we have 
$\ms S_{\Omega_{n}}(E_n)  \to \ms S_{\Lambda}(E)$.
Let $\rho_n$ be a corresponding solution of the MVP in $\Omega_{n}$.
Up to subsequences,
$\rho_n$ converges, weakly in the sense of measure, to $\rho$ which 
is a solution of MVP in $\Lambda$
of energy $E$.
Moreover $\Psi_n\to \Psi$
strongly in $H^1$, 
$\beta_n\to \beta$, and $Z_n \to Z$.
\end{theorem}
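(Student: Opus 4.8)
The plan is to establish the two matching bounds
$\limsup_n \ms S_{\Omega_n}(E_n) \le \ms S_\Lambda(E)$ and
$\liminf_n \ms S_{\Omega_n}(E_n) \ge \ms S_\Lambda(E)$, and then to upgrade the
resulting convergence of the maximizers. I would first prove the lower bound by a
recovery sequence: let $\rho^\ast$ be a maximizer of the MVP on $\Lambda$ at energy $E$,
which exists by Theorem \ref{teo:mvpdisconnessi}. Since $\rho^\ast$ is supported in
$\Lambda \subset \Omega_n$, it is an admissible probability density on $\Omega_n$;
because $\Omega_n \searrow \Lambda$ the Green functions satisfy $G_{\Omega_n} \ge G_\Lambda$,
and, the channels $C_{\eps_n}$ having vanishing width (hence vanishing capacitary effect),
$\Cal E_{\Omega_n}(\rho^\ast) \searrow \Cal E_\Lambda(\rho^\ast) = E$. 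A small perturbation
of $\rho^\ast$ then matches the constraint $\Cal E_{\Omega_n}(\cdot) = E_n$ at the cost of an
$o(1)$ loss of entropy, so that $\ms S_{\Omega_n}(E_n) \ge \Cal S(\rho^\ast) - o(1) =
\ms S_\Lambda(E) - o(1)$. In particular $\ms S_{\Omega_n}(E_n)$ is bounded below.

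For the upper bound, let $\rho_n$ be a maximizer on $\Omega_n$ with energy $E_n$, so
$\ms S_{\Omega_n}(E_n) = \Cal S(\rho_n)$. Combining the lower bound just obtained with the
trivial bound $\Cal S(\rho_n) \le \ln|\Omega_n| \to \ln|\Lambda|$, the quantities
$\int \rho_n \ln\rho_n$ are bounded; by de la Vall\'ee--Poussin this makes $\{\rho_n\}$
uniformly integrable, hence, after extending by zero to a fixed ball and passing to a
subsequence, $\rho_n \rightharpoonup \rho$ weakly in $L^1$. Uniform integrability together
with $|C_{\eps_n}| \to 0$ forces $\int_{C_{\eps_n}} \rho_n \to 0$, so all the mass survives
in $\bar\Lambda$ and $\int_\Lambda \rho = 1$. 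Next I would show $\Cal E_\Lambda(\rho) = E$:
the energy bound gives $\Psi_n$ bounded in $H^1_0(\Omega_n)$, and the dumbbell geometry
(Mosco convergence $H^1_0(\Omega_n) \to H^1_0(\Lambda)$, the channels being too thin to carry
$H^1_0$ capacity) forces the weak limit $\Psi$ to lie in $H^1_0(\Lambda)$ and to solve
$-\lap\Psi = \rho$ on $\Lambda$; elliptic regularity together with the entropy bound yield
strong convergence of $\Psi_n$ in $L^2$, so that $\Cal E(\rho_n) = \tfrac12\int\rho_n\Psi_n \to
\tfrac12\int\rho\Psi = \Cal E_\Lambda(\rho)$, i.e. $\Cal E_\Lambda(\rho) = E$. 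Thus $\rho$ is
admissible for the MVP on $\Lambda$ at energy $E$, and by weak-$L^1$ lower semicontinuity of
$\rho\mapsto\int\rho\ln\rho$ we get $\Cal S(\rho) \ge \limsup_n \Cal S(\rho_n)$, whence
$\ms S_\Lambda(E) \ge \Cal S(\rho) \ge \limsup_n \ms S_{\Omega_n}(E_n)$.

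The two inequalities give $\ms S_{\Omega_n}(E_n) \to \ms S_\Lambda(E)$, and since
$\Cal S(\rho) \ge \ms S_\Lambda(E)$ with $\rho$ admissible, $\rho$ is a maximizer on
$\Lambda$. For the strong $H^1$ convergence I would use that
$\tfrac12\|\grad\Psi\|^2 = E = \lim_n \tfrac12\|\grad\Psi_n\|^2$ together with the weak
convergence $\Psi_n \rightharpoonup \Psi$ in the Hilbert space $H^1_0$: norm convergence plus
weak convergence imply strong convergence. Finally, to obtain $\beta_n \to \beta$ and
$Z_n \to Z$ I would pass to the limit in \eqref{eq:lambda}, equivalently in
$\rho_n = Z_n^{-1}\e^{-\beta_n\Psi_n}$ with $Z_n = \int_{\Omega_n}\e^{-\beta_n\Psi_n}$: since
$E_n$ remains in a compact subset of $(0,\infty)$, the inverse temperatures $\beta_n$ and the
normalizations $Z_n$ stay bounded, so along a further subsequence $\beta_n\to\bar\beta$,
$Z_n\to\bar Z$; the strong convergence of $\Psi_n$ then identifies
$\rho = \bar Z^{-1}\e^{-\bar\beta\Psi}$, and uniqueness of the inverse temperature attached to
the MVP solution $\rho$ on $\Lambda$ (Theorem \ref{teo:mvpdisconnessi}) gives
$\bar\beta = \beta$, $\bar Z = Z$; as the limit is independent of the subsequence, the whole
sequences converge.

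The main obstacle is the potential-theoretic decoupling across the shrinking channels, which
is used twice: in the lower bound, that the $\Lambda$-optimizer placed in $\Omega_n$ has energy
converging to $E$ (vanishing channel capacity), and in the upper bound, the Mosco convergence
$H^1_0(\Omega_n)\to H^1_0(\Lambda)$ ensuring that the limiting stream function is admissible for
$\Lambda$ and that the energy passes to the limit with no spurious contribution from the
channels; here the hypotheses that the channel widths $\eps_n$ and the measure of
$\Omega_n\cap\pa\Lambda$ vanish are essential. The remaining delicate point is the uniform
control of $\beta_n$ in the regime where the disconnected MVP may carry $\beta\le -8\pi$.
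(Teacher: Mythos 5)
Your proposal reproduces the overall skeleton of the paper's proof: the entropy sandwich $\ms S_\Lambda(E_n)\le \ms S_{\Omega_n}(E_n)\le \ln|\Omega_n|$, weak $L^1$ compactness of $\rho_n$ from the entropy bounds, identification of the limit energy, strong $H^1$ convergence from weak convergence plus convergence of norms, and a subsequence argument for $\beta_n, Z_n$. However, it has a genuine gap exactly where the paper's proof does its real work: the boundedness of $\beta_n$ and $Z_n$. You assert that ``since $E_n$ remains in a compact subset of $(0,\infty)$, the inverse temperatures $\beta_n$ and the normalizations $Z_n$ stay bounded'', but no general principle yields this: the $\Omega_n$ are dumbbell domains, i.e.\ precisely the second-kind domains for which the energy--temperature relation is multivalued and $\beta$ may lie below $-8\pi$, so bounded energy does not a priori exclude $\beta_n\to-\infty$ compensated by $\ln Z_n\to+\infty$ in the identity $\Cal S(\rho_n)=2\beta_n E_n+\ln Z_n$. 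The paper closes this by a specific argument: a.e.\ convergence of $\Psi_k$ along subsequences, the positive-measure sublevel set $C=\Psi^{-1}([0,E/2])$, the monotone sets $C_n$ on which $\int_{C_n}\rho_k\ge c>0$, and the bound $\rho_k\le \e^{-\beta_k E_k}/Z_k = \e^{-\Cal S(\rho_k)/2}/Z_k^{1/2}$ on $C_n$, which gives two-sided bounds on $Z_k$ and hence on $\beta_k$. You explicitly flag this as ``the remaining delicate point'', which amounts to acknowledging the gap rather than closing it; as written, the last assertion of the theorem ($\beta_n\to\beta$, $Z_n\to Z$) is not established.

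Two further steps are under-justified, though fixable. First, strong $L^2$ convergence of $\Psi_n$ together with weak $L^1$ convergence of $\rho_n$ does \emph{not} imply $\int\rho_n\Psi_n\to\int\rho\Psi$: a small-measure spike of $\rho_n$ sitting where $\Psi_n-\Psi$ is logarithmically large can destroy convergence. What rules this out is the entropy bound used through the entropy/exponential duality and Moser--Trudinger, which is exactly the content of Lemma \ref{lemma:e} (Proposition~2.1 of \cite{CLMP2}) that the paper invokes; your chain of implications skips this. Second, your lower bound via a recovery sequence needs an actual construction behind ``a small perturbation matches the constraint $\Cal E_{\Omega_n}(\cdot)=E_n$ at the cost of an $o(1)$ loss of entropy'' (e.g.\ interpolation with the uniform density on $\Omega_n$, which lowers the energy while raising the entropy by concavity, or with a fixed bump to raise it, with $O(s_n)$ entropy cost); the paper avoids this by using the inequality $\ms S_\Lambda(E_n)\le\ms S_{\Omega_n}(E_n)$ directly. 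Finally, your appeals to ``vanishing channel capacity'' and Mosco convergence are the right heuristics, but they are precisely what the hypothesis on the vanishing measure of $\Omega_n\cap\pa\Lambda$ must be used to prove; the paper does this concretely via the monotonicity $G_n\le G_k$ and the harmonic representation of $G_k-G$ in terms of its boundary values on $\pa\Lambda$.
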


We use the following lemma (see the proof of Proposition~2.1 in \cite{CLMP2}).
\begin{lemma}
  \label{lemma:e}
  Let $\rho_n$ be a sequence of probability densities on an
  open bounded set $\Omega$,
  with $\Cal S_\Omega(\rho_n)$ bounded from below and
  converging   weakly in $L^1(\Omega)$
  to a probability density $\rho$. Then
  $\ms E_\Omega(\rho_n) \to \ms E_\Omega(\rho)$.
\end{lemma}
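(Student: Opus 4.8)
The plan is to pass to the limit directly in the bilinear representation of the energy. Writing $\Psi=G*\rho$ as in \eqref{eq:green1}, one has
\[
\ms E_\Omega(\rho)=\frac12\int_\Omega\int_\Omega G(x,y)\,\rho(x)\rho(y)\de x\de y,
\]
so by the splitting \eqref{eq:green} it suffices to treat separately the regular kernel $\gamma\in C(\overline\Omega\times\overline\Omega)$ and the singular kernel $W(x,y)=-\tfrac12\ln|x-y|$. The entropy hypothesis will enter only through the singular part: since $\Cal S_\Omega(\rho_n)\ge-C$ means $\int_\Omega\rho_n\ln\rho_n\le C$, the family $\{\rho_n\}$ is uniformly integrable and, using $t\ln t\ge-1/e$, for every $K>1$
\[
\int_{\{\rho_n>K\}}\rho_n\le\frac{1}{\ln K}\int_{\{\rho_n>K\}}\rho_n\ln\rho_n\le\frac{1}{\ln K}\Big(C+\tfrac{|\Omega|}{e}\Big)=:\delta_K\to0\quad(K\to\infty),
\]
uniformly in $n$.

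First I would record the elementary fact that bounded continuous kernels are harmless. If $k\in C(\overline\Omega\times\overline\Omega)$, then $f_n(x)\coloneqq\int_\Omega k(x,y)\rho_n(y)\de y$ is uniformly bounded by $\|k\|_\infty$ and equicontinuous (with the modulus of continuity of $k$, since $\int\rho_n=1$), hence $f_n\to f\coloneqq\int_\Omega k(\cdot,y)\rho(y)\de y$ uniformly by Ascoli--Arzel\`a, the pointwise limit being identified through the weak $L^1$ convergence $\rho_n\debole\rho$. Then
\[
\int_\Omega\int_\Omega k\,\rho_n\rho_n=\int_\Omega f_n\rho_n\longrightarrow\int_\Omega f\rho=\int_\Omega\int_\Omega k\,\rho\rho,
\]
because $\|f_n-f\|_\infty\to0$ and $\int_\Omega f(\rho_n-\rho)\to0$. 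This disposes of the $\gamma$-term at once (no entropy needed), and it also handles the truncated logarithm $W_M\coloneqq\min(W,M)\in C(\overline\Omega\times\overline\Omega)$.

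For the singular part I would estimate the tail $W-W_M\ge0$, which is supported in $\{|x-y|<e^{-2M}\}$, uniformly in $n$. By the Gibbs inequality $\int_\Omega\rho_n V-\int_\Omega\rho_n\ln\rho_n\le\ln\int_\Omega\e^{V}$ applied with $V=\theta(-\ln|x-y|)\mathbbm 1_{B_r(x)}$ and $\theta<2$ (so that $\int_{B_r}|z|^{-\theta}\de z<\infty$), the inner potential $g_n(x)\coloneqq\int_{B_r(x)}(-\ln|x-y|)\rho_n(y)\de y$ is bounded by a constant $B$ independent of $x,n$ and of $r\le\mathrm{diam}\,\Omega$. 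Splitting the double integral over $\{\rho_n(x)\le K\}$ and $\{\rho_n(x)>K\}$, and using $\int_\Omega g_n\le C(r)\coloneqq\sup_x\int_{B_r(x)}(-\ln|x-y|)\de x\to0$,
\[
\int_\Omega\!\int_{B_r(x)}(-\ln|x-y|)\rho_n(x)\rho_n(y)\de y\de x\le K\int_\Omega g_n+B\!\int_{\{\rho_n>K\}}\!\rho_n\le K\,C(r)+B\,\delta_K .
\]
Choosing first $K$ large (so $B\delta_K$ is small) and then $r=e^{-2M}$ small (so $K\,C(r)$ is small) makes the tail arbitrarily small, uniformly in $n$; the same bound holds for $\rho$, whose entropy is finite by weak lower semicontinuity of $\int\rho\ln\rho$. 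A standard $\varepsilon/3$ argument (tail in $n$, tail in $\rho$, and convergence of the bounded piece $\int\!\int W_M\rho_n\rho_n$ from the second paragraph) then yields $\int\!\int W\rho_n\rho_n\to\int\!\int W\rho\rho$, whence $\ms E_\Omega(\rho_n)\to\ms E_\Omega(\rho)$. The main obstacle is precisely this uniform control of the diagonal concentration of $\rho_n$ against the logarithmic kernel, and it is exactly here that the entropy lower bound is indispensable: without it the quadratic energy is only weakly lower semicontinuous, not continuous.
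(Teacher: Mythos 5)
Your overall strategy is sound, and it is essentially the classical argument behind Proposition~2.1 of \cite{CLMP2}, which is what the paper cites in place of a proof: truncate the logarithmic kernel, and use the entropy bound (via the Gibbs/duality inequality $\int\rho V-\int\rho\ln\rho\le\ln\int\e^{V}$) to make the near-diagonal tail uniformly small. Your treatment of the singular part $W$ is correct: $\min(W,M)$ is genuinely continuous on $\overline\Omega\times\overline\Omega$ because $W\to+\infty$ on the \emph{whole} diagonal, the uniform bound on $g_n$ is right, and the splitting over $\{\rho_n\le K\}$, $\{\rho_n>K\}$ together with $\int_\Omega g_n\le C(r)$ gives the required uniform tail estimate, also valid for the limit $\rho$.

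There is, however, one genuine gap: the claim $\gamma\in C(\overline\Omega\times\overline\Omega)$ is false, and $\gamma$ is not even bounded below. Indeed, for $p\in\partial\Omega$ one has $G(p,\cdot)\equiv 0$, hence $\gamma(p,y)=\tfrac12\ln|p-y|\to-\infty$ as $y\to p$; equivalently, for the unit disk $\gamma(y,y)=\tfrac12\ln\bigl(1-|y|^2\bigr)\to-\infty$ at the boundary. So $\gamma$ blows down to $-\infty$ at diagonal boundary points, your Ascoli--Arzel\`a step does not apply to it, and the parenthetical ``no entropy needed'' for the $\gamma$-term is precisely where the argument breaks: one must rule out $\rho_n$ concentrating near a boundary point, and that again requires the entropy. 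The repair stays entirely within your toolbox. By the maximum principle $\gamma(x,y)\le\tfrac12\ln\operatorname{diam}\Omega$, and $G\ge 0$ gives $\gamma\ge -W$, so the negative part of $\gamma$ is dominated by $W_+$. Truncate from below: $\gamma_M\coloneqq\max(\gamma,-M)$ \emph{is} bounded and continuous on $\overline\Omega\times\overline\Omega$ (the truncation removes the $-\infty$ set exactly as $\min(W,M)$ removes the $+\infty$ set), so your continuous-kernel lemma applies to it; the error satisfies $0\le\gamma_M-\gamma=(-\gamma-M)_+\le (W-M)_+$, and its contribution against $\rho_n\otimes\rho_n$ (and $\rho\otimes\rho$) is uniformly small by the very same entropy tail estimate you already proved for $W$. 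With this modification the proof is complete and coincides, in substance, with the cited one.
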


\begin{proof}[Proof of Theorem \ref{thm:convergence}] 
  By Theorem \ref{propo:mvp} (MVP) for any $n \in \N$,
  $\ms S_{\Omega_{n}}(E_n)$ is attained in some $\rho_n$, which solves
  the MFE \eqref{eq:mfe} with inverse temperature $\beta_n$. Since we
  have
  \begin{equation*}
    \ms S_{\Lambda}(E)\leftarrow \ms S_{\Lambda}(E_n)
    \le \ms S_{\Omega_n}(E_n) = \Cal{S}_{\Omega_{n}}(\rho_n) \leq
    \ln |\Omega_{n}| \to  \ln |\Lambda| 
 \end{equation*}
 we get that, 
 up to
 subsequences, $\rho_n$ converges weakly   weakly in $L^1$
 to a probability density $\rho$.
 Moreover, we have
 $\ms S_{\Lambda}(E) \leq \limsup_{n} \Cal{S}_{\Omega_{n}}(\rho_{n} )
 \leq \mathcal{S}_{\Lambda}(\rho)$.  We now prove that
 $\Cal E(\rho) = E$, so that
 $\ms S_\Lambda(E) = \Cal S_\Lambda(\rho)$.

Let $G_{n}(x,y) $ and $G(x,y)$ be  the Green functions on
$\Omega_{n}$ and $\Lambda$,
respectively, extended to zero on the complementary sets
Given $k \in \N$, for any $n > k$ and $(x,y) \in
\Omega_{n} \times \Omega_{n}$ we have that 
$G_{n}(x,y) \leq G_{k}(x,y)$,
as follows from the positivity of $G_k(x,y)$ and the maximum principle
applied to $G_{n}-G_{k}$.
Then
$$2E_n \le 
\iint_{\Omega_{k} \times \Omega_{k}} \rho_n(x) G_{k}(x,y) \rho_n(y).$$
By the lemma \ref{lemma:e}, for any $k$
$$2E \le 
\iint_{\Lambda \times \Lambda} \rho(x) G_{k}(x,y) \rho(y).
$$
For any $x \in \Lambda$,  the function $y\mapsto G_{k}(x,y) - G(x,y)$
is harmonic in $\Lambda$  and 
$G_{k}(x,y) - G(x,y) = G_{k}(x,y)$ for any $y \in \partial  \Lambda$.
Hence 
 \begin{equation*}
   G_{k}(x,y) - G(x,y) = - \int_{\partial \Lambda}
   G_{k}(x,z) \frac{\partial G}{ \partial \nu} (z,y) \,d\sigma(z)
   \qquad \forall ( x,y) \in \Lambda \times \Lambda
 \end{equation*}
 where $\nu$ is the outer  normal to $\partial \Lambda$.
 For any $(x,z) \in \Lambda\times \pa \Lambda$ the sequence 
 $G_k(x,z)$ is decreasing, and, by the construction of $\Omega_n$, 
 vanishes for $k\to +\infty$, a.e. with respect to $\de\sigma$.
 Therefore
 $G_{k}(x,y) - G(x,y)  \to 0 $  for any
 $(x,y) \in \Lambda \times \Lambda$. Finally,  by dominated convergence 
 $$  2E \le  \iint_{\Lambda\times \Lambda} \rho(x) G_k(x,y) \rho(y)\to 
 \iint_{\Lambda\times \Lambda} \rho(x) G(x,y) \rho(y)
 = 2\Cal E_\Lambda(\rho).$$
 To prove that $\Cal E_\Lambda(\rho) \le E$, we set 
 $\Psi_n = (- \Delta)^{-1} \rho_n $  and $\Psi = ( - \Delta)^{-1}
 \rho \in H^1_{0}(\Lambda)$.
 We have that $\Psi_n \rightharpoonup   \Psi$
 weakly  in $ H^1$, hence 
 $ \Cal E_{\Lambda} (\rho) =\frac{1}{2} \int_{\Lambda} | \nabla  \Psi |^2 \leq \liminf_n  \Cal E_{\Omega_{n}} (\rho_n)  = E$.

 Since $E=\Cal E_\Lambda(\rho)$ and $\ms S_\Lambda(E) \le  \Cal
 S_\Lambda(\rho)$,  we have that $\Cal S_{\Lambda}(\rho) = \ms
 S_{\Lambda}(E)$, and consequently $\ms S_{\Omega_n}(E_n) \to
 \ms S_\Lambda(E)$.
 The convergence of the energy, i.e. of $\|\Psi_n\|_{H^1_0}$,
 assure that  $\Psi_n \to \Psi$ strongly in $H^1$.

 We now prove the convergence of $\beta_n$ to $\beta$,
 which is the inverse temperature for 
 the MFE \eqref{eq:mfe} for $\Psi$.
 For any subsequence, 
 there exists a subsequence such that $\Psi_k(x) \to \Psi(x)$
 a.e. in $\Lambda $. Since $\Psi$ is continuous, positive on $\Lambda$
 and $0$ in $\pa \Lambda$, 
 the closed set $C= \Psi^{-1} ([0, E/2] )$ has positive measure.
 For any $n \in \N$, set $C_{n} =  \{x
 \in \bar \Lambda:\,
 0\leq \Psi_k(x)  \leq E_k,   \, \forall k \geq n \} \cap C$. We have 
 $C_n \subseteq C_{n+1} $ and   $\bigcup_{n \in \N} C_{n} = C$
 up to a set of zero measure, therefore $|C_n| \to |C| $ as $n \to + \infty$.
 Since $\int_{C_n} \rho_k \to \int_{C_n}\rho$,
 for $n$ sufficiently large,
 for any $k\ge n$, we have 
 $0 < c \le \int_{C_n} \rho_k \le 1$.
 Now we prove that $Z_k$ lies in a compact subset of $(0,+\infty)$.
 Note 
 that if $\beta_k \le 0$, $Z_k \ge |\Omega_k|$,
 while if $\beta_k > 0$, $Z_k \in (0,|\Omega_k|)$,
 and recall that 
 \begin{equation*}
   \mathcal{S}_{\Omega_{k}}(\rho_k) = 2  \beta_k E_k+ \ln Z_k.
 \end{equation*}
 If $\beta_k \le 0$ we have 
 \begin{align*}
   c \le \int_{C_n} \rho_{k}
   \leq \frac{ \text{e}^{- \beta_k E_k }}{Z_k} |C_n |  =
   \frac{ \text{e}^{-\frac{1}{2}  \mathcal{S}_{\Omega_k}
   (\rho_k)  }} {Z_k^{\frac{1}{2}}} |C_n |
 \end{align*}
 then $Z_k$ is bounded from above. If $\beta_k> 0$
 \begin{align*}
   1 \geq  \int_{C_n} \rho_{k} \geq \frac{ \text{e}^{- \beta_k E }}{Z_k}
   |C_n |  =  \frac{ \text{e}^{-\frac{1}{2}  \mathcal{S}_{\Omega_k}(\rho_k)  }}
   {Z_k^{\frac{1}{2}}} |C_n |
 \end{align*}
 then $Z_k$ is bounded from below by a positive constant.
 Consequently, up to subsequence, $Z_k \to \tilde Z\in (0,+\infty)$, and then,
 from the relation with the entropy,
 $\beta_k \to \tilde{\beta}\in \R$.
 Moreover 
 $\rho_k(x) \to
 \frac{ \text{e}^{- \tilde{\beta} \psi (x)}}{\tilde{Z}}  =
 \rho(x) =  \frac{ \text{e}^{- \beta \psi (x)}}{Z}  $ a.e. in
 $\Lambda$  which implies $\tilde{\beta} = \beta $ and 
 $ \tilde{Z} = Z$.
\end{proof}


\vskip.3cm

Now we consider $\Lambda$ the union of $N$ disjoint disks,
as in section \ref{sez:piudischi}.
We prove that for $N$ large 
the $N$-dumbbell
domains $\Omega_{n}$ have branches of
solutions of the MFE close to those obtained for the disconnected
set $\Lambda$. The proof can be easily adapted to the case of
deformed disks considered in Theorem \ref{teo:he}.

We first consider the solutions of \eqref{eq:lambda}.
Note that
the inverse temperature  $\beta$ is a function of
$\lambda$ and of the solution $U$, 
hence different solutions of \eqref{eq:lambda} with the same $\lambda$ provide
solutions of the MFE with different $\beta$.
Also the energy does not depend only on the parameter $\lambda$,
and its expression  in terms  $U$ if given by
\begin{equation}
  \label{eq:energiaU}
  E = \frac 1{2\lambda^2 Z^2}  \int |\grad U|^2 = 
  \frac 1{2\lambda Z^2}  \int U\e^{U}.
\end{equation}
To clarify
what can happen, 
recall that we have constructed the branches of solutions 
in the parameter $\mu\in [0,1]$. Using the relation 
between $Z$ and $\mu$ in \eqref{eq:termo}
(and in \eqref{eq:termo2} for the case of merged branches),
we have that
$$\lambda = \frac {8\pi}{a_1} \mu (1-\mu),$$
hence, in particular, $\lambda$ 
is proportional to $\gamma$ 
in \eqref{eq:mui}.
The parameter $\lambda$ has the maximum value 
$\lambda_c=2\pi/a_1$.  The values of $\mu\in (0,1/2)$
describes the right part of the branches in the plane $(\beta,E)$,
while $\mu\in (1/2,1)$ describe the left part.
So that, fixed $\lambda \in (0,\lambda_c)$,
there exist two solutions of
eq. \eqref{eq:lambda}
for any branch.
We show in figure \ref{fig:lambda} how the branches of solutions
appear in the $(\lambda,E)$ plane.
We consider $N=3$, and different values of $a_2$ and $a_3$, 
(see figure \ref{fig:tre12} and \ref{fig:locale}
for the analogous figures
in the $(\beta,E)$ plane).

\begin{figure}[h]
  \centering
  \parbox{15cm}{
    \includegraphics[scale=0.27]{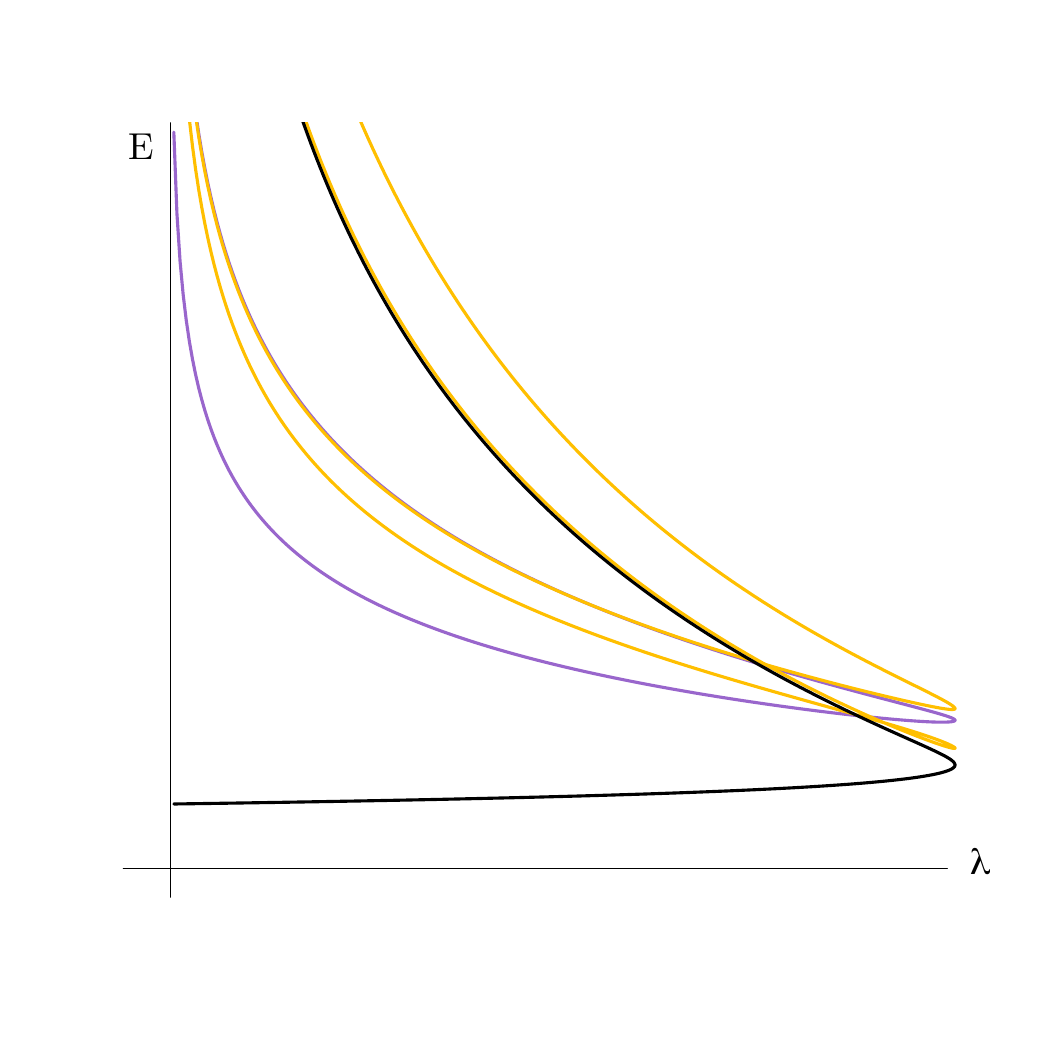}
    \includegraphics[scale=0.27]{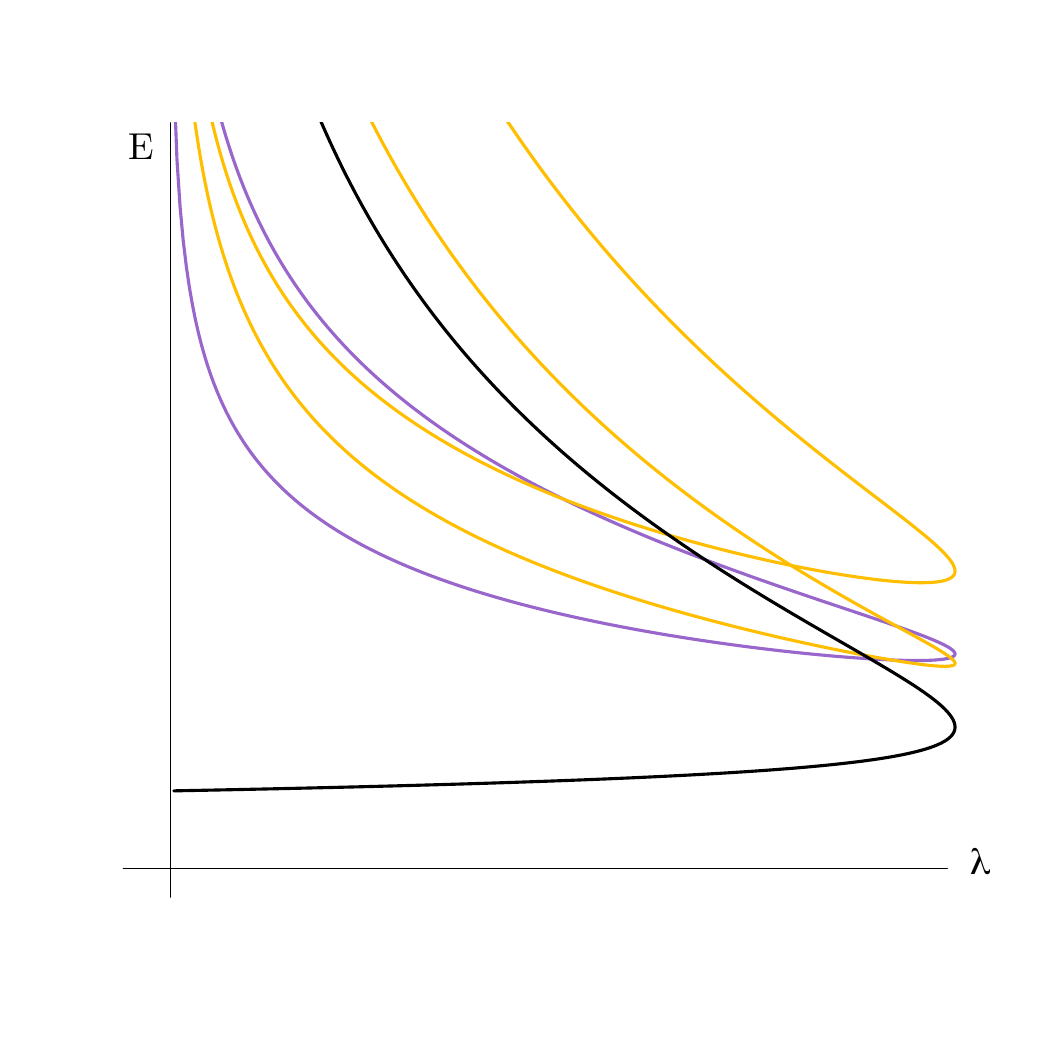}
    \includegraphics[scale=0.27]{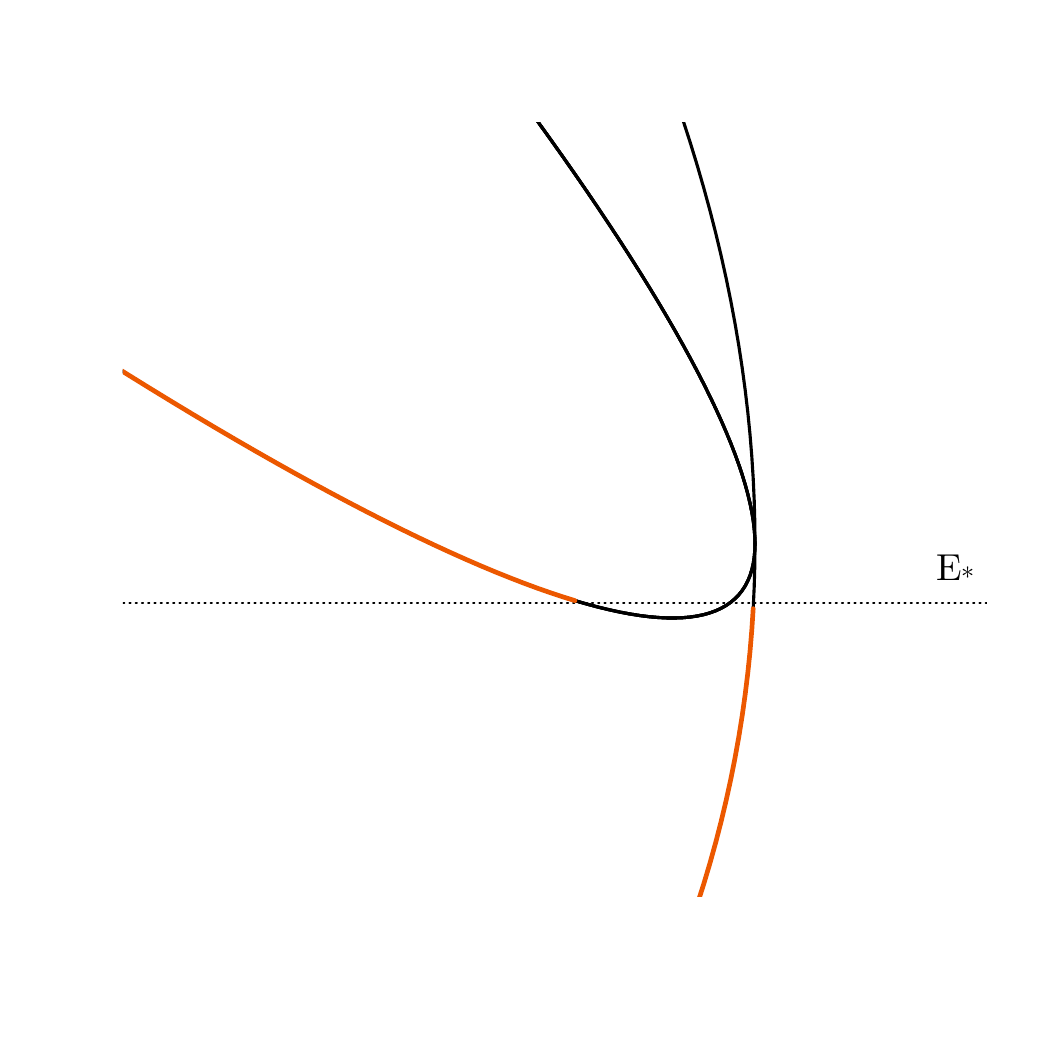}
  }
  \caption{\small The branches of solutions in the $(\lambda,E)$ planes.
    On the left $(a_2,a_3)=(0.6,0.2)$ and in the center $(a_2,a_3) =
    (0.96,0.5)$; in this two graphs,
    the lowest branch is the $0$-branch,
    the next one is the $1-$branch.
    On the right
    we show the branches 
    near $\lambda_c$
    in the case 
    $(a_2,a_3) = (1,1)$;
    the solution of
    the MVP jumps at
    $E=E_*$ from the $0-$merged-branch to the $1-$merged-branch.
  }
  \label{fig:lambda}
\end{figure}

It is clear that it is convenient to parametrize
the solutions of
the MFE \eqref{eq:mfe} on $\Lambda$ on a
$k-$ branch (or on a $k-$merged-branch)
as $\{  \Psi_{\mu} \}_{\mu \in (0,1)}$.
We set 
$\lambda(\mu) = \frac{ - \beta(\mu) }{Z(\mu) } =  
\frac
{8\pi }{a_1} (1-\mu)\mu =  \frac{ - M_i \beta }{Z_i}$,
and 
$U_{\mu} = - \beta(\mu) \Psi_{\mu}$.
We have the following result.

\begin{proposition}
  \label{prop:gluing}
  Let  $\Omega_{n}\searrow \Lambda$, as $n \to + \infty$.
  For any compact set $W \subset (0, 1) \setminus \{\frac{1}{2}\}$
  there exists  $r_W >0$ such that 
  for any $r\in(0,r_W)$ there exists $n_r$ such that
  for any   $n \geq n_r$   and 
  $\mu \in  W $ there exists  
  $U^{n}_{\mu} \in 
  H^1_0 (\Omega_n)  $ which  is the unique solution  in
  $B_{r}(U_{\mu})$ of \eqref{eq:lambda} in $\Omega_n$
  with $\lambda = \lambda(\mu)$.
\end{proposition}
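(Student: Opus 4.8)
The plan is to produce $U^n_\mu$ by a contraction mapping (equivalently, the implicit function theorem) based at the exact solution $U_\mu$ on the limiting disconnected set $\Lambda$; the decisive ingredient is the \emph{non-degeneracy} of $U_\mu$ as a solution of \eqref{eq:lambda}, which is precisely the property that fails at $\mu=1/2$ and explains why that point must be removed. The linearized operator on $\Lambda$ is $L_\mu := -\lap - \lambda(\mu)\,\e^{U_\mu}$ on $H^1_0(\Lambda)$, and since $\Lambda$ is a disjoint union of disks it splits over the components $D_{a_i}$. On each disk the solution \eqref{eq:psirho} is explicit and radial, and a direct computation shows that its only degeneracy is the radial turning point of the fold $\mu_i\mapsto\lambda$, i.e. $\mu_i=1/2$, where a one–dimensional kernel appears. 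Because $\mu_i^\pm=\tfrac12\big(1\pm\sqrt{1-4a_i\mu(1-\mu)}\big)$ equals $1/2$ only when $\mu(1-\mu)=1/(4a_i)$, which forces $a_i=1$ and $\mu=1/2$, excluding $\mu=1/2$ keeps every $\mu_i$ in a fixed compact subset of $(0,1)\setminus\{1/2\}$ for $\mu\in W$. Hence $L_\mu$ is invertible, and by continuity of $\mu\mapsto L_\mu$ and compactness of $W$ one gets $C_W:=\sup_{\mu\in W}\|L_\mu^{-1}\|_{H^{-1}\to H^1_0}<\infty$.

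Next I would build an approximate solution on $\Omega_n$ by taking $\tilde U^n_\mu\in H^1_0(\Omega_n)$, the extension of $U_\mu$ by zero on $\Omega_n\setminus\Lambda$ (admissible since $U_\mu=0$ on $\pa\Lambda$). Its residual $\Cal R^n_\mu:=-\lap\tilde U^n_\mu-\lambda(\mu)\,\e^{\tilde U^n_\mu}$ vanishes on $\Lambda$, equals $-\lambda(\mu)$ on the channels $\Omega_n\setminus\Lambda$, and carries the normal–derivative jump of $U_\mu$ as a surface measure on $\pa\Lambda\cap\Omega_n$. Estimating in $H^{-1}(\Omega_n)$, the channel term is $O\big(|\Omega_n\setminus\Lambda|^{1/2}\big)$ via Poincaré in the thin channels, while the surface term is bounded by $\|\grad U_\mu\|_{L^\infty(\pa\Lambda)}$ times a trace norm over the short arc $\pa\Lambda\cap\Omega_n$, whose length $\to 0$ by hypothesis. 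Both contributions vanish as $n\to\infty$, uniformly in $\mu\in W$, so $\sup_{\mu\in W}\|\Cal R^n_\mu\|_{H^{-1}}\to 0$.

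The main obstacle is the uniform invertibility of the linearization $L^n_\mu:=-\lap-\lambda(\mu)\,\e^{\tilde U^n_\mu}$ on $H^1_0(\Omega_n)$, which I would establish by contradiction. If it failed there would be $n_k\to\infty$, $\mu_k\to\mu_*\in W$ and $w_k\in H^1_0(\Omega_{n_k})$ with $\|w_k\|_{H^1_0}=1$ and $L^{n_k}_{\mu_k}w_k\to 0$ in $H^{-1}$. Extending $w_k$ by zero to a fixed ball, the Moser--Trudinger inequality makes $w\mapsto \e^{\tilde U}w$ compact from $H^1_0$ to $H^{-1}$ in two dimensions, so from $-\lap w_k=\lambda(\mu_k)\,\e^{\tilde U^{n_k}_{\mu_k}}w_k+o(1)$ one obtains strong convergence $w_k\to w_*$ in $H^1_0$ with $\|w_*\|=1$, while Poincaré in the thin channels gives $\int_{\Omega_{n_k}\setminus\Lambda}w_k^2\to 0$, so that $w_*\in H^1_0(\Lambda)$ and $L_{\mu_*}w_*=0$ — contradicting the non-degeneracy above. (Structurally, no spurious small modes can hide in the channels because there the Dirichlet Laplacian has first eigenvalue $\sim\eps_n^{-2}$ while the potential $\e^{\tilde U}$ stays bounded.) The varying–domain passages use the convergence of Green's functions as in Theorem \ref{thm:convergence}. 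This yields $\sup_{n\ge n_0,\ \mu\in W}\|(L^n_\mu)^{-1}\|\le 2C_W$.

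Finally I would close by the contraction. Writing $U=\tilde U^n_\mu+w$ with $w\in H^1_0(\Omega_n)$, equation \eqref{eq:lambda} becomes $L^n_\mu w=\Cal R^n_\mu+\Cal Q(w)$, where $\Cal Q(w)=\lambda(\mu)\big(\e^{\tilde U^n_\mu+w}-\e^{\tilde U^n_\mu}-\e^{\tilde U^n_\mu}w\big)$ satisfies $\|\Cal Q(w)\|_{H^{-1}}\le C\|w\|^2_{H^1_0}$ on bounded balls, again by Moser--Trudinger. The map $T(w)=(L^n_\mu)^{-1}\big(\Cal R^n_\mu+\Cal Q(w)\big)$ is then a contraction of a small ball once $\|\Cal R^n_\mu\|_{H^{-1}}$ is small, i.e. $n$ large, and it produces a unique fixed point with $\|w\|_{H^1_0}\le 2C_W\|\Cal R^n_\mu\|_{H^{-1}}\to 0$. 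Thus $U^n_\mu:=\tilde U^n_\mu+w$ solves \eqref{eq:lambda} in $\Omega_n$ with $\lambda=\lambda(\mu)$ and lies in $B_r(U_\mu)$, uniqueness in the ball being part of the contraction. Concretely, $r_W$ is fixed by $C_W$ and the quadratic constant $C$ (the radius on which $T$ is a self-map and a contraction), and for each $r\in(0,r_W)$ one chooses $n_r$ so large that the residual bound forces $w$ into $B_r$, which gives the statement.
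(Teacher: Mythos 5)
Your proposal is correct and takes essentially the same route as the paper's proof: extend $U_\mu$ by zero as an approximate solution, show the $H^{-1}$ residual (channel term plus the normal-derivative term on $\pa(\Omega_n\setminus\Lambda)\cap\pa\Lambda$) vanishes uniformly in $\mu\in W$, prove uniform invertibility of the linearized operator on $\Omega_n$ by a compactness/contradiction argument resting on the non-degeneracy of $U_\mu$ on each component (which the paper imports from \cite{S}, and which your fold analysis at $\mu_i=1/2$ correctly localizes), and conclude with a contraction mapping whose quadratic estimate comes from Moser--Trudinger. The remaining discrepancies are cosmetic: a harmless sign slip in the residual equation, and your Picard map $T(w)=(L^n_\mu)^{-1}(\Cal R^n_\mu+\Cal Q(w))$ versus the paper's equivalent formulation $G^n_\mu(v)=v-S^n_\mu F^n_\mu(v)$.
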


\begin{proof}
We set   $U^{(i)}_{\mu }= U_{\mu}|_{\Lambda_i } $, for  $i =1, \dots, N$.
We have 
$ - \Delta U^{(i)}_{\mu} = \lambda(\mu) \text{e}^{U^{(i)}_{\mu} }  $
in $ \Lambda_i $.
Consider  the linear operator 
$T^{(i)}_{\mu}:  H^1_0(\Lambda_i )  \to H^{-1} (\Lambda_i ) $ 
given  by 
\begin{equation*}
	\langle T^{(i)}_{\mu} f, g \rangle = \int_{\Lambda_i} \nabla f \cdot \nabla g- \lambda(\mu) \int_{\Lambda_i} \text{e}^{ U^{(i)}_{\mu}} f  g  , \qquad  \qquad \forall f, g \in H^1_0(\Lambda_i ).
\end{equation*}
By \cite{S} we know that
if $\mu \in W$, $\text{Ker}\, T^{(i)}_{\mu} = \{0\} $,
and if $\mu_n \in W$ then $U^{(i)}_{\mu_n} \to U^{(i)}_{\mu_0}$
in  $C^{2+\alpha}(\Lambda_i) \cap C(\bar{\Lambda}_i)$
(up to subsequence) for some $\mu_0 \in W$.  

We extend $U_\mu$ to zero in the complementary of $\Lambda$,
and we define  $T^n_{\mu}  : H^1_0(\Omega_{n} ) \to H^{-1} (\Omega_{n} )$
as follows
\begin{align*}
  \langle T_{\mu}^n  f, g \rangle
  &= \int_{\Omega_{n}}
    \nabla f \cdot \nabla g -\lambda(\mu)
    \int_{\Omega_{n}} 	\text{e}^{U_{\mu}} f g \\
  &= \int_{\Omega_{n}}
    \nabla f \cdot \nabla g - \lambda(\mu)  \sum_{i=1}^N \int_{\Lambda_i} 
    \text{e}^{U^{(i)}_{\mu}} f g -
    \lambda(\mu)  \int_{\Omega_{n} \setminus \Lambda } fg. 
\end{align*}
We now prove that 
there exist $n_W>0$ and  $C_W>0$ such that
for  any   $n\ge n_W$ and any 
 $  \mu \in  W$ 
 \begin{equation*}
   \| T_{\mu}^{n}  f \|_{H^{-1}} \ge  C_W
   \| f \|_ {H^1_0} \qquad \forall f \in  H^{1}_0(\Omega_{n}) .
\end{equation*}	
If not,  there exists a diverging
sequence $n_k\in \N$, a sequence $\mu_k  \to \mu_0  \in  W$,   
and $h_k \in   H^1_0(\Omega_{n_k}) $,  
$\| h_k \|_{H^1_0}=1$ such that  
$\| T^{n_k}_{\mu_k}  \, h_k \|_{H^{-1}}\to 0 $. 
Then, up to subsequence, 
$h_k\rightharpoonup h_0$ 
weakly in $H^1$ and  $h_k  \to h_0$ strongly in $L^2$ 
Moreover,
since $\text{Int} \, (\bigcap_{n=1}^{+\infty} \Omega_{n_k})= \Lambda$,
we have that
$h_0 \in H^1_0(\Lambda)$ and    
\begin{align*}
  \langle T_{\mu_k}^{n_k}  h_k, g \rangle 
  &= \int_{\Omega_{n_k}} \nabla h_k \cdot \nabla g -
    \lambda(\mu_k)
    \sum_{i=1}^N \int_{\Lambda_i} 
    \text{e}^{U^{(i)}_{\mu_k}} h_k g -
    \lambda(\mu_k)  \int_{\Omega_{n_k} \setminus \Lambda } h_k g 
    \to \sum_{i=1}^N \langle T^{(i)}_{\mu_0}  h_0, g \rangle 
\end{align*}
for any $g \in H^1_0(\Lambda) $. Hence, setting
$h^{(i)}_0 = h_0|_{\Lambda_i} \in H^1_0 (\Lambda_i)$ we have
$h^{(i)}_0 \in \text{Ker}\, T^{(i)}_{\mu_0} = \{0\}$. 
We get a contradiction,  since 
\begin{align*}
  1
  &=  \| h_k \|^2_{H^1_0} = \langle T_{\mu_k}^{n_k}  h_k , h_k \rangle +
    \lambda(\mu_k)  \int_{\Omega_{n_k}}\text{e}^{U_{\mu_k} }
    |h_k|^2 \leq \| T_{\mu_k}^{n_k}  \,  h_k \|_{H^{-1}} +
    C \| h_k \|^2_{L^2}  \to 0. 
\end{align*}
As a consequence,  for all $n> n_W$, we have $\text{Ker} \ T_\mu^n =\{0\}$
and $T_\mu^n$ is invertible, 
since $T_\mu^n \lap^{-1} $ is a Fredholm operator in $H^{-1}$.
In particular, the inverse $S^n_\mu$ is uniformly bounded by $C_W^{-1}$
for $\mu \in W$.

Now, 
for any $n>n_W$ we consider 
the $C^1$-map $F^n_{\mu} : H^1_0 (\Omega_{n}) \to H^{-1} (\Omega_{n}) $ 
given by 
\begin{equation}
  \label{eq:fmun}
  \langle F^n_{\mu} (v), h \rangle =
  \int_ {\Omega_{n}} \nabla ( U_{\mu} + v) \cdot
  \nabla h - \lambda(\mu) \int_ {\Omega_{n}}  \text{e}^{U_{\mu} + v}h
    \qquad \forall v,h \in H^1_0 (\Omega_{n}).
\end{equation} 
Clearly  
$\de F^n_{\mu} (0) = T^n_{\mu}$  and 
\begin{align}
    \label{eq:fmun0}
  \langle F^n_{\mu} (0), h \rangle
  & = \int_ {\Omega_{n}} \nabla U_{\mu} \cdot
    \nabla h - \lambda(\mu) \int_ {\Omega_{n}}  \text{e}^{U_{\mu}}h
    =  \int_ {\pa (\Omega_{n} \setminus \Lambda) \cap \partial
    \Lambda} \frac{\partial}{\partial \nu} U_{\mu} \, h - \lambda(\mu)
    \int_ {\Omega_{n} \setminus \Lambda }h.
\end{align}
Since $U_{\mu} \in H^2(\Lambda) $ we have in particular
$\frac{\partial}{\partial \nu} U_{\mu}{|_{\partial \Lambda }} \in
L^2(\partial \Lambda) $ and since
$|\partial(\Omega_{n} \setminus \Lambda) \cap \partial \Lambda |
\to 0$ we get $\|F^n_{\mu}(0) \|_{H^{-1}}\to 0$ as $n\to +\infty$,
uniformly in $\mu\in W$.

We are interested in the zeros of $F_{\mu}^n $,  or equivalently, 
the fixed points of  the smooth map $G_{\mu}^n : H^1_0 (\Omega_n)
\to H^1_0 (\Omega_n)  $ 
with 
\begin{equation*}
G_{\mu}^n (v)= v -  S_{\mu}^n F_{\mu}^n(v).
\end{equation*} 
We have  $\| G_{\mu}^n (0) \|_{H^1_0}  \leq C_W^{-1}
\| F_{\mu}^n (0) \|_{H^{-1}} \to 0$,
as $n \to \infty$, and  $\de G_{\mu}^n(0) = 0;$
moreover,  by  the Moser-Trudinger inequality,   for any $v \in  H^1_0
(\Omega_{n}) $
\begin{align}
  \label{eq:dG}
	\|\de G_{\mu}^n(v) \| &= \|\de G_{\mu}^n(v) -  \de G_{\mu}^n(0)\| = 
\| S_{\mu}^n (\de F_{\mu}^n (v) - \de F_{\mu}^n (0))\| \\
& \leq C_W^{-1} \|\de F_{\mu}^n (v) - \de F_{\mu}^n (0)\|
 \leq C \| \text{e}^{U_{\mu} }( \text{e}^{v} -1) \|_{L^2} \\
 &\leq C \| v \text{e}^{ |v|}  \|_{L^2} \leq 
C \| v \|_{L^4} \| \text{e}^{|v|} \|_{L^4} \leq  C \|v\|_{H^1_0} 
\text{e}^{\frac{1}{4 \pi}\| v \|^2_{H^1_0}}.
\end{align} 	
for some constant $C>0$ (independent on $n$ and $\mu \in W$). 
Hence there exists $r_W >0$ such that $\|\de G_{\mu}^n(v)\| < 1$
in $B_{r} = \{ \|v\|_{H^1_0} \leq r \}$ for any $r\le r_W$.
Since $\|G_\mu^n(0)\|_{H^1}\to 0$ uniformly in $W$,  
for any $r\in (0,r_k)$ there exists $n_r$ independent
on $\mu\in W$, such that for any $n\ge n_r$ the map $G_\mu^n(v)$ is a strict
contraction in $B_r$.
Therefore there exists (unique)
$v_\mu^n \in B_{r} $, fixed point for $G_{\mu}^n$.
Namely $U^n_\mu = U_\mu + v_\mu^n$ solves 
\eqref{eq:lambda}, uniquely
in $B_{r}(U_\mu)$, and $v_\mu^n\to 0$ in $H^1_0$
as $n\to +\infty$, uniformly for $\mu \in W$.
\end{proof}

The following lemma allow us to re-parametrize in terms of the energy 
the branch of solutions 
$\{ U_\mu^n\}_{\mu \in W}$ on $\Omega_n$.

\begin{lemma}
  \label{lemma:inv}
  Let $W$ be a closed interval in $(0,1)\setminus\{1/2\}$,
  and $\{U_\mu\}_{\mu\in W}$ a branch of solutions for the MFE on $\Lambda$.
  Let 
  $E(\mu)$ be the energy of $U_\mu$. 
  Assume that 
  $E'(\mu)\neq 0$ for $\mu\in W$, and denote by $\mu(E)$ its inverse.
  Let $\{U^n_\mu\}_{\mu \in W}$ be the solutions of
  \eqref{eq:lambda} in $\Omega_n$
  given by Proposition \ref{prop:gluing}.
  Then, for any $n$ sufficiently large
  there exists a $C^1$ function $\mu^n(E)$ such
  that the energy in \eqref{eq:energiaU} associated to
  $U^n_{\mu^n(E)}$ is $E$.
  Moreover $\mu^n(E)$ converges uniformly with its derivative 
  to $\mu(E)$, the 
  entropy associated to $U^n_{\mu^n(E)}$ 
  is a $C^2$ function on $E$, and converges uniformly 
  with its derivatives to the entropy associated to $U_{\mu(E)}$. 
\end{lemma}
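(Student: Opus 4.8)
The plan is to transport along the glued branch all the thermodynamic quantities of Proposition~\ref{prop:gluing} and to reduce the statement to a single $C^1$-convergence claim in the parameter $\mu$. Set
\[
Z^n(\mu)=\int_{\Omega_n}\e^{U^n_\mu},\qquad \beta^n(\mu)=-\lambda(\mu)\,Z^n(\mu),
\]
and let $E^n(\mu)$ be the energy \eqref{eq:energiaU} of $U^n_\mu$. It suffices to show that, for $n$ large, $\mu\mapsto\bigl(E^n(\mu),\beta^n(\mu),Z^n(\mu)\bigr)$ is $C^1$ on $W$ and converges, together with its $\mu$-derivative, uniformly on $W$ to $\bigl(E(\mu),\beta(\mu),Z(\mu)\bigr)$. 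Granting this, since $E'(\mu)\neq0$ on the compact $W$ we have $|E'|\geq c>0$, hence $|(E^n)'|\geq c/2$ for $n$ large; thus $E^n$ is a $C^1$ diffeomorphism of $W$ onto its image, $\mu^n(E):=(E^n)^{-1}(E)$ is $C^1$ on each compact subinterval of $E(\operatorname{int}W)$ with $(\mu^n)'=1/(E^n)'\!\circ\mu^n$, and the inverse function theorem together with the uniform convergences yields $\mu^n\to\mu$ and $(\mu^n)'\to\mu'$ uniformly.

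The core is the $C^1$ dependence of $U^n_\mu=U_\mu+v^n_\mu$ on $\mu$, with bounds uniform in $n$. Here $v^n_\mu$ is the fixed point of the contraction $G^n_\mu(v)=v-S^n_\mu F^n_\mu(v)$. Since $\lambda(\mu)$ and the branch $U_\mu$ are smooth in $\mu$, the map $(\mu,v)\mapsto G^n_\mu(v)$ is $C^1$ and $\|\de G^n_\mu(v^n_\mu)\|<1$; the implicit function theorem gives $v^n_\mu\in C^1(W;H^1_0)$ with
\[
\partial_\mu v^n_\mu=\bigl(I-\de G^n_\mu(v^n_\mu)\bigr)^{-1}\partial_\mu G^n_\mu(v^n_\mu),
\]
where the inverse is bounded uniformly in $n,\mu$ by the contraction constant. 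As $F^n_\mu(v^n_\mu)=0$, the term from $\partial_\mu S^n_\mu$ drops and $\partial_\mu G^n_\mu(v^n_\mu)=-S^n_\mu(\partial_\mu F^n_\mu)(v^n_\mu)$. Differentiating \eqref{eq:fmun} in $\mu$ and using that the $\Lambda$-branch derivative $\partial_\mu U_\mu$ solves the equation linearized at $U_\mu$ (the weak form being governed precisely by the operators $T^{(i)}_\mu$ of Proposition~\ref{prop:gluing}, which therefore also establishes $C^1$ regularity of the $\Lambda$-branch), the defect $\partial_\mu F^n_\mu(v^n_\mu)$ splits into a part supported on $\Omega_n\setminus\Lambda$ and on its trace on $\partial\Lambda$, plus a part measuring $\e^{U_\mu+v^n_\mu}-\e^{U_\mu}$. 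Exactly as in the estimate \eqref{eq:fmun0}, the first is controlled by $|\Omega_n\setminus\Lambda|\to0$ and $|\partial(\Omega_n\setminus\Lambda)\cap\partial\Lambda|\to0$, the second by $\|v^n_\mu\|_{H^1_0}\to0$ through the Moser--Trudinger inequality; hence $\|\partial_\mu F^n_\mu(v^n_\mu)\|_{H^{-1}}\to0$ uniformly on $W$ and so $\partial_\mu v^n_\mu\to0$ in $H^1_0$ uniformly. Consequently $U^n_\mu\to U_\mu$ and $\partial_\mu U^n_\mu\to\partial_\mu U_\mu$ in $H^1_0$ uniformly on $W$ (the limit derivative, extended by zero, lies in $H^1_0$ since $U_\mu$ and $\partial_\mu U_\mu$ vanish on $\partial\Lambda$).

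With these two uniform convergences the first-paragraph claim is immediate: Moser--Trudinger lets one pass to the limit in $Z^n$ and in its $\mu$-derivative, so $Z^n\to Z$ and $\beta^n=-\lambda Z^n\to\beta$ in $C^1(W)$, while \eqref{eq:energiaU}, being continuous in $(U,Z)$ for the $H^1_0$ topology, gives $E^n\to E$ in $C^1(W)$. This proves the existence, regularity and $C^1$-convergence of $\mu^n(E)$. For the entropy $S^n(E)$ of $U^n_{\mu^n(E)}$, the generalization of Lemma~\ref{lemma:derivate}$(i)$ recalled in the text gives $\partial_E S^n(E)=\beta^n(\mu^n(E))$; since $\beta^n\in C^1(W)$ and $\mu^n\in C^1$, the right-hand side is $C^1$ in $E$, whence $S^n\in C^2$ with $\partial_E^2 S^n=(\beta^n)'(\mu^n(E))\,(\mu^n)'(E)$, and composing the uniform convergences $\beta^n\to\beta$, $(\beta^n)'\to\beta'$, $\mu^n\to\mu$, $(\mu^n)'\to\mu'$ yields $S^n\to S$ in $C^2$.

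I expect the main obstacle to be the uniform-in-$n$ control of $\partial_\mu v^n_\mu$: one must verify that differentiating the glued problem in $\mu$ produces a right-hand side whose $H^{-1}$ norm is again dominated, as in \eqref{eq:fmun0}, by the vanishing of $|\Omega_n\setminus\Lambda|$ and of the trace of $\partial_\mu U_\mu$ on $\partial\Lambda$, so that the contraction estimate of Proposition~\ref{prop:gluing} can be reused to transfer smallness from the solution to its $\mu$-derivative. Once this is in hand, the inversion $E^n\mapsto\mu^n$ and the passage to the entropy are routine, via the inverse function theorem and the identity $\partial_E S=\beta$.
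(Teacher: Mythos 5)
Your proposal is correct and follows essentially the same route as the paper: differentiating the glued fixed-point problem in $\mu$ via the implicit function theorem, showing $\|(\pa_\mu F^n_\mu)(v^n_\mu)\|_{H^{-1}}\to 0$ uniformly on $W$ through exactly the same decomposition (channel term, trace of $\pa_\nu(\pa_\mu U_\mu)$ on $\pa\Lambda$, and the $1-\e^{v^n_\mu}$ term handled by Moser--Trudinger), then inverting $E^n$ by the inverse function theorem and concluding for the entropy via $\pa_E S=\beta$. Your formulation through the contraction $G^n_\mu$ with $(I-\de G^n_\mu(v^n_\mu))^{-1}$ is algebraically equivalent to the paper's use of $\de F^n_\mu(v^n_\mu)^{-1}$, so the two proofs coincide in substance.
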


\begin{proof}
  We consider the $C^1$ map $(\mu,v) \mapsto F_\mu^n(v)$
  defined in \eqref{eq:fmun}.
  By Proposition \ref{prop:gluing}, $\mu \mapsto v_\mu^n = U^n_\mu - U_\mu$
  is the curve of solutions of $F_\mu^n(v)=0$ we can obtain via the
  implicit function theorem, then it is a $C^1$ function
  from $W$ to $H^1$.

  We now show that $\pa_\mu v_\mu^n\to 0$ in $H^1_0$, uniformly in $W$.
  By the implicit function theorem
  $\pa_\mu v_\mu^n = -\de F_\mu^n(v_\mu^n)^{-1}[(\pa_\mu F_\mu^n)(v_\mu^n) 
  ].
  $
  Since the operator $\de F_\mu^n(v_\mu^n)^{-1}:H^{-1}(\Omega_n)\to
  H^1_0(\Omega_n)$
  is uniformly bounded, we have to prove that
  $\|(\pa_\mu F_\mu^n)(v_\mu^n)\|_{H^{-1}}$ vanishes uniformly in $W$.
  For $h\in H^1_0(\Omega_n)$ we have
  \begin{equation*}
    \begin{aligned}
    &  \langle (\pa_\mu F^n_{\mu}) (v_\mu^n), h \rangle =
  \int_ {\Lambda} \nabla \pa_\mu U_{\mu} \cdot
  \nabla h - \lambda'(\mu) \int_ {\Omega_{n}}  \text{e}^{U_{\mu} + v_\mu^n}h
  -\lambda(\mu) \int_ {\Omega_{n}}  \text{e}^{U_{\mu} + v_\mu^n}\pa_\mu U_\mu h\\
  &=
  \int_{\Omega_n} \pa_\mu (\lambda(\mu)\e^{U_\mu})(1-\e^{v_\mu^n}) h
 - \lambda'(\mu) \int_{\Omega_n \setminus \Lambda} h +
\int_{\pa(\Omega_n \setminus \Lambda) \cap \pa\Lambda}
  \pa_\nu (\pa_\mu U_\mu) h.
\end{aligned}
\end{equation*}
Proceeding as  in the proof of Proposition \ref{prop:gluing}
(see the $L^2$ estimate of $\e^v -1$ in  \eqref{eq:dG},
and the proof that $\|F^n_\mu (0)\|_{H^{-1}}\to 0$
after \eqref{eq:fmun0})
we obtain that 
$\|(\pa_\mu F_\mu^n)(v_\mu^n)\|_{H^{-1}}$ vanishes uniformly in $W$.

As a consequence, $Z^n(\mu) = \int_{\Omega_n} \e^{U_\mu^n}$
is a regular function in $\mu$,
and $\pa_\mu Z^n(\mu)$ converges uniformly in $W$ to
$\pa_\mu Z(\mu)$.
We also note that
$$\pa_\mu  \int_{\Omega_n} |\grad U_\mu^n|^2
= 2\pa_\mu Z^n(\mu)$$
We denote with $E^n(\mu)$  the energy of  $U^n(\mu)$.
Since $\lambda(\mu)$ and $Z^n(\mu)$ do not vanish in $W$,
from
\eqref{eq:energiaU}
we get that $\pa_\mu E^n(\mu)$ converges uniformly in $W$
to $\pa_\mu E(\mu)\neq 0 $, hence for $n$ sufficiently large
also $E^n(\mu)$
is invertible, with inverse in $C^1$.
Recalling that
$\pa_E \ms S = \beta$,
we conclude the proof by noticing that
$\beta^n(E) =  - \lambda(\mu^n(E)) Z^n(E)$
is a $C^1$ function.
\end{proof}

In the hypothesis of lemma \ref{lemma:inv}
we can define $U(E) = U_{\mu(E)}$ and,
for $n$ sufficiently large, also 
$U^n(E) \coloneqq U^n_{\mu^n(E)}$.
The domain of definition of $U(E)$ and $U^n(E)$ are not
the same, but,
from Proposition \ref{prop:gluing}, we get the following result.

\begin{corollary}
  \label{corollario-E}
  In the hypothesis of lemma \ref{lemma:inv},
  for any closed interval $I$
  contained in the interior of $K=\{E(\mu):\, \mu \in W\}$,
  there exists $r_I>0$
  such that for any $r\in (0,r_I)$ there exists $n_r$ such
  that for any $n\ge n_r$, and for any $E\in I$,
  $U^n(E)$ is the unique solution
  of \eqref{eq:lambda} with energy $E$ in $B_r(U(E))$.
\end{corollary}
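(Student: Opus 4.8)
The plan is to upgrade the fixed-$\lambda$ uniqueness of Proposition \ref{prop:gluing} to the fixed-energy uniqueness asserted here, exploiting the fact that along the branch the parameter $\lambda$ can be recovered continuously from the solution, and that $\mu\mapsto\lambda(\mu)$ is injective away from $\mu=1/2$. For the existence half, Lemma \ref{lemma:inv} gives that the reparametrizations $\mu^n(E)$ and $\mu(E)=E^{-1}(E)$ are well defined with $\mu^n\to\mu$ uniformly on $I$, while Proposition \ref{prop:gluing} gives $v^n_\mu=U^n_\mu-U_\mu\to 0$ in $H^1_0$ uniformly on $W$; combining these and the continuity of $\mu\mapsto U_\mu$, one gets $U^n(E)=U^n_{\mu^n(E)}\to U_{\mu(E)}=U(E)$ uniformly for $E\in I$. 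Hence, for $n$ large, $U^n(E)\in B_r(U(E))$, and by construction it solves \eqref{eq:lambda} with $\lambda=\lambda(\mu^n(E))$ and has energy $E$.

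For uniqueness I would take any solution $\tilde U\in B_r(U(E))$ of \eqref{eq:lambda} in $\Omega_n$ with energy $E$, and first identify its parameter $\tilde\lambda$. Since $\tilde\lambda>0$ forces $\tilde U\ge 0$ by the maximum principle, testing $-\lap\tilde U=\tilde\lambda\,\e^{\tilde U}$ against $\tilde U$ yields $\tilde\lambda=\int_{\Omega_n}|\grad\tilde U|^2\big/\int_{\Omega_n}\tilde U\,\e^{\tilde U}$, a quantity depending continuously on $\tilde U\in H^1_0$ (the exponential integrals being controlled by the Moser--Trudinger inequality exactly as in the proof of Proposition \ref{prop:gluing}, the denominator staying bounded away from $0$ because $\tilde U$ is close to the nontrivial positive function $U(E)$). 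Therefore $\tilde\lambda$ is close to $\lambda(\mu(E))$ when $r$ is small. Now $W$ is a closed interval inside $(0,1)\setminus\{1/2\}$, so it lies in a single component and $\mu\mapsto\lambda(\mu)=\frac{8\pi}{a_1}\mu(1-\mu)$ is strictly monotone, hence a diffeomorphism, on $W$; moreover $\mu(I)$ is compactly contained in the interior of $W$, since $\mu=E^{-1}$ is a homeomorphism of $K$ onto $W$ and $I\subset\mathrm{int}(K)$. Consequently, for $r$ small enough (which is what fixes $r_I$) there is a unique $\tilde\mu\in W$ with $\lambda(\tilde\mu)=\tilde\lambda$, and $\tilde\mu$ lies close to $\mu(E)$, uniformly bounded away from $\partial W$.

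It then remains to feed this back into Proposition \ref{prop:gluing}. As $\tilde U$ is close to $U(E)=U_{\mu(E)}$ and, by continuity of $\mu\mapsto U_\mu$ together with $\tilde\mu\to\mu(E)$ as $r\to 0$, also $U_{\tilde\mu}$ is close to $U_{\mu(E)}$, we obtain $\tilde U\in B_{\rho}(U_{\tilde\mu})$ for a radius $\rho<r_W$ admissible in Proposition \ref{prop:gluing}; since $\tilde U$ solves \eqref{eq:lambda} with $\lambda=\lambda(\tilde\mu)$, the proposition forces $\tilde U=U^n_{\tilde\mu}$. Finally the energy constraint reads $E^n(\tilde\mu)=E$, and the injectivity of $E^n$ on $W$ from Lemma \ref{lemma:inv} gives $\tilde\mu=\mu^n(E)$, that is $\tilde U=U^n(E)$.

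I expect the only delicate point to be the bookkeeping of the three radii involved: the Corollary's $r$, the uniform gap $\mathrm{dist}(\mu(I),\partial W)>0$, and Proposition \ref{prop:gluing}'s $r_W$ together with its threshold $n_r$. One must choose $r_I$ so small that the recovered $\tilde\mu$ genuinely stays inside $W$ and $\tilde U$ genuinely lands in a ball $B_\rho(U_{\tilde\mu})$ where the proposition applies, and then pick $n_r$ large enough that all the uniform convergences above are in force. Once the compact containment $\mu(I)\subset\mathrm{int}(W)$ supplies a uniform gap, every estimate is a routine uniform-in-$n$ consequence of the convergences of $v^n_\mu$, $\mu^n$, and $E^n$ already established in Proposition \ref{prop:gluing} and Lemma \ref{lemma:inv}.
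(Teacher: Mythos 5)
Your proposal is correct and is essentially the argument the paper intends: the paper states Corollary \ref{corollario-E} without an explicit proof, presenting it as a direct consequence of Proposition \ref{prop:gluing} and Lemma \ref{lemma:inv}, and your deduction --- recovering $\tilde\lambda$ from a nearby solution by testing \eqref{eq:lambda} against $\tilde U$ (with Moser--Trudinger giving uniform continuity of the quotient), locating the corresponding $\tilde\mu\in W$ via the strict monotonicity of $\lambda(\mu)$ on $W\subset(0,1)\setminus\{\tfrac12\}$ together with the compact containment $\mu(I)\subset\mathrm{int}(W)$, invoking the fixed-$\lambda$ uniqueness of Proposition \ref{prop:gluing}, and finally identifying $\tilde\mu=\mu^n(E)$ through the injectivity of $E^n$ from Lemma \ref{lemma:inv} --- is exactly the bookkeeping needed to upgrade fixed-$\lambda$ uniqueness to the fixed-energy uniqueness asserted in the corollary. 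Your handling of the quantifiers (fix $r_I$ small enough for the parameter-recovery and uniqueness steps, then choose $n_r$ so that the uniform convergences hold and $U^n(E)\in B_r(U(E))$) also matches the order in which $r$ and $n_r$ appear in the statement.
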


\vskip.3cm

We now show that there are bounded connected sets
for which the entropy has a first order phase transition.
We use the perturbative construction in Proposition
\ref{prop:gluing}, 
applied to the case of disks of different area, in
the hypothesis of Theorem \ref{teo:le}, 
for which the transition occurs for some $E_*$.
The construction can be adapted also to the case of Theorem \ref{teo:he}.

\begin{theorem}
  \label{thm:transizione}
  Consider $\Lambda$
  union of $N\ge 3$ disjoint disks with area $1=a_1> a_2 \ge \dots a_N$,
  with $a_i$ sufficiently close to $1$ as in Thm. \ref{teo:le}.
  If $\Omega_{n}\searrow \Lambda$,
  for $n$ sufficiently large
  there exists $E_*^n$ such that for $E$ in a neighborhood of $E_*^n$, 
  the MPV  in $\Omega_n$ has a unique solution
  if $E\neq E_*^n$, 
  has two solutions for $E=E_*^n$
  and 
  $\ms S_{\Omega_n}(E)$
  has a first order phase transition in $E=E_*^n$.
\end{theorem}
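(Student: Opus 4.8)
The plan is to transport the first order phase transition of the disconnected set $\Lambda$, established in Theorem \ref{teo:le}, onto the dumbbell domains $\Omega_n$ by perturbatively continuing the two competing sections of the $0$-branch and then identifying the microcanonical maximizer through the global convergence of Theorem \ref{thm:convergence}. Recall from the proof of Theorem \ref{teo:le} that near the transition energy $E_*$ the maximizer on $\Lambda$ sits, for $E<E_*$, on the section with temperature $\beta_d$ and entropy $S_d$, and for $E>E_*$ on the section with $\beta_r$ and $S_r$ (the $S_l$ section being strictly dominated), the crossing being transversal: $(S_d-S_r)(E_*)=0$ and $\pa_E(S_d-S_r)(E_*)=\beta_d(E_*)-\beta_r(E_*)<0$.

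First I would fix a small closed interval $I=[E_*-\delta,E_*+\delta]$ compactly contained in the common energy range of the $d$- and $r$-sections (so that $I$ stays away from the turning points $E_0$ and $\bar E_c$). In the $\mu$-parametrization these two sections correspond to two disjoint compact intervals $W_d,W_r\subset(0,1)\setminus\{1/2\}$, lying on opposite sides of the degenerate value $1/2$, on which $E'(\mu)\neq 0$. Applying Proposition \ref{prop:gluing}, Lemma \ref{lemma:inv} and Corollary \ref{corollario-E} to each of them produces, for $n$ large, two $C^2$ branches $E\mapsto U^n_d(E)$ and $E\mapsto U^n_r(E)$ of solutions of \eqref{eq:lambda} on $\Omega_n$ with energy $E\in I$, whose entropies $S^n_d,S^n_r$ and temperatures $\beta^n_d=\pa_E S^n_d$, $\beta^n_r=\pa_E S^n_r$ converge uniformly on $I$, together with one derivative, to $S_d,S_r$ and $\beta_d,\beta_r$. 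Since transversality of a sign change is an open condition, the uniform $C^1$-convergence yields a unique $E_*^n\to E_*$ with $S^n_d(E_*^n)=S^n_r(E_*^n)$, at which $\beta^n_d(E_*^n)<\beta^n_r(E_*^n)$ and $S^n_d-S^n_r$ changes sign; thus $\max(S^n_d,S^n_r)$ has a corner at $E_*^n$ with a positive jump of the slope. Moreover the densities $\rho^n_d(E),\rho^n_r(E)=\e^{U^n_{d,r}(E)}/Z$ are admissible in $P_{\Omega_n,E}$ (see \eqref{eq:energiaU}), so already $\ms S_{\Omega_n}(E)\ge\max(S^n_d(E),S^n_r(E))$ on $I$.

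The \emph{main obstacle} is the reverse inequality: proving that near $E_*^n$ the microcanonical maximizer on $\Omega_n$ really lies on one of these two branches and that no other solution of the MFE competes. I would argue by contradiction and compactness. If, along a subsequence, there were maximizers $\rho_n$ of energy $E_n\in I$ different from both $\rho^n_d(E_n)$ and $\rho^n_r(E_n)$, then passing to a further subsequence with $E_n\to E_0\in I$, Theorem \ref{thm:convergence} gives $\Psi_n\to\Psi$ in $H^1$, $\beta_n\to\beta$, $Z_n\to Z$, hence $U_n=-\beta_n\Psi_n\to U=-\beta\Psi$, with $\rho$ a maximizer of the MVP on $\Lambda$ of energy $E_0$. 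By Theorem \ref{teo:0branch} this limit lies on the $0$-branch, and by the analysis in Theorem \ref{teo:le}, for $E_0\in I$ it coincides with $U_d(E_0)$ when $E_0\le E_*$ and with $U_r(E_0)$ when $E_0\ge E_*$. Then $U_n$ eventually enters the ball $B_r(U(E_n))$ on which Corollary \ref{corollario-E} grants uniqueness of the energy-$E_n$ solution of \eqref{eq:lambda}, forcing $U_n=U^n_d(E_n)$ or $U_n=U^n_r(E_n)$, a contradiction.

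Consequently $\ms S_{\Omega_n}(E)=\max(S^n_d(E),S^n_r(E))$ on $I$ and every maximizer equals $\rho^n_d(E)$ or $\rho^n_r(E)$. Unwinding this description yields the statement: for $E\in I$ with $E\neq E_*^n$ the maximum is attained by exactly one of the two branches, so the MVP in $\Omega_n$ has a unique solution; at $E=E_*^n$ both distinct densities $\rho^n_d(E_*^n)\neq\rho^n_r(E_*^n)$ attain the maximum, giving two solutions; and the left and right derivatives of $\ms S_{\Omega_n}$ at $E_*^n$ are $\beta^n_d(E_*^n)<\beta^n_r(E_*^n)$, which is precisely the asserted first order phase transition, mirroring Theorem \ref{teo:le}. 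The same scheme, starting from the branches of Theorem \ref{teo:he} instead, would transfer the $N$ high-energy transitions as well.
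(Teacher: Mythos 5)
Your proposal is correct and follows essentially the same route as the paper's proof: you continue the two competing sections of the $0$-branch onto $\Omega_n$ via Proposition \ref{prop:gluing}, Lemma \ref{lemma:inv} and Corollary \ref{corollario-E}, then use a compactness/contradiction argument based on Theorem \ref{thm:convergence} (together with Theorems \ref{teo:0branch} and \ref{teo:le} identifying the limit) to confine every microcanonical maximizer to these two branches, and finally locate $E_*^n$ from the uniform $C^1$ convergence of the entropies at the transversal crossing. The only difference is presentational: you make explicit the transversality condition and the lower bound $\ms S_{\Omega_n}(E)\ge\max(S^n_d,S^n_r)$, which the paper leaves implicit.
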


\begin{proof}
  We fix a closed interval $I\subset (E_0,{\bar E}_c)$,
  where $E_0$ and ${\bar E}_c$ are defined in the proof of
  Theorem \ref{teo:le}
  (see the graph on the right in Fig. \ref{fig:locale}).
  We also assume that $I$ contains  
  $E_*$ in its interior.
  
  We indicate with
  $\{\rho^\sigma(E)\}_{E\in I}$, $\sigma=0,1$,
  the solutions of the MVP on the $0-$branch,
  with $E<E_*$ and $E>E_*$ respectively.
  and we indicate with $\Psi^\sigma(E)$,
  $\beta^\sigma(E)$, $Z^\sigma(E)$, $\lambda^\sigma(E)$,
  $U^\sigma(E)$ the related functions and parameters  
  in the MFE \eqref{eq:mfe} and \eqref{eq:lambda}.
  
  For any $E\in I$,
  let  $\tilde {\Cal M_n}(E)\subset P_{\Omega_,E}$ be 
  the set of the solutions of the MVP in $\Omega_n$ with energy $E$,
  end let $\Cal M_n(E)\subset H^1_0(\Omega_n)$ be the corresponding subset of 
  the functions $U$.
  We claim that 
  for any $r >0$ there exists $n_{I,r} \in \N$
  such that for any $n \geq n_{I,r}$ and any $E \in I$
  \begin{equation}
    \label{eq:intorni}
    \Cal M_n(E) \subset   B_{r}(U^0(E)) \cup B_{r}(U^1(E)).
  \end{equation}
  If not, there exist sequences
  $E_k\in I$ and $\rho_k\in P_{\Omega_k,E_k}$,
  whose corresponding $U_k$ is not in
  $B_{r}(U^0(E_k)) \cup B_{r}(U^1(E_k))$.
  There exists a subsequence such that $E_k\to \bar E \in I$,
  and, 
  by Theorem \ref{thm:convergence}, $\rho_k \to \rho$ which
  solves the MVP in $\Lambda$ with energy $\bar E\in I$.
  Since the solution of the MVP 
  with energy $\bar E$ corresponds only to   
  $U=U^0(\bar E)$ or $U=U^1(\bar E)$, we get a contradiction.
  
  Now we  prove that for sufficiently large $n$, the functions
  in $\Cal M_n(E)$ are the solutions of MFE given by 
  Proposition \ref{prop:gluing}.
  Namely, choosing $r<r_I$, for $n\ge n_{r}$ as in Corollary \ref{corollario-E},
  for any $E$ in $I$ there exists a unique solution
  $U^{n,\sigma}(E)$   of the MFE on $\Omega_n$
  with energy $E$ in $B_r(U^\sigma(E))$.
  Therefore
  $$\Cal M_n(E)\subset \{U^{n,0}(E), U^{n,1}(E)\}.$$
  The existence of the (unique) value $E_*^n$ for which
  MVP has two solutions easily follow form the fact that,
  as proved in Lemma \ref{lemma:inv},
  the entropy associated to $U^{n,\sigma}(E)$ is a $C^2$ function in
  $E$ and converges uniformly with its derivative to
  the entropy of the $U^{\sigma}(E)$.
\end{proof}

\appendix
\section{Deformed circles}\label{appendice:A}

We prove eq.s \eqref{eq:ezeta}, recalling that
that the function $\mf e(\mu)$ is invariant
for scaling, while $\mf z(\mu)$ is linear in the area.
So we have to consider  the deformation $\Lambda_\eta$
of the  circle $D$ of radius $1$, 
with the conformal map
$z \to z + \eps z^3$, with $\eps^2 =\eta$.
The determinant of the Jacobian of this map is
$J_\eps=1+a\eps + b \eps^2$, with $a = 6(x^2-y^2)$,
and $b= 9(x^2+y^2)^2$.

In order to find $\mf e(\mu)$ and $\mf z(\mu)$,
we first calculate the free energy,
noting that, for $\beta<0$, 
$$\ms F(\beta) = \sup_{\rho} \Cal G(\beta, \rho),$$
where
$$
\Cal G(\beta, \rho)  = 
-\Cal E(\rho)-\frac{1}{\beta}\log
\Cal Z(\beta,\rho), \ \ \Cal Z(\beta,\rho) =
\int_{\Lambda_\eta} \e^{-\beta \Psi},
$$
(see Section 8 of \cite{CLMP2}).
In fact the Euler-Lagrange equation for the 
variational principle is the MFE equation,
and, if $\rho$ is a solution of the MFE equation then
$$\begin{aligned}
  \Cal G(\beta,\rho) &=  - \Cal E(\rho) - \frac 1\beta
\log \Cal Z(\beta,\rho) =
-\Cal E(\rho) -\frac 1\beta (\Cal S(\rho) - 2\beta \Cal E(\rho))
= \Cal E(\rho) -\frac 1\beta
\Cal S(\rho) \\
&= \Cal F(\beta,\rho)=\ms F(\beta).\end{aligned}$$
The advantage of this formulation is that
the expression of the energy is invariant for conformal transformation,
then we maximize $\Cal G$ for $\rho$ supported on $\Lambda_\eta$
if we maximize in $\Phi$, defined on $D$, the functional 
$$\Cal I_\eps(\beta,\Phi) = -\frac 12 \int_D |\grad \Phi|^2
-\frac 1\beta \log \int_D J_\eps \e^{-\beta \Phi}.$$
The corresponding
Euler-Lagrange equation is
$$-\lap \Phi_\eps = \frac {J_\eps}{Z_\eps} \e^{-\beta \Phi_\eps},\ \
Z_\eps = \int_D J_\eps \e^{-\beta \Phi_\eps},$$
with homogeneous Dirichlet boundary condition on $\pa D$.
For $\eps=0$ we clearly get the solution of the MFE on $D$,
that 
we call $\Phi$, with associated density $\rho$, and normalization
$Z$. 
We denote
$$\Phi' = \left. \opde{\eps}\right|_{\eps=0} \Phi_\eps,
\ \ \Phi'' = \left.
  \dfrac {\mathrm{d}^2 \phantom{\eps}}{\mathrm{d}\eps^2}
  \right|_{\eps=0} \Phi_\eps.
$$
The equation for $\Phi'$ is 
\begin{equation}
  \label{eq:phiprimo}
  -\lap \Phi' = \rho (a-\beta \Phi') -\rho C,
\end{equation}
where
$$C = \int_D \rho (a-\beta \Phi') = -\beta \int_D \rho \Phi',$$
since $\rho$ is radial and $a=6r^2\cos (2\vartheta)$ in polar coordinates.

Then, by eq. \eqref{eq:phiprimo} we have
$$
\left. \opde{\eps}\right|_{\eps=0} \Cal I_\eps(\beta,\Psi_\eps)
= \int_D \left( \lap \Phi \, \Phi' + \rho \Phi' + \frac 1\beta \rho a
\right)=0.$$
%
The second derivative in $\eps=0$ if given by
$$\begin{aligned}
\Cal I'' &\coloneqq  \left. \dfrac{ \de^2 \phantom{\eps^2}}{
    \de \eps^2}
\right|_{\eps=0} \Cal I_\eps(\beta,\Psi_\eps) \\
&= -\int_D |\grad \Phi'|^2 + \int_D \lap \Phi \Phi''
- \frac 1\beta
\int_D \rho (2b - \beta \Phi'' - \beta a \Phi' -\beta \Phi'(a-\beta \Phi'))
+\frac 1\beta C^2.
\end{aligned}$$
The terms in $\Phi''$ vanishes, since $\Phi$ solves the MFE.
Moreover
$$-\int_D|\grad \Phi'|^2 = \int_D \lap \Phi' \Phi' =
-\int_D \rho \Phi' (a-\beta \Phi') + C \int_D \rho \Phi',$$
and $C  = -\beta \int_D \rho \Phi'$. Therefore
\begin{equation}
  \label{eq:G2}
  \Cal I'' = -\frac 1\beta \int_D \rho (2b - \beta a\Phi') +
  \gamma \int_D \rho \Phi' + 
  \frac 1\beta \gamma^2 = -\frac 1\beta \int_D \rho
  (2b - \beta a \Phi').
\end{equation}

In order to compute $\Cal I''$ we have do find $\Phi'$.
By noticing that $a$ is harmonic, 
we can solve the equation for $\Phi'$ by searching for a solution
of the form
$$\Phi' = \frac a\beta  + \xi(r)  \cos (2\vartheta),$$
where $(r,\vartheta)$ are  polar coordinates, 
with the boundary condition $\xi(1) = -\frac 6\beta=
\frac 3{4\pi \mu}$.
We 
set $p=\mu/(1-\mu)$, and we note that
$$\rho = \frac 1{\pi (1-\mu)} \frac 1{(1+pr^2)^2}.$$
We search for $\xi(r)=c \alpha (pr^2) $.
The equation for  $\alpha$ in the variable 
$s=pr^2$   
is 
$$(1+s)^2 ( (s \pa_s)^2 \alpha - \alpha) = -2 s \alpha.$$
There exists only one solution,  bounded in $0$ and with $\alpha(1)=1$,
given by
$$\alpha(s) = \frac s{1+s} + \frac s2.$$
Then
$$\xi(r) = \frac 3{4\pi \mu \alpha(p)} \alpha(pr^2).$$
Computing $\Phi'$ and inserting its expression in \eqref{eq:G2}
we finally get
$$\ms F_{\Lambda_\eta}(\beta) = \ms F (\beta) - \eta \frac 1\beta g(\beta) +
o(\eta)$$
where
$\ms F(\beta)$ is the free energy for $D$, and the correction
$g$ is given by 
$$ g(\beta)  = 6\frac {1-\mu}{1-\frac 23 \mu}$$
(we scaled $\eta$ with the constant 6, for simplicity of notation).

Now we have to find the relation between $E$, $\beta$, $S$
We first define 
$f_\eta (\beta) = \beta  \ms F_\eta (\beta) = f(\beta) - \eta g(\beta)
+o(\eta)$.
The entropy is given by
$S_\eta (E) = \inf_{\beta} (\beta E - f_\eta(\beta))$,
where the infimum is attained in 
$E_\eta(\beta) = f'_\eta(\beta) = f'(\beta) - \eta g'(\beta)+o(\eta)$,
from which we get the expression of $\mf e_\eta(\mu)$
in \eqref{eq:ezeta}, where $\tau=\frac 1{8\pi} \pa_\beta g = -\pa_\mu g$.

The entropy in function of $\beta$ is  
$S_\eta (\beta) = S(\beta) + \eta (g - \beta \pa_\beta g)+o(\eta)$
where $S(\beta)$ is the value for $\eta = 0$, i.e. the case
of the disk.
Denoting by $Z(\beta)$ the normalization factor for the disk,
we have 
$$\log Z_\eta(\beta) = S_\eta(\beta) - 2 \beta E_\eta(\beta)
= \log Z + \eta ( g + \beta
\pa_\beta g)+o(\eta).$$
Since $\beta \pa_\beta g = \mu \pa_\mu g$,
we obtain the expression of 
$z_{1,\eta}$ in \eqref{eq:ezeta},  where  $\zeta(\mu)  = g + \mu \pa_\mu g$.

\end{document}